\theoremstyle{definition}
\newtheorem{fact}{Fact}[section]
\newtheorem{openproblem}{Open Problem}
\newcommand{\ket}[1]{|#1\rangle}
\newcommand{\cent}[0]{\mbox{\textcent}}
\newcommand{\dollar}[0]{\$}
\title{Languages recognized by nondeterministic quantum finite automata\thanks{This work was partially 
supported by the Scientific and Technological Research Council of Turkey (T\"{U}B\.ITAK) with grant 108142
and the Bo\~{g}azi\c{c}i University Research Fund with grant 08A102.}}
\author{Abuzer Yakary{\i}lmaz\ \and A.C. Cem Say }
\institute{Bo\u{g}azi\c{c}i University, Department of Computer Engineering,\\ Bebek 34342 \.{I}stanbul, Turkey \\
\email{{abuzer,say}@boun.edu.tr}
 \\~~\\
February 12, 2010
}
\begin{document}

\newlength{\twidth}
\maketitle
\pagenumbering{arabic}

\begin{abstract} \label{abstract:Abstract}

The nondeterministic quantum finite automaton (NQFA) is the only known
case where a one-way quantum finite automaton (QFA) model has been
shown to be strictly superior in terms of language recognition power
to its probabilistic counterpart. We give a characterization of the
class of languages recognized by NQFA's, demonstrating that it is
equal to the class of exclusive stochastic languages. We also
characterize the class of languages that are recognized necessarily by
two-sided error by QFA's. It is shown that these classes remain the
same when the QFA's used in their definitions are replaced by several
different model variants that have appeared in the literature. We
prove several closure properties of the related classes.
The ramifications of these results about classical and quantum
sublogarithmic space complexity classes are examined.

\end{abstract}

\section{Introduction} \label{section:Introduction}
An interesting feature of both probabilistic and quantum computational
models is that in some cases, the set of problems that can be solved
gets larger when the automaton in question is allowed to make more
error in its decisions, whereas in some other cases, such a relaxation
does not increase the computational power at all. When one-way
probabilistic finite automata (PFA's) are required to make no error in
their decisions, they recognize exactly the class of regular
languages. When they are allowed to make \textit{bounded error}, that is, to
give the correct response for each input with probability at least $ \frac{1}{2} + \delta $,
for a fixed $ \delta >0 $, the class of languages
that are recognized remains the same. The computational power of PFA's
is seen to increase only when we allow two-sided \textit{unbounded error},
where the only requirement is that all members of the recognized
language are accepted
with probability greater than the acceptance probability of any nonmember.

There are several alternative models of quantum finite automata
(QFA's), and differences  (e.g. in features regarding the form and
number of measurements that can be performed on the machine, whether
an output tape is included or not, etc.) in their definitions, which
would not affect the power of classical automata, yield QFA variants
of differing computational power. In the zero-error and bounded-error
cases, some QFA variants \cite{MC00,KW97,Na99} are
strictly inferior to the corresponding PFA's from the point of view of
language recognition power, whereas the most general models
\cite{Pa00,Ci01,BMP03,Hi07} are
equivalent to their probabilistic counterparts in those settings. In
the unbounded error case, the languages
recognized with cutpoint by the weakest QFA model \cite{MC00} form a
proper subclass
\cite{BC01B}
of the corresponding classical class (the stochastic languages),
whereas it was discovered recently \cite{YS09A,YS09D} that more generalized QFA
variants, including the popular Kondacs-Watrous model, are equivalent
to PFA's in this case as well.

With regard to state complexity, sufficiently general QFA models can
simulate all zero-error and bounded-error PFA's with small overhead,
and some regular languages
have bounded-error
QFA's that are exponentially smaller than the corresponding PFA
\cite{AF98}. In the two-sided unbounded error setting, quantum and
probabilistic
machines can simulate each other with only a polynomial overhead in
the number of states \cite{YS09D}.

We study the computational power of QFA's in the
one-sided unbounded error setting, where one of the two responses that
the machine can output about the membership of the input string in the
recognized language is correct with certainty, and the other response
has a nonzero probability of being correct. Since the error bound can
be improved by repeating the computation, an examination of languages
recognizable
in this setting is significant for understanding the power of
generalizations of the
underlying model to include, say, a two-way tape head. Just like their
classical counterparts,
QFA's that recognize their languages with cutpoint 0, (that is, with positive
one-sided error,) are also known as \textit{nondeterministic} machines.
It is well known that classical nondeterministic finite automata
recognize precisely the regular languages.
In notable previous work on nondeterministic quantum finite automata (NQFA's),
Bertoni and Carpentieri have shown \cite{BC01B} that the class of
languages recognized by
NQFA's of the Moore-Crutchfield type does not contain any nonempty
finite languages,
but does contain the nonregular language $ L_{neq}=\{w\in\{a,b\}^{*}
\mid |w|_{a} \neq |w|_{b}\} $, where $ |w|_{\sigma} $
denotes the number of occurrences of the symbol $ \sigma $  in the string $ w $.
Nakanishi et al. \cite{NIHK02} considered the somewhat more powerful
Kondacs-Watrous model of QFA's,
and proved that NQFA's of this type can recognize all regular
languages, establishing their
strict superiority over their classical counterparts.

In this paper, we give a full characterization of the class of languages
recognized by all NQFA variants that are at least as general as the
Kondacs-Watrous type, demonstrating that it is
equal to the class of exclusive stochastic languages. This lies
properly between the classes of
languages recognized with zero error and two-sided unbounded error by
QFA's\footnote{To our knowledge, this is the only case
where these three classes have been shown to be distinct for any
automaton model, be it quantum or classical.}.  
Every regular language has a NQFA with at most linearly more states than the
corresponding classical nondeterministic finite automaton (NFA), and
there exist infinite families of regular languages which can be
recognized by just tuning the transition amplitudes of a NQFA with a
constant number of states, whereas the sizes of the corresponding
NFA's grow without bound. We also prove several new closure properties
of the related classes, and examine what these results imply about
the comparative power of probabilistic vs. quantum Turing machines
with small space bounds.

The rest of this paper is structured as follows: Section 2 contains
the relevant definitions and previously known facts. In Section 3, we give a
characterization of the class of languages recognized by
Kondacs-Watrous NQFA's, and discuss the superiority of
several NQFA variants over their classical counterparts
in terms of language recognition and succinctness. An examination of
the relationships among languages which can be
recognized by QFA's with one-sided error and those that require two-sided error
is presented in Section 4. Section 5 contains several proofs of
closure properties for the classes of languages recognized with
one-sided error. Section 6 is a conclusion,
where we examine the consequences of the NQFA results for classical
and quantum sublogarithmic
space complexity classes.

\section{Preliminaries} \label{section:Preliminaries}
\subsection{Automata}
In the following, $ \Sigma $ denotes the input alphabet, not
containing the end-markers $ \cent $ and $ \dollar $, and
$ \Gamma  $ is the tape alphabet, such that $ \Gamma = \Sigma \cup
\{\cent,\dollar\} $.
\\
\begin{definition}
\label{definition:1pfa}
A \textit{(1-way) probabilistic finite automaton} (PFA) with $ n \in
\mathbb{Z}^{+} $ states is a 4-tuple
$ \mathcal{P}=(S,\Sigma,\{\mathsf{A}_{\sigma \in \Gamma} \},F) $, where
\begin{enumerate}
    \item $ S  = \{s_1, \cdots, s_n \} $ is the set of states, and $
s_{1} $ is the start state,
    \item $ \mathsf{A}_{\sigma} $ is the $ n \times n $ real-valued
stochastic transition matrix
    for symbol $ \sigma $, that is, $ \mathsf{A}_{\sigma}(i,j) $ is
the value of the transition
    probability from state $ s_{i} $ to state $ s_{j} $ when reading
symbol $ \sigma $,
    \item  $F \subseteq S$ is the set of accepting states.
\end{enumerate}
\end{definition}

The probability distribution of $ \mathcal{P} $'s states at any point
during the processing of the input string can be traced using an
$n$-element row vector. For an input string $ w \in \Sigma^{*} $, $
\mathsf{w}=\cent w \dollar $,
$ \mathsf{v}_{0}  =  (1,0,\cdots,0)_{1 \times n } $ denotes the
initial state vector.
The effect of reading the $ i $th tape symbol can be calculated by
multiplying the vector $ \mathsf{v}_{i-1} $  by
the matrix $ \mathsf{A}_{\mathsf{w}_{i}} $,  yielding $ \mathsf{v}_{i} $.
$ \mathsf{v}_{| \mathsf{w} |} = \mathsf{v}_{0}
\mathsf{A}_{\mathsf{w}_{1}} \cdots
\mathsf{A}_{\mathsf{w}_{| \mathsf{w} |}} $ denotes the final state vector.
The acceptance probability of $ w $ by $ \mathcal{P} $ is
\begin{equation}
    \label{pfa:f_P_w}
    f_{\mathcal{P}}(w) = \sum_{s_{i} \in F} \mathsf{v}_{|\mathsf{w}|}(i),
\end{equation}
where $ \mathsf{v}_{|\mathsf{w}|}(i) $ denotes the $ i $th entry
of $ \mathsf{v}_{|\mathsf{w}|} $.
\\
\begin{definition}
\label{definition:gpfa}
A \textit{generalized probabilistic finite automaton} (GPFA) with $ n
\in \mathbb{Z}^{+} $ states is a 5-tuple
$ \mathcal{G}=(S,\Sigma,\{\mathsf{A}_{\sigma \in
\Sigma}\},\mathsf{v}_{0},\mathsf{f}) $, where
\begin{enumerate}
    \item $ S=\{s_{1},\cdots,s_{n}\} $ is the set of states,
    \item $ \mathsf{A}_{\sigma} $ is the $ n \times n $ real-valued
transition matrix for symbol $ \sigma $,
    that is, $\mathsf{A}_{\sigma}(i, j)$ is the (possibly negative)
``weight" of the transition from state
    $ s_i $ to state $ s_j $ when reading symbol $ \sigma$,
    \item $ \mathsf{v}_{0} $ is the real-valued initial $ 1 \times
n $ vector, and,
    \item  $ \mathsf{f} $ is the real-valued final $ n \times 1 $ vector.
\end{enumerate}
\end{definition}

A GPFA $ \mathcal{G} $ is associated by a function $f_{\mathcal{G}} :
\Sigma^{*} \rightarrow \mathbb{R} $,
in the following way: For an input string $ w \in \Sigma^{*} $,
\begin{eqnarray}
    \label{equation:gpfa:f_G_w}
    f_{\mathcal{G}}(w) & = &
\mathsf{v_{0}}\mathsf{A}_{w_{1}}\cdots\mathsf{A}_{w_{|w|}} \mathsf{f}.
\end{eqnarray}
\\
\begin{definition}
\label{definition:kwqfa}
A \textit{(1-way) Kondacs-Watrous quantum finite automaton}
(KWQFA) \cite{KW97} with $ n \in \mathbb{Z}^{+} $ states is a 5-tuple
$ \mathcal{M}=(Q, \Sigma, \{\mathsf{U}_{\sigma \in \Gamma} \},Q_{acc},
Q_{rej} ) $, where
\begin{enumerate}
    \item $ Q $ = $ \{q_1, \cdots q_n \} $ is the set of states,
and $ q_{1} $ is the initial state,
    \item $ \mathsf{U}_{\sigma} $ is the $ n \times n $
complex-valued unitary transition matrix for symbol
             $ \sigma $, that is, $ \mathsf{U}_{\sigma}(j,i) $
             is the amplitude of the transition from $ q_{i} $ to $
q_{j} $ when
reading the symbol $ \sigma $,
    \item $ Q_{acc} $ and $ Q_{rej} $, disjoint subsets of $ Q $,
are the sets of accepting and rejecting states,
    and $ Q_{non} = Q \setminus (Q_{acc} \cup Q_{rej}) $ is the set
of non-halting states.
\end{enumerate}
\end{definition}

The amplitude distribution of the states of a quantum automaton is
represented by an $n$-element column vector.
$ \ket{\mathsf{u}­_{0}} $, the initial state vector, equals $
(1,0,\cdots,0)^{\mathtt{T}}_{1 \times n } $.
Note the difference with probabilistic automata.

For a given input string $ w \in \Sigma^{*} $, $ \mathcal{M} $ scans the tape,
containing $ \mathsf{w}=\cent w \dollar $, from the left to the right.
During the processing of each symbol, the
machine undergoes two operations: First, its state vector evolves
according to the unitary
transformation associated with the scanned symbol, that is,
\[ \ket{u_{i}} = \mathsf{U}_{w_{i}} \ket{u_{i-1}}. \]
Then, the machine is observed to
see whether it has accepted, rejected, or not
halted yet. At this point, each accepting state with amplitude $ \alpha $ adds
$ |\alpha |^{2} $
to the overall acceptance probability $ f_{\mathcal{M}}(w) $ of the
input\footnote{This is the behavior
allowed by the general KWQFA definition.}. In
the particular KWQFA's that will be described
in this paper, intermediate observations can yield the result ``reject", but
the accepting states can be entered only at the end of the
computation, after scanning
the right end-marker $ \dollar $.
Halting states ``drop out" of the state vector $ \ket{u}_{i} $, their
amplitudes being
replaced with zeros, and the head moves on to the next symbol.

As we have defined them, PFA's process all of the input string before
deciding on acceptance or rejection,
whereas KWQFA's can halt before reaching the end of the input. (The
QFA variant that precisely corresponds to
Definition \ref{definition:1pfa} is the \textit{Moore-Crutchfield QFA}
(MCQFA) \cite{MC00}.)
This difference should not distract the reader, since it is easy to
show that the classes of
languages recognized by PFA's, both with general cutpoint, and with
cutpoint 0, (to be defined in the next
subsection,) do not change when the model is modified to give it this
additional capability\footnote{Note
that, when comparing two QFA variants with each other, this kind of
difference is very important,
since it usually affects the computational power of the models.}.
This is true for all PFA variants that may be obtained by appropriately
reconfiguring Definition \ref{definition:1pfa} to correspond to the various QFA
models that are cited in this paper. The only crucial distinction
between Definitions \ref{definition:1pfa} and \ref{definition:kwqfa}
is the one between classical and quantum.

\subsection{Languages}
~~

\begin{definition}
     An automaton $ \mathcal{A} $ defined over alphabet $ \Sigma $ divides
$ \Sigma^{*} $ into three disjoint subsets
     with cutpoint $ \lambda \in \mathbb{R} $:
     \begin{enumerate}
             \item $ \mathbb{L}(\mathcal{A},<\lambda) = \{w \in
\Sigma^{*} \mid
f_{\mathcal{A}}(w)<\lambda \} $,
             \item $ \mathbb{L}(\mathcal{A},=\lambda) = \{w \in
\Sigma^{*} \mid
f_{\mathcal{A}}(w)=\lambda \} $,
             \item $ \mathbb{L}(\mathcal{A},>\lambda) = \{w \in
\Sigma^{*} \mid
f_{\mathcal{A}}(w)>\lambda \} $.
     \end{enumerate}
     Additionally, we define $ \mathbb{L}(\mathcal{A},\neq\lambda) =
\mathbb{L}(\mathcal{A},<\lambda) \cup
     \mathbb{L}(\mathcal{A},>\lambda) $.
\end{definition}
~~
\begin{definition}
     The pair $ (\mathcal{A},\lambda) $ is \textit{equivalent under cutpoint
separation} to the pair
     $ (\mathcal{A^{'}},\lambda^{'}) $,
     denoted as $ (\mathcal{A},\lambda) \equiv
(\mathcal{A^{'}},\lambda^{'}) $,      if
     \[
             \begin{array}{lcl}
                     \mathbb{L}(\mathcal{A},< \lambda) & = &
\mathbb{L}(\mathcal{A^{'}},< \lambda^{'}) \\
                     \mathbb{L}(\mathcal{A},= \lambda) & = &
\mathbb{L}(\mathcal{A^{'}},= \lambda^{'}) \\
                     \mathbb{L}(\mathcal{A},> \lambda) & = &
\mathbb{L}(\mathcal{A^{'}},> \lambda^{'}), \\
             \end{array}
     \]
     where $ \mathcal{A} $, $ \mathcal{A^{'}} $ are automata and $
\lambda, \lambda^{'} \in \mathbb{R} $ are cutpoints.
\end{definition}
~~
\begin{definition}
     The \textit{language recognized by automaton $ \mathcal{A} $
with cutpoint} $
\lambda \in \mathbb{R} $ is defined as
     \[ \mathbb{L}(\mathcal{A},\lambda)=\mathbb{L}(\mathcal{A},>\lambda). \]
     $ \mathbb{L}(\mathcal{A},\lambda) $ is said to be \textit{recognized by
automaton $ \mathcal{A} $ with
     one-sided cutpoint} $ \lambda \in \mathbb{R} $ if $
\mathbb{L}(\mathcal{A},<\lambda)=\emptyset $.
\end{definition}
~~
\begin{definition} \cite{Pa71}
     \begin{enumerate}
              \item The languages recognized by PFA's with cutpoint $ \lambda \in [0,1) $ constitute the class of
                   \textit{stochastic languages} (S$ ^{>} $). 
                   The collection of languages whose complements are stochastic is the class co-S$ ^{>} $.
             \item Languages of the form $ \mathbb{L}(\mathcal{P},=\lambda) $, 
             for any PFA $ \mathcal{P} $, and any $ \lambda \in [0,1] $, constitute the class S$ ^{=} $.
             \item Languages of the form $ \mathbb{L}(\mathcal{P},\neq\lambda) $, for any PFA $ \mathcal{P} $,
             and any $ \lambda \in [0,1] $, constitute the class  
             \textit{exclusive stochastic languages} (S$ ^{\neq} $).
     \end{enumerate}
\end{definition}
~~
\begin{remark}
	\label{remark:cutpoint-unbounded}
	In the study of complexity classes defined in terms of Turing machines, ``recognition with cutpoint" 
	is used synonymously with ``unbounded-error recognition". 
	This usage does not provide an appropriate coverage of the intuitive concept of unbounded-error 
	computation that we described in Section \ref{section:Introduction} in the case of PFA's: 
	Given a PFA $ \mathcal{P} $ which recognizes a language $ L $ with cutpoint, 
	one can build a new PFA $ \mathcal{P}^{'} $ for the complement of $ L $ by just switching the statuses of 
	the accepting and non-accepting states of $ \mathcal{P} $. 
	Since $ \mathcal{P}^{'} $ accepts any member of $ \overline{L} $ with greater probability than any nonmember, 
	we say that it recognizes $ \overline{L} $ with unbounded error. 
	However, since S$ ^{>} $ is not known to be closed under complementation, we do not know in general 
	whether $ \overline{L} $ is stochastic or not. 
	For this reason, we take S$ ^{>} $ $ \cup $ co-S$ ^{>} $ to be the class of languages recognized 
	with unbounded error by PFA's
	\footnote{Note that S$ ^{>}_{rat} $, the class of languages recognized with cutpoint $ \frac{1}{2} $
	by PFA's whose transition matrices contain only rational numbers, is known \cite{Tu69B} 
	to be closed under complementation. It is however customary to define PFA's and QFA's with general transition
	probabilities/amplitudes, as we did in Definitions \ref{definition:1pfa} and \ref{definition:kwqfa}, 
	in the finite automata literature, and we follow this convention. 
	See Section \ref{section:ConcludingRemarks} for more on this point.}.
\end{remark}
~~
\begin{definition}~
     \begin{enumerate}
             \item  The languages recognized by KWQFA's (MCQFA's) with cutpoint $ \lambda  \in [0,1) $ constitute
             the class QL (MCL).
             \item The languages recognized by KWQFA's (MCQFA's) with cutpoint $ 0 $,
            i.e., those of the form $ \mathbb{L}(\mathcal{M},0) $,
for any KWQFA (MCQFA) $ \mathcal{M} $,
            constitute the class NQL (NMCL).
     \end{enumerate}
\end{definition}

As mentioned before, nondeterministic computation corresponds to
recognition with cutpoint $ 0 $, and so NMCL and NQL denote the classes
of languages recognized by nondeterministic MCQFA's and KWQFA's, respectively.
\\
\begin{fact}
     \label{fact:GPFA-GPFA}
     \cite{Tu69} Let $ \mathcal{G}_{1} $ be a GPFA  and $\lambda_{1} \in
\mathbb{R} $ be a cutpoint.
             For any cutpoint $ \lambda_{2} \in \mathbb{R} $, there
exists a GPFA
$ \mathcal{G}_{2} $  such that
             $ (\mathcal{G}_{1},\lambda_{1}) \equiv
(\mathcal{G}_{2},\lambda_{2}) $.
\end{fact}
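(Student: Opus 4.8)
The plan is to reduce everything to one linear-algebraic observation: it suffices to build a GPFA $ \mathcal{G}_{2} $ whose associated function differs from that of $ \mathcal{G}_{1} $ by the additive constant $ \lambda_{2} - \lambda_{1} $, that is,
\begin{equation*}
f_{\mathcal{G}_{2}}(w) = f_{\mathcal{G}_{1}}(w) + (\lambda_{2} - \lambda_{1}) \quad \text{for every } w \in \Sigma^{*} .
\end{equation*}
Once this identity holds, each of the three defining conditions of cutpoint-separation equivalence is immediate, since $ f_{\mathcal{G}_{2}}(w) < \lambda_{2} $ if and only if $ f_{\mathcal{G}_{1}}(w) < \lambda_{1} $, and likewise for $ = $ and $ > $. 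So the entire task is to show that the output function of a GPFA can be shifted by an arbitrary real constant.

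First I would exploit the full generality of the GPFA model --- the freedom to pick arbitrary real entries in the initial vector, in the final vector, and in the (non-stochastic) transition matrices. Writing $ \mathcal{G}_{1} = (S,\Sigma,\{\mathsf{A}_{\sigma}\},\mathsf{v}_{0},\mathsf{f}) $ with $ |S| = n $, I would construct $ \mathcal{G}_{2} $ on $ n+1 $ states by adjoining a single ``constant-carrier'' state. Concretely, each transition matrix $ \mathsf{A}_{\sigma} $ is replaced by the $ (n+1)\times(n+1) $ block-diagonal matrix having $ \mathsf{A}_{\sigma} $ in its upper-left block and the scalar $ 1 $ in the new diagonal position (with zeros off the blocks); the new initial vector is $ \mathsf{v}_{0} $ with an extra entry $ 1 $ appended; and the new final vector is $ \mathsf{f} $ with an extra entry $ \lambda_{2}-\lambda_{1} $ appended. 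The extra coordinate is loaded with the value $ 1 $ at the start, is transported unchanged through every symbol (since it lives in its own $ 1\times 1 $ block whose entry is $ 1 $), and finally contributes exactly $ \lambda_{2}-\lambda_{1} $ through the augmented final vector.

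Then I would verify the shift by a direct computation. Because the augmented matrices are block-diagonal, their product over any nonempty input factors into the product of the original $ \mathsf{A}_{\sigma} $'s in the upper-left block and $ 1 $ in the carrier position; multiplying on the left by the augmented initial vector and on the right by the augmented final vector therefore yields $ f_{\mathcal{G}_{1}}(w) + (\lambda_{2}-\lambda_{1}) $, as required. The only case needing a separate glance is the empty input, where no transition matrix is applied and the value is simply the augmented initial vector times the augmented final vector, which again equals $ f_{\mathcal{G}_{1}}(\varepsilon) + (\lambda_{2}-\lambda_{1}) $; so the identity holds uniformly over $ \Sigma^{*} $.

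I do not expect a genuine obstacle here. The statement amounts to the fact that the class of GPFA output functions is closed under the addition of constants, and the generality of the GPFA definition --- unrestricted real weights and arbitrary boundary vectors, with no stochasticity or normalization constraints --- makes the carrier-state trick go through cleanly. The only point deserving care is to confirm that the three new objects remain a legitimate real-valued initial vector, family of transition matrices, and final vector in the sense of Definition \ref{definition:gpfa}, which they manifestly do.
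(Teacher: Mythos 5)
Your proof is correct: the carrier-state construction does yield $f_{\mathcal{G}_{2}}(w)=f_{\mathcal{G}_{1}}(w)+(\lambda_{2}-\lambda_{1})$ for all $w\in\Sigma^{*}$, which immediately gives equivalence under cutpoint separation, and the augmented objects are legitimate under Definition~\ref{definition:gpfa} since GPFA's impose no stochasticity constraints. The paper itself offers no proof (it cites Turakainen's result as Fact~\ref{fact:GPFA-GPFA}), but your argument is essentially the standard one behind that citation --- shifting the output function by a constant via an extra state --- so there is nothing further to add.
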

~~
\begin{fact}
     \label{fact:PFA-PFA}
     \cite{Pa71} Let $ \mathcal{P}_{1} $ be a PFA  and $ \lambda_{1} \in
[0,1) $ be a cutpoint.
     For any cutpoint $ \lambda_{2} \in (0,1) $, there exists a PFA $
\mathcal{P}_{2} $  such that
             $ (\mathcal{P}_{1},\lambda_{1}) \equiv
(\mathcal{P}_{2},\lambda_{2}) $.
\end{fact}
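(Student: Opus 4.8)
The plan is to reduce the statement to the existence of a strictly increasing affine map $g(x)=ax+b$ (with $a>0$) that sends $\lambda_1$ to $\lambda_2$ and carries $[0,1]$ into itself, and then to realize the composed function $g\circ f_{\mathcal{P}_1}$ as the acceptance function of a new PFA $\mathcal{P}_2$. The reason the first requirement suffices is that a strictly increasing $g$ satisfying $g(\lambda_1)=\lambda_2$ automatically preserves the three-way partition: for every word $w$ we have $f_{\mathcal{P}_1}(w)<\lambda_1$ iff $g(f_{\mathcal{P}_1}(w))<g(\lambda_1)=\lambda_2$, and the same equivalence holds with $<$ replaced by $=$ and by $>$. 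Hence, once $f_{\mathcal{P}_2}=g\circ f_{\mathcal{P}_1}$, the three sets $\mathbb{L}(\mathcal{P}_2,<\lambda_2)$, $\mathbb{L}(\mathcal{P}_2,=\lambda_2)$, $\mathbb{L}(\mathcal{P}_2,>\lambda_2)$ coincide with their $\mathcal{P}_1$ counterparts, which is exactly $(\mathcal{P}_1,\lambda_1)\equiv(\mathcal{P}_2,\lambda_2)$.

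Next I would verify that such $a$ and $b$ actually exist for the prescribed ranges of cutpoints. An increasing affine self-map of $[0,1]$ necessarily has slope at most $1$, so this feasibility check is the real content of the argument. Writing out the conditions $g(0)=b\ge 0$, $g(1)=a+b\le 1$, and $a\lambda_1+b=\lambda_2$, I would solve the last for $b=\lambda_2-a\lambda_1$ and substitute, reducing everything to the choice of a single parameter subject to $0<a\le \lambda_2/\lambda_1$ (the bound being vacuous when $\lambda_1=0$) and $a\le (1-\lambda_2)/(1-\lambda_1)$. Both upper bounds are \emph{strictly positive} precisely because $\lambda_1\in[0,1)$ and $\lambda_2\in(0,1)$, so any sufficiently small $a>0$ works, with the corresponding $b=\lambda_2-a\lambda_1$ then automatically satisfying $b\ge 0$ and $a+b\le 1$. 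This is exactly where the hypotheses $\lambda_1<1$ and $\lambda_2>0$ enter: if one had $\lambda_2=0$ while $\lambda_1>0$, no positive slope could send $\lambda_1$ to $0$ while keeping $b\ge 0$, and if $\lambda_1=1$ the bound $(1-\lambda_2)/(1-\lambda_1)$ would collapse.

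It remains to realize $g\circ f_{\mathcal{P}_1}=a\,f_{\mathcal{P}_1}+b$ as a genuine PFA acceptance function. The key observation is that $a f + b = a\cdot f + b\cdot 1 + (1-a-b)\cdot 0$ is a \emph{convex} combination, with the weights $a$, $b$, $1-a-b$ all nonnegative and summing to $1$ (using $a+b\le 1$). I would therefore take $\mathcal{P}_2$ to be the convex combination of $\mathcal{P}_1$, a one-state ``accept-everything'' PFA (acceptance function $\equiv 1$), and a one-state ``reject-everything'' PFA (acceptance function $\equiv 0$). Concretely, adjoin a fresh deterministic start state whose transition on the left end-marker $\cent$ sends probability $a$ into the post-$\cent$ configuration of $\mathcal{P}_1$, probability $b$ into the accepting sink, and probability $1-a-b$ into the rejecting sink, letting the three components evolve block-diagonally thereafter; the acceptance probability of $\mathcal{P}_2$ on $w$ is then $a\,f_{\mathcal{P}_1}(w)+b$ as required, and the resulting matrices are stochastic, so $\mathcal{P}_2$ conforms to Definition \ref{definition:1pfa}.

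The main obstacle, and the only real difficulty, is the stochasticity constraint. Note that Fact \ref{fact:GPFA-GPFA} grants the analogous statement for GPFA's over \emph{arbitrary} real cutpoints with essentially no work, precisely because GPFA's impose no positivity or normalization on their weights or on the value of the function, so that any affine map is available. Here, by contrast, we are confined to increasing affine maps of slope at most $1$ and to convex combinations of legitimate acceptance functions; the substance of the proof is therefore the feasibility argument of the second paragraph, showing that the admissible slope interval is nonempty for exactly the stated ranges $\lambda_1\in[0,1)$ and $\lambda_2\in(0,1)$. Once that nonemptiness is established, the construction above is routine, since the initial probabilistic branching can be absorbed into the single end-marker transition and thus never violates the single-deterministic-start-state requirement of the model.
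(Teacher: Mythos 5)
Your proof is correct, and since the paper offers no proof of Fact~\ref{fact:PFA-PFA} (it is quoted directly from \cite{Pa71}), your argument---realizing the increasing affine rescaling $f\mapsto af+b$ as a convex combination of $\mathcal{P}_1$ with a one-state always-accept machine and a one-state always-reject machine, branching once on the left end-marker---is exactly the standard cutpoint-shifting construction that underlies Paz's result. One minor quibble: your parenthetical claim that $\lambda_1<1$ is needed because the bound $(1-\lambda_2)/(1-\lambda_1)$ ``collapses'' is inaccurate, since at $\lambda_1=1$ the constraint $a(1-\lambda_1)\le 1-\lambda_2$ becomes vacuous and $a=\lambda_2$, $b=0$ would still work; but this concerns only the sharpness of the stated hypotheses, not the validity of your proof.
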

~~
\begin{fact}
     \label{fact:PFA-KWQFA}
     \cite{YS09D} For any PFA $ \mathcal{P} $, there exists a KWQFA
$\mathcal{M} $ such that
             $ (\mathcal{P},\frac{1}{2}) \equiv (\mathcal{M},\frac{1}{2}) $.
\end{fact}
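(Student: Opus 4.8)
The plan is to build $\mathcal{M}$ by dilating the stochastic evolution of $\mathcal{P}$ into a unitary one, and to exploit the intermediate (``measure-many'') observations of the KWQFA model to force the surviving amplitude vector, after each symbol, to track a matrix product that encodes $f_{\mathcal{P}}$. Because the measure-many mechanism keeps the non-halting states coherent and squares amplitudes only at the end, I do not expect to reproduce $f_{\mathcal{P}}$ exactly (this is precisely why measure-many QFA's fall short of PFA's under bounded error); I aim only for the weaker goal of cutpoint separation at $\frac{1}{2}$, i.e.\ to make $f_{\mathcal{M}}(w)$ lie below, on, or above $\frac{1}{2}$ exactly when $f_{\mathcal{P}}(w)$ does.

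First I would put $\mathcal{P}$ into a convenient normal form. Reading $\mathcal{P}$ as a GPFA (a PFA being the special case with stochastic matrices and a $0/1$ final vector), I would combine Turakainen's normalization result (Fact \ref{fact:GPFA-GPFA}) with Paz's cutpoint-shifting result (Fact \ref{fact:PFA-PFA}) to obtain real matrices $\{\mathsf{B}_{\sigma}\}$, an initial row vector $v_{0}$, and a final vector $g$ such that (i) each $\mathsf{B}_{\sigma}$ is a contraction, $\|\mathsf{B}_{\sigma}\|\le 1$, and (ii) the quantity $g^{\mathtt{T}}\mathsf{B}_{w_{|w|}}\cdots\mathsf{B}_{w_{1}}v_{0}$ is an affine image of $f_{\mathcal{P}}(w)$ whose comparison to a fixed value does not drift with $|w|$. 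Arranging this length-independent normalization is the step I expect to be the main obstacle: a naive entrywise square root or a uniform rescaling of the stochastic matrices (which are not $2$-norm contractions) introduces a factor decaying like $c^{-|w|}$, which would make the acceptance probability incomparable to a single cutpoint across strings of different lengths. Turakainen's ``reservoir'' construction is exactly what absorbs this length dependence, and reconciling it with the dilation requirement is the crux of the argument.

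Next I would dilate. Since each $\mathsf{B}_{\sigma}$ is a contraction, it extends to a unitary $\mathsf{U}_{\sigma}$ whose top-left block equals $\mathsf{B}_{\sigma}$, acting on a doubled state space in which the original coordinates form a ``computational'' block and the added coordinates are declared rejecting states; I take $\ket{\mathsf{u}_{0}}$ to embed $v_{0}$ in the computational block. Reading a symbol then applies $\mathsf{U}_{\sigma}$, after which the observation sends the ancilla block to ``reject,'' projecting the surviving state back onto the computational block; by induction the non-halting amplitude vector after reading $w$ equals $\mathsf{B}_{w_{|w|}}\cdots\mathsf{B}_{w_{1}}v_{0}$, with all dilation cross-terms harmlessly rejected at intermediate steps. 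Finally, the transition on $\dollar$ is taken to be an interference (Hadamard-type) step pitting the accepting component $g^{\mathtt{T}}\mathsf{B}_{w_{|w|}}\cdots\mathsf{B}_{w_{1}}v_{0}$ against a fixed reference amplitude, so that $f_{\mathcal{M}}(w)$ becomes a monotone affine function of $f_{\mathcal{P}}(w)$ crossing $\frac{1}{2}$ precisely at $f_{\mathcal{P}}(w)=\frac{1}{2}$. Checking that this squared-amplitude readout yields the correct linear comparison and that the equality case is pinned down exactly is the remaining bookkeeping; once it holds, the three identities defining $(\mathcal{P},\frac{1}{2})\equiv(\mathcal{M},\frac{1}{2})$ follow immediately.
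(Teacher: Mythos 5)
Your dilation machinery (embed a scaled version of each stochastic matrix as the top-left block of a unitary, let the auxiliary coordinates halt, and argue by induction that the surviving non-halting vector tracks the matrix product) is indeed the skeleton of the construction in \cite{YS09D} that this fact rests on. But your proposal has a fatal routing error: you declare \emph{all} dilation coordinates to be rejecting states. Then acceptance probability can only come from the final measurement on $\dollar$, so $f_{\mathcal{M}}(w)\le\|x_w\|^2$, where $x_w$ is the surviving amplitude vector just before $\dollar$ and $\|x_w\|^2=1-L_w$ with $L_w$ the probability leaked to rejection at intermediate steps. Since the scaled matrices are strict contractions on the relevant subspace (stochastic matrices are not $2$-norm contractions, so scaling by some $l_\sigma>1$ is unavoidable; e.g.\ a matrix with all rows $(1,0,\dots,0)$ forces $l_\sigma\ge\sqrt{n}$), the surviving norm, and hence $f_{\mathcal{M}}(w)$, decays string-dependently and drops below $\frac{1}{2}$ for all sufficiently long inputs, \emph{regardless} of $f_{\mathcal{P}}(w)$. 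No choice of final ``Hadamard-type'' step on $\dollar$ can repair this, because the leaked probability $L_w$ is already committed to rejection and is controlled by the contraction defect along the trajectory, which has nothing to do with $f_{\mathcal{P}}(w)$; the three-way separation at cutpoint $\frac{1}{2}$ (in particular the equality case) is therefore unattainable. Note also that this is exactly the difference between this fact and the cutpoint-$0$ simulation used in Lemma \ref{lemma:S-QL_0} of the paper, where all-rejecting auxiliary states are fine because one only needs the accepting amplitude to be nonzero iff $w\in L$.

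The known construction fixes this by splitting, at every step, each halting branch evenly between an accepting twin and a rejecting twin, so that all halted probability contributes exactly $\frac{1}{2}$ to acceptance and cancels out of the comparison; then on $\dollar$ the tracked vector is transformed so that an accept/reject pair receives amplitudes $u_w\pm v_w$ with $u_w>0$ and $v_w$ a positive multiple of $d_w\left(f_{\mathcal{P}}(w)-\frac{1}{2}\right)$, where $d_w>0$ is the accumulated scaling factor. This yields $f_{\mathcal{M}}(w)-\frac{1}{2}=c_w\left(f_{\mathcal{P}}(w)-\frac{1}{2}\right)$ with $c_w>0$, which is precisely equivalence under cutpoint separation. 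This also shows that the step you singled out as the crux --- eliminating the length dependence of the normalization via Turakainen's reservoir --- is a non-issue: a string-dependent but \emph{positive} factor $c_w$ multiplying the deviation from the cutpoint is harmless for all three sign comparisons. The genuine crux is where the halted probability goes, and that is the step your proposal gets wrong.
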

~~
\begin{fact}
     \label{fact:KWQFA-GPFA}
     \cite{LQ08,YS09D} For any KWQFA $ \mathcal{M} $ and cutpoint $
\lambda \in [0,1) $, there exists a
             GPFA $ \mathcal{G} $ such that $ (\mathcal{M},\lambda) \equiv
(\mathcal{G},\lambda) $.
\end{fact}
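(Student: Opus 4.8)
The plan is to prove something slightly stronger than the claimed cutpoint equivalence: I would construct a GPFA $\mathcal{G}$ whose associated function is \emph{identical} to that of the KWQFA, i.e.\ $f_{\mathcal{G}}(w)=f_{\mathcal{M}}(w)$ for every $w\in\Sigma^{*}$. Since equal functions induce the very same partition of $\Sigma^{*}$ with respect to any real cutpoint, the relation $(\mathcal{M},\lambda)\equiv(\mathcal{G},\lambda)$ then holds for every $\lambda$, in particular for every $\lambda\in[0,1)$, which is all that is asked. The one genuine obstacle is that $f_{\mathcal{M}}$ is a sum of squared moduli of amplitudes, hence \emph{quadratic} in the state vector, whereas a GPFA propagates only a linear quantity. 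The standard remedy is to linearize by tracking the density matrix rather than the amplitude vector.

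Concretely, I would first recall that after reading a prefix, the (subnormalized) state of $\mathcal{M}$ on its non-halting subspace is described by a density matrix $\rho$; the unitary step sends $\rho\mapsto\mathsf{U}_{\sigma}\rho\,\mathsf{U}_{\sigma}^{\dagger}$, and the observation step contributes $\operatorname{tr}(P_{acc}\,\mathsf{U}_{\sigma}\rho\,\mathsf{U}_{\sigma}^{\dagger})$ to the acceptance probability while replacing $\rho$ by its projection $P_{non}\,\mathsf{U}_{\sigma}\rho\,\mathsf{U}_{\sigma}^{\dagger}\,P_{non}$ onto the non-halting states (here $P_{acc}$ and $P_{non}$ denote the diagonal projections onto $Q_{acc}$ and $Q_{non}$). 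Both maps are $\mathbb{R}$-linear in $\rho$ and preserve Hermiticity, so each can be written as a real matrix once we coordinatize the $n^{2}$-dimensional real vector space of $n\times n$ Hermitian matrices — for instance by the diagonal entries $\rho_{ii}$ together with the real and imaginary parts of the entries $\rho_{ij}$ with $i<j$.

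Next I would assemble these pieces into a GPFA. I take the GPFA state vector to consist of the $n^{2}$ real coordinates of the current non-halting density matrix, augmented with one extra ``accumulator'' coordinate holding the acceptance probability amassed so far. For each $\sigma\in\Sigma$ the transition matrix $\mathsf{A}_{\sigma}$ acts on the density-matrix block by $\rho\mapsto P_{non}\,\mathsf{U}_{\sigma}\rho\,\mathsf{U}_{\sigma}^{\dagger}\,P_{non}$ and simultaneously adds $\operatorname{tr}(P_{acc}\,\mathsf{U}_{\sigma}\rho\,\mathsf{U}_{\sigma}^{\dagger})$ into the accumulator; since both operations are real-linear in the current coordinates, $\mathsf{A}_{\sigma}$ is a well-defined real matrix, its entries unconstrained in sign and magnitude exactly as Definition~\ref{definition:gpfa} permits. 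Because the GPFA carries no transition matrices for the end-markers, I would fold the processing of $\cent$ into the initial vector $\mathsf{v}_{0}$ (setting the density-matrix block to the coordinates of $P_{non}\,\mathsf{U}_{\cent}\ket{u_0}\bra{u_0}\mathsf{U}_{\cent}^{\dagger}\,P_{non}$ and the accumulator to $\operatorname{tr}(P_{acc}\,\mathsf{U}_{\cent}\ket{u_0}\bra{u_0}\mathsf{U}_{\cent}^{\dagger})$), and fold the processing of $\dollar$ into the final vector $\mathsf{f}$, which reads out the accumulator and adds the last contribution $\operatorname{tr}(P_{acc}\,\mathsf{U}_{\dollar}\rho\,\mathsf{U}_{\dollar}^{\dagger})$ — again a real-linear functional of the density-matrix coordinates, hence representable by a fixed real vector.

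A straightforward induction on $|w|$ then shows that the accumulator produced on input $\cent w\dollar$ equals the total acceptance probability $f_{\mathcal{M}}(w)$, so $f_{\mathcal{G}}(w)=f_{\mathcal{M}}(w)$ and the equivalence follows. I expect the only delicate point to be the bookkeeping for the intermediate observations: one must ensure that the amplitude ``dropped'' at accepting states is both credited to the accumulator and removed from the propagated density matrix, and that already-accumulated probability is never re-processed by later unitaries. Keeping the accumulator coordinate inert under every $\mathsf{U}_{\sigma}$ — the $\sigma$-block acting as the identity on it, apart from receiving additive contributions — handles this cleanly. If the model is restricted so that acceptance occurs only after $\dollar$, the accumulator is unnecessary and the final vector alone suffices, but the construction above covers the full KWQFA definition.
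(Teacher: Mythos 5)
Your proposal is correct, and it is essentially the argument behind the statement: the paper does not prove this Fact itself but imports it from the cited references, and the construction there is exactly your density-matrix linearization --- vectorize the subnormalized non-halting density matrix into an $n^{2}$-dimensional real vector, observe that the step $\rho \mapsto P_{non}\,\mathsf{U}_{\sigma}\rho\,\mathsf{U}_{\sigma}^{\dagger}P_{non}$ together with the accumulation of $\operatorname{tr}(P_{acc}\,\mathsf{U}_{\sigma}\rho\,\mathsf{U}_{\sigma}^{\dagger})$ is real-linear, and fold the end-marker processing into $\mathsf{v}_{0}$ and $\mathsf{f}$, yielding a GPFA with $f_{\mathcal{G}}=f_{\mathcal{M}}$, which is stronger than (and immediately implies) equivalence under cutpoint separation for every $\lambda$. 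The accumulator bookkeeping you single out as the delicate point is precisely how the cited constructions handle intermediate halting, so nothing is missing.
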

~~
\begin{fact}
     \label{fact:GPFA-PFA}
     \cite{Tu69} For any GPFA $ \mathcal{G} $ and cutpoint $ \lambda_{1}
\in \mathbb{R} $,
             there exist a PFA $ \mathcal{P} $ and a cutpoint $
\lambda_{2} \in
(0,1) $ such that
             $ (\mathcal{G},\lambda_{1}) \equiv (\mathcal{P},\lambda_{2}) $.
\end{fact}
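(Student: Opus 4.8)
The plan is to reproduce Turakainen's reduction from generalized to genuinely stochastic automata, building a PFA whose acceptance probability is an explicitly controlled, sign-preserving transform of $f_{\mathcal{G}}$. First I would reduce to the case $\lambda_1=0$. The constant function $w\mapsto-\lambda_1$ is computed by a one-state GPFA (a state entered with weight $1$, fixed by every transition, and read out with weight $-\lambda_1$), so forming the block-diagonal ``direct sum'' of this with $\mathcal{G}$ yields a GPFA $\mathcal{G}'$ with $f_{\mathcal{G}'}(w)=f_{\mathcal{G}}(w)-\lambda_1$. Then the three-way comparison of $f_{\mathcal{G}}(w)$ against $\lambda_1$ is identical to that of $f_{\mathcal{G}'}(w)$ against $0$, so it suffices to treat cutpoint $0$.

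Next I would put $\mathcal{G}'$ into a normal form in which the initial and final vectors are standard basis vectors: by adjoining two coordinates and folding $\mathsf{v}_0$ and $\mathsf{f}$ into the transitions read at the end-markers $\cent$ and $\dollar$, one obtains matrices $M_\sigma$ of some dimension $p$ with $f_{\mathcal{G}'}(w)=(M_{w_1}\cdots M_{w_{|w|}})_{1,p}$, the top-right entry of the product (this also disposes of the empty-word value). The heart of the argument is to augment these matrices so that \emph{every} $M_\sigma$ has all row sums and all column sums equal to zero, while preserving the $(1,p)$-entry of every product. This is done by appending one ``slack'' column and one ``slack'' row: a sink state that absorbs each row's excess (and never returns weight to the original states) and a source state that cancels each column's excess (and never receives weight, its column being zero). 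Neither new state lies on any path from state $1$ to state $p$, so the relevant entry is unchanged, and a short check shows that both families of sums then vanish.

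Writing $J$ for the all-ones matrix of the resulting dimension $q$ and choosing $c$ larger than every matrix entry in absolute value, I set $S_\sigma=\tfrac{1}{cq}(M_\sigma+cJ)$; these are nonnegative with row sums $1$, i.e.\ legal stochastic matrices. The zero-sum property gives $M_\sigma J=J M_\sigma=0$, so in the expansion of $S_{w_1}\cdots S_{w_k}$ every mixed term contains an adjacent $J$--$M$ pair and vanishes, leaving only the all-$M$ and all-$J$ contributions:
\[ S_{w_1}\cdots S_{w_k}=\frac{1}{(cq)^{k}}\bigl(M_{w_1}\cdots M_{w_k}+c^{k}q^{\,k-1}J\bigr). \]
Reading off the $(1,p)$ entry (and recalling $J_{1,p}=1$), the PFA $\mathcal{P}$ with start state $1$, matrices $S_\sigma$, suitable stochastic matrices at $\cent,\dollar$, and accepting set $\{s_p\}$ satisfies $f_{\mathcal{P}}(w)=\tfrac{1}{q}+(cq)^{-N}f_{\mathcal{G}'}(w)$, where $N$ is the number of tape symbols scanned (so $N=|w|+2$, and the end-markers only shift the exponent by a constant).

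Since $(cq)^{-N}>0$, the quantity $f_{\mathcal{P}}(w)-\tfrac1q$ has exactly the same sign as $f_{\mathcal{G}'}(w)=f_{\mathcal{G}}(w)-\lambda_1$ for every $w$. Hence $\mathbb{L}(\mathcal{P},<\tfrac1q)$, $\mathbb{L}(\mathcal{P},=\tfrac1q)$, $\mathbb{L}(\mathcal{P},>\tfrac1q)$ coincide with $\mathbb{L}(\mathcal{G},<\lambda_1)$, $\mathbb{L}(\mathcal{G},=\lambda_1)$, $\mathbb{L}(\mathcal{G},>\lambda_1)$, giving $(\mathcal{G},\lambda_1)\equiv(\mathcal{P},\lambda_2)$ with $\lambda_2=\tfrac1q\in(0,1)$, as required. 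I expect the main obstacle to be the zero-sum normalization: one must verify that the slack states genuinely decouple from all $1\to p$ paths, and that appending the column slack \emph{after} the row slack forces the added row to itself be zero-sum (because the matrix's total entry sum, being the sum of its row sums, has already been driven to zero). Everything after that---the choice of $c$ for nonnegativity, the collapse of the mixed terms, and the sign bookkeeping---is routine.
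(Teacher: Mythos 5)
Your proposal is correct: the paper does not prove this Fact itself but cites Turakainen [Tu69], and your argument is precisely Turakainen's classical construction (shift the cutpoint to $0$, fold the initial and final vectors into end-marker matrices, append slack states to force all row and column sums to zero, then add a scaled all-ones matrix $cJ$ so that mixed terms vanish via $M_\sigma J = J M_\sigma = 0$, yielding a stochastic automaton with cutpoint $\tfrac{1}{q}$). All the key verifications — the consistency of the corner entry, the decoupling of the slack states from $1\to p$ paths, and the sign-preserving affine relation $f_{\mathcal{P}}(w)=\tfrac1q+(cq)^{-N}f_{\mathcal{G}'}(w)$ — are handled correctly, so nothing is missing.
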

~~
\begin{fact}
     \label{fact:MCLpropersubsetS}
     \cite{BC01B} MCL $ \subsetneq $ S$ ^{>} $.
\end{fact}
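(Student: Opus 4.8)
The plan is to establish the two inclusions separately: that every language in MCL is stochastic, and that some stochastic language escapes MCL. For the containment MCL $\subseteq$ S$^{>}$, I would start from the observation that the acceptance probability of a MCQFA $\mathcal{M}$ is a quadratic form in the amplitudes: writing $P$ for the projection onto the accepting states and $\ket{u_0}$ for the (unit) initial vector, $f_{\mathcal{M}}(w)=\| P\,\mathsf{U}_{\mathsf{w}_{|\mathsf{w}|}}\cdots\mathsf{U}_{\mathsf{w}_{1}}\ket{u_0}\|^2$. The standard linearization trick is to track the state tensored with its own complex conjugate, $\ket{u}\otimes\overline{\ket{u}}$, which evolves \emph{linearly} under the operators $\mathsf{U}_{\sigma}\otimes\overline{\mathsf{U}_{\sigma}}$, and under which $f_{\mathcal{M}}$ becomes an ordinary linear functional (it reads off the accepting diagonal of the density matrix). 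Realizing complex entries by the usual $2\times 2$ real blocks, this produces a GPFA $\mathcal{G}$ on $O(n^2)$ states with $f_{\mathcal{G}}(w)=f_{\mathcal{M}}(w)$ for every $w$, so $(\mathcal{M},\lambda)\equiv(\mathcal{G},\lambda)$. Applying Fact \ref{fact:GPFA-PFA} then yields a PFA $\mathcal{P}$ and a cutpoint $\lambda_{2}\in(0,1)$ with $(\mathcal{G},\lambda)\equiv(\mathcal{P},\lambda_{2})$, whence $\mathbb{L}(\mathcal{M},\lambda)=\mathbb{L}(\mathcal{P},\lambda_{2})\in$ S$^{>}$.

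For the strictness, I would exhibit a single regular --- hence stochastic --- language that no MCQFA can recognize with any cutpoint. A convenient witness is the finite language $\{a\}$ over a one-letter alphabet, which lies in S$^{>}$ (a deterministic PFA for it, read with cutpoint $\tfrac12$, accepts $a$ with probability $1$ and every other string with probability $0$). Suppose for contradiction that some MCQFA recognizes $\{a\}$ with cutpoint $\lambda$. Absorbing the (fixed) end-marker operators, the acceptance probability on $a^{n}$ takes the form $f(a^{n})=\|M\,\mathsf{U}_{a}^{\,n}\ket{v}\|^2$ for a fixed operator $M$, a unitary $\mathsf{U}_{a}$, and a unit vector $\ket{v}$. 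Diagonalizing $\mathsf{U}_{a}$, whose eigenvalues $e^{i\theta_{j}}$ all lie on the unit circle, gives
\begin{equation*}
f(a^{n})=\sum_{j,k} e^{in(\theta_{k}-\theta_{j})}\,d_{jk},
\end{equation*}
a trigonometric polynomial in $n$ with fixed coefficients $d_{jk}$, and therefore a Bohr almost-periodic sequence.

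The main obstacle, and the crux of the separation, is turning almost-periodicity into a recurrence statement. Because the $\varepsilon$-almost-periods of $f(a^{n})$ form a relatively dense (in particular infinite) subset of $\mathbb{Z}^{+}$, for every $\varepsilon>0$ there are infinitely many $n>1$ with $|f(a^{n})-f(a)|<\varepsilon$. Taking $\varepsilon=\tfrac12\bigl(f(a)-\lambda\bigr)>0$ would then force $f(a^{n})>\lambda$ for infinitely many $n$, so the recognized language would contain infinitely many strings $a^{n}$ --- contradicting that it equals $\{a\}$. Hence $\{a\}\notin$ MCL while $\{a\}\in$ S$^{>}$, and together with the containment above this gives MCL $\subsetneq$ S$^{>}$. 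I expect the delicate point to be the quantitative recurrence of almost-periodic sequences (equivalently, Poincar\'e recurrence of the orbit of $\ket{v}$ under the unitary $\mathsf{U}_{a}$); the containment direction, by contrast, is a routine linearization once the tensoring idea is in place.
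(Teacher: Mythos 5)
Your proposal is correct, and it is worth noting that the paper itself offers no proof of this statement at all --- it is imported wholesale as a citation to \cite{BC01B}. What you have done is essentially reconstruct the argument behind that citation. The containment half (vectorize the density operator, evolve it by $\mathsf{U}_{\sigma}\otimes\overline{\mathsf{U}_{\sigma}}$, realify, then invoke Turakainen's theorem, i.e.\ Fact~\ref{fact:GPFA-PFA}) is exactly the route the paper uses elsewhere for the stronger KWQFA model (Fact~\ref{fact:KWQFA-GPFA}), so it is certainly the intended mechanism here. For strictness, your witness $\{a\}$ is a legitimate and in fact cleaner choice than the witness implicit in the paper's discussion ($L_{eq}\notin$ MCL, from the same source): $\{a\}$ is trivially regular and hence stochastic, whereas stochasticity of $L_{eq}$ requires a separate argument. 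The one step you flag as delicate --- relative density of $\varepsilon$-almost-periods of a trigonometric polynomial --- is a classical fact (Dirichlet/Kronecker simultaneous approximation), so there is no gap; but you can avoid it entirely: since the unitary group is compact, the sequence $\{\mathsf{U}_{a}^{n}\}$ has a Cauchy subsequence, so for any $\varepsilon>0$ there are $m<n$ with $\|\mathsf{U}_{a}^{n}-\mathsf{U}_{a}^{m}\|<\varepsilon$, hence $\|\mathsf{U}_{a}^{n-m}-I\|<\varepsilon$ by unitary invariance of the norm, and infinitely many such exponents exist; applying these powers to $\ket{v}$ gives $|f(a^{1+k})-f(a)|$ small for infinitely many $k$ without any diagonalization or almost-periodicity machinery. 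With that simplification your argument is complete and self-contained, which is more than the paper provides for this fact.
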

~~
\begin{fact}
     \label{fact:MCLpropersubsetQL}
     \cite{Br98} Any MCQFA with $ n $ states can be simulated by a KWQFA 
     with $ 2n $ states, so MCL $ \subseteq $ QL and NMCL $ \subseteq $ NQL.
\end{fact}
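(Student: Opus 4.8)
The plan is to construct, from a given MCQFA $\mathcal{N}$ with state set $Q = \{q_1, \ldots, q_n\}$, initial state $q_1$, unitary transition matrices $\{\mathsf{U}_{\sigma}\}_{\sigma \in \Gamma}$, and accepting set $F \subseteq Q$, a KWQFA $\mathcal{M}$ on $2n$ states whose acceptance function $f_{\mathcal{M}}$ coincides with $f_{\mathcal{N}}$ on every input. Since the MCQFA (which, per Definition \ref{definition:1pfa}, performs only a single measurement after scanning $\cent w \dollar$) contrasts with the KWQFA, which is observed after every symbol, the crux is to arrange the simulation so that these intermediate observations never disturb the computation, and so that the only measurement that actually separates acceptance from rejection takes place after the right end-marker is read.

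First I would take the state set of $\mathcal{M}$ to be $Q' = \{q_1, \ldots, q_n, q_1', \ldots, q_n'\}$, declaring all of $q_1, \ldots, q_n$ to be non-halting, each $q_i'$ with $q_i \in F$ to be accepting, and each $q_i'$ with $q_i \notin F$ to be rejecting; the initial state is $q_1$. For every $\sigma \in \{\cent\} \cup \Sigma$, I would set $\mathsf{U}_{\sigma}' = \mathsf{U}_{\sigma} \oplus \mathsf{I}_{n}$, so that $\mathcal{M}$ reproduces the evolution of $\mathcal{N}$ entirely within the non-halting block. Because no amplitude ever leaves this block while the genuine input symbols and the left end-marker are processed, each intermediate observation returns the outcome ``non-halting'' with probability $1$ and leaves the state vector unchanged (the relevant projection acts as the identity on it), so no probability is lost and the MCQFA run is faithfully tracked up to, but not including, $\dollar$.

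The only nontrivial transition is the one for the right end-marker. Let $\mathsf{S}$ be the permutation that swaps $q_i \leftrightarrow q_i'$ for each $i$; it is unitary. I would define $\mathsf{U}_{\dollar}' = \mathsf{S}\,(\mathsf{U}_{\dollar} \oplus \mathsf{I}_{n})$, which, as a product of unitaries, is itself unitary. Reading $\dollar$ therefore first applies $\mathcal{N}$'s final unitary $\mathsf{U}_{\dollar}$ inside the non-halting block and then moves the resulting amplitude of each $q_i$ onto the corresponding halting state $q_i'$. The subsequent (and final) observation then records acceptance with probability $\sum_{q_i \in F} \left| \langle q_i | \mathsf{U}_{\dollar} | \psi \rangle \right|^{2}$, where $|\psi\rangle$ is the state just before $\dollar$ is read, which is precisely $f_{\mathcal{N}}(w)$.

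It remains to note that this matching of acceptance probabilities is exact for all $w$, so $\mathbb{L}(\mathcal{M},\lambda) = \mathbb{L}(\mathcal{N},\lambda)$ for every cutpoint $\lambda$; taking $\lambda \in [0,1)$ yields MCL $\subseteq$ QL and $\lambda = 0$ yields NMCL $\subseteq$ NQL. I expect the only real subtlety to be the justification that a KWQFA's intermediate observations are genuinely harmless here: one must confirm that confining the whole computation to the non-halting subspace makes every mid-computation projection act as the identity, so that the single effective measurement is the post-$\dollar$ one. The need for the extra $n$ states is then forced, since a fixed state of a KWQFA cannot be simultaneously non-halting during the scan and halting at the end, which is exactly what produces the $2n$ bound.
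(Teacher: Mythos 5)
Your construction is correct, and it is essentially the standard simulation from the cited reference \cite{Br98}: the paper itself offers no proof (the Fact is quoted from Brodsky--Pippenger), and their argument is exactly your doubling of the state set, confining the computation to the non-halting block via $\mathsf{U}_{\sigma}\oplus \mathsf{I}_{n}$ so intermediate observations act trivially, and transferring amplitude to halting states only on the right end-marker. Since $f_{\mathcal{M}}=f_{\mathcal{N}}$ pointwise, both inclusions follow for every cutpoint, as you state.
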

~~
\begin{fact}
     \label{fact:RLpropersubsetSLneqSLeq}
     \cite{Pa71} The class of regular languages is a proper subset of both
S$ ^{\neq} $ and S$ ^{=} $.
\end{fact}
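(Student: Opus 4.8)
The plan is to establish the two inclusions ``regular $\subseteq$ S$^{=}$'' and ``regular $\subseteq$ S$^{\neq}$'' by an elementary simulation argument, and then to prove properness by exhibiting a single nonregular language that certifies both strict containments at once.

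For the inclusions, I would view any deterministic finite automaton for a regular language $L$ as a PFA $\mathcal{P}$ whose transition matrices are $0/1$-valued (hence deterministic) stochastic matrices, with the end-markers acting as the identity. Then $f_{\mathcal{P}}(w) \in \{0,1\}$, equal to $1$ exactly on the members of $L$, so $\mathbb{L}(\mathcal{P}, =1) = L$ witnesses $L \in$ S$^{=}$ and $\mathbb{L}(\mathcal{P}, \neq 0) = L$ witnesses $L \in$ S$^{\neq}$; both cutpoints lie in $[0,1]$ as required. (Equivalently, since $\mathbb{L}(\mathcal{P}, \neq\lambda)$ is precisely the complement of $\mathbb{L}(\mathcal{P}, =\lambda)$, one has S$^{\neq}$ $=$ co-S$^{=}$, and the second inclusion follows from the first together with closure of the regular languages under complementation.)

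For properness, the plan is to show that the nonregular language $L_{eq} = \{w \in \{a,b\}^{*} \mid |w|_{a} = |w|_{b}\}$ lies in S$^{=}$ and that its complement $L_{neq}$ lies in S$^{\neq}$. I would first build a two-state GPFA $\mathcal{G}$ that rotates a unit vector by a fixed angle $\theta$ on reading $a$ and by $-\theta$ on reading $b$, with initial and final vectors both equal to $(1,0)$. Because planar rotations commute, reading $w$ applies the rotation by $(|w|_{a} - |w|_{b})\theta$, so $f_{\mathcal{G}}(w) = \cos\bigl((|w|_{a} - |w|_{b})\theta\bigr)$. Choosing $\theta$ to be an irrational multiple of $\pi$, the crucial number-theoretic point is that for an integer $k$ one has $\cos(k\theta) = 1$ iff $k\theta \in 2\pi\mathbb{Z}$ iff $k = 0$; hence $f_{\mathcal{G}}(w) = 1$ exactly when $|w|_{a} = |w|_{b}$, giving $L_{eq} = \mathbb{L}(\mathcal{G}, =1)$ and $L_{neq} = \mathbb{L}(\mathcal{G}, \neq 1)$.

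Finally I would transfer these witnesses into the PFA-based classes using Fact \ref{fact:GPFA-PFA}: applied to $\mathcal{G}$ with $\lambda_{1} = 1$ it yields a PFA $\mathcal{P}$ and a cutpoint $\lambda_{2} \in (0,1)$ with $(\mathcal{G},1) \equiv (\mathcal{P},\lambda_{2})$, and since cutpoint-separation equivalence preserves all three of the $<$, $=$, $>$ sets, $\mathbb{L}(\mathcal{P}, =\lambda_{2}) = L_{eq}$ and $\mathbb{L}(\mathcal{P}, \neq\lambda_{2}) = L_{neq}$; thus $L_{eq} \in$ S$^{=}$ and $L_{neq} \in$ S$^{\neq}$, and a routine pumping-lemma argument shows both are nonregular. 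The main obstacle I anticipate is the exactness demanded by an equality (rather than threshold) cutpoint: I must guarantee that the acceptance probability meets the cutpoint on exactly the intended strings and on no others, which is precisely what the irrationality of $\theta/\pi$ secures; the remaining step, the passage from GPFA to PFA, is handled cleanly by the cited fact.
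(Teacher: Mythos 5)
Your proof is correct, but note that the paper itself offers no proof of this statement: it is quoted as a known fact from Paz \cite{Pa71}, so there is no internal argument to match yours against line by line. Your inclusion step (a DFA is a $0/1$-valued PFA, so cutpoint $1$ witnesses membership in S$ ^{=} $ and cutpoint $0$ witnesses membership in S$ ^{\neq} $, together with the observation that $ \mathbb{L}(\mathcal{P},\neq\lambda) $ is the complement of $ \mathbb{L}(\mathcal{P},=\lambda) $ and regular languages are closed under complementation) is the standard one. For properness, the classical treatment also rests on exhibiting nonregular witnesses, and the witnesses you produce, $ L_{eq} \in $ S$ ^{=} $ and $ L_{neq} \in $ S$ ^{\neq} $, are exactly the ones this paper later attributes to \cite{Pa71} and \cite{Ma93} in Section 4. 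What you do differently is the derivation: instead of a direct PFA construction, you build a two-state rotation GPFA whose angle is an irrational multiple of $ \pi $, so that $ f_{\mathcal{G}}(w)=\cos\bigl((|w|_{a}-|w|_{b})\theta\bigr) $ equals $1$ precisely on $ L_{eq} $, and then transfer to a genuine PFA via Fact \ref{fact:GPFA-PFA}, which by the definition of equivalence under cutpoint separation preserves the $<$, $=$, $>$ partition. Both steps are sound: the irrationality argument correctly pins down the equality set, and the transfer fact is Turakainen's theorem \cite{Tu69}, independent of \cite{Pa71}, so there is no circularity. Your route has the advantage of being self-contained and of mirroring techniques the paper itself employs elsewhere, namely the rotation automaton of Theorem \ref{theorem:L_m} (and of \cite{BC01B}) and the GPFA-to-PFA transfers used throughout Section \ref{section:MainResults}.
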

~~
\begin{fact}
     \label{fact:SLneq-propersubsetSL}
     \cite{Pa71} S$ ^{\neq} $ $ \subsetneq $ S$ ^{>} $ and
     S$ ^{>} \setminus $ S$ ^{=} \neq \emptyset $.
\end{fact}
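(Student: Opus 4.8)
The plan is to prove the inclusion S$^{\neq}\subseteq$ S$^{>}$ by an algebraic ``squaring'' construction, and then to separate the classes with an explicit witness. For the inclusion, let $L=\mathbb{L}(\mathcal{P},\neq\lambda)$ for a PFA $\mathcal{P}$ and cutpoint $\lambda$. A PFA is a special GPFA, the constant map $w\mapsto\lambda$ is realized by a one-state GPFA, GPFA value functions are closed under differences (take a direct sum with a negated final vector), and they are closed under products (take a tensor product). Composing these, I would build a GPFA $\mathcal{G}$ with value function
\[ f_{\mathcal{G}}(w)=\bigl(f_{\mathcal{P}}(w)-\lambda\bigr)^{2}. \]
This quantity is nonnegative everywhere and strictly positive exactly when $f_{\mathcal{P}}(w)\neq\lambda$, so $L=\mathbb{L}(\mathcal{G},>0)$. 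Applying Fact~\ref{fact:GPFA-PFA} with $\lambda_{1}=0$ then yields a PFA $\mathcal{P}'$ and a cutpoint $\lambda_{2}\in(0,1)$ with $\mathbb{L}(\mathcal{G},>0)=\mathbb{L}(\mathcal{P}',>\lambda_{2})$, which is stochastic; hence $L\in$ S$^{>}$.

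For the two proper separations I would first record the duality co-S$^{=}=$ S$^{\neq}$, which is immediate from the definitions since the complement of $\mathbb{L}(\mathcal{P},=\lambda)$ is $\mathbb{L}(\mathcal{P},\neq\lambda)$ and vice versa. Because of this duality a single witness suffices: if I can produce a language $M$ such that both $M$ and $\overline{M}$ are stochastic while $M\notin$ S$^{\neq}$, then $M$ witnesses S$^{\neq}\subsetneq$ S$^{>}$, and, since $M\notin$ S$^{\neq}$ is equivalent to $\overline{M}\notin$ S$^{=}$, the stochastic language $\overline{M}$ witnesses S$^{>}\setminus$ S$^{=}\neq\emptyset$. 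As a candidate I would take a classical binary-value threshold language over $\{0,1\}$, namely $M=\{w:\mathrm{val}(w)>\theta\}$, where $\mathrm{val}(w)\in[0,1]$ is the number encoded by $w$ and $\theta$ is irrational. For irrational $\theta$ every $\mathrm{val}(w)$ is a dyadic rational distinct from $\theta$, so $\overline{M}=\{w:\mathrm{val}(w)<\theta\}$; since $\mathrm{val}$ is realizable as the acceptance probability $f_{\mathcal{P}}$ of a fixed PFA, a standard argument (using $1-f_{\mathcal{P}}$ for the complement) shows that $M$ and $\overline{M}$ are both stochastic. Note the witness is necessarily nonregular, consistently with Fact~\ref{fact:RLpropersubsetSLneqSLeq}.

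It remains to argue that $M\notin$ S$^{\neq}$, equivalently that $\overline{M}=\{w:\mathrm{val}(w)<\theta\}$ is not the exact-cutpoint language $\mathbb{L}(\mathcal{P},=\lambda)$ of any PFA. This is the step I expect to be the \emph{main obstacle}, since it is a genuine non-membership claim about the structure of S$^{=}$ rather than a formal manipulation. The route I would take is to view $f_{\mathcal{P}}(w)=\mathsf{v}_{0}\bigl(\mathsf{A}_{w_{1}}\cdots\mathsf{A}_{w_{|w|}}\bigr)\eta$ as a linear functional of the reachable row vector $\mathsf{v}_{0}\mathsf{A}_{w_{1}}\cdots\mathsf{A}_{w_{|w|}}$, so that an S$^{=}$ language is the preimage of an affine hyperplane while the threshold language is the preimage of an open half-space. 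Membership of $\overline{M}$ in S$^{=}$ would force $f_{\mathcal{P}}$ to be constantly $\lambda$ on the infinite, richly varying set $\{w:\mathrm{val}(w)<\theta\}$ while avoiding $\lambda$ off it; I would derive a contradiction by showing that the reachable vectors attached to such $w$ cannot all lie on one fixed hyperplane separating them from those with $\mathrm{val}(w)>\theta$. Carrying this out uniformly in $\theta$, rather than for an ad hoc value, is the crux, and is essentially the content of the cited result of Paz~\cite{Pa71}.
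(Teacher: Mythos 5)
Your first half is sound: GPFA value functions are indeed closed under constants, differences (direct sum) and products (tensor product), so $\bigl(f_{\mathcal{P}}(w)-\lambda\bigr)^{2}$ is a GPFA value function, it is nonnegative and vanishes exactly off $L$, and Fact~\ref{fact:GPFA-PFA} converts this into a PFA with positive cutpoint. This is precisely the machinery the paper packages as Lemma~\ref{lemma:Sneq-GPFAonesidedcutpoint0} (whose forward direction, from page 171 of \cite{Pa71}, is this same squaring trick) together with Fact~\ref{fact:GPFA-PFA}. The duality co-S$^{=}=$ S$^{\neq}$ and the observation that one witness $M$ with $M,\overline{M}\in$ S$^{>}$ and $M\notin$ S$^{\neq}$ settles both separations at once are also correct.

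However, the proposal does not prove the separations, and the separations are the entire content of the Fact: you explicitly leave $M\notin$ S$^{\neq}$ (equivalently $\overline{M}\notin$ S$^{=}$) as ``the main obstacle'' and defer it back to \cite{Pa71}. The hyperplane-versus-half-space sketch is not an argument — every S$^{=}$ language is a hyperplane preimage by definition, and nothing in your outline rules out that some PFA's reachable vectors realize $\overline{M}$ this way. Moreover, your choice of witness makes the missing step genuinely delicate rather than routine. The natural tool here is the pumping property of S$^{=}$ (Fact~\ref{fact:dieu}), but for $\overline{M}=\{w \mid \mathrm{val}(w)<\theta\}$ the pumped values $\mathrm{val}(uy^{j}v)$ form a monotone sequence converging to the eventually periodic rational $\mathrm{val}(uy^{\infty})$, so a contradiction requires, for every $n$, an eventually periodic rational $r>\theta$ with preperiod $m$ and period $p$ satisfying roughly $r-\theta<2^{-m-(n-1)p}$. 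That is a nontrivial Diophantine/digit condition on $\theta$ (e.g., arbitrarily long runs of 1s in its binary expansion), which an arbitrary irrational is not known to satisfy; so ``uniformly in $\theta$'' is not merely the crux, it may be unattainable by these means for, say, badly approximable $\theta$. The repair is to choose a witness for which pumping is clean: the paper itself, although it states this Fact without proof as an import from \cite{Pa71}, effectively supplies one in Section 4 — Lemma~\ref{lemma:L_twosided-2} shows via Fact~\ref{fact:dieu} that $L_{lt}=\{w\in\{a,b\}^{*} \mid |w|_{a}<|w|_{b}\}$ lies in neither S$^{=}$ nor S$^{\neq}$, while $L_{lt}\in$ S$^{>}$ by \cite{Ra92,Ka89} (Corollary~\ref{corollary:L_twosided-2}), which yields both separations. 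Alternatively, keep your witness but fix a concrete $\theta$ engineered to have the required approximation property, rather than an arbitrary irrational.
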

~~
\begin{fact}
       \label{fact:RLpropersubsetNQL}
       \cite{BP02,NIHK02}
       The class of regular languages is a proper subset of NQL.
\end{fact}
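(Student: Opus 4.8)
The statement combines an inclusion (every regular language lies in NQL) with a separation (some nonregular language lies in NQL). The plan is to settle the separation immediately and then devote the effort to the inclusion, which is where the content lies.

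For the separation I would use the nonregular language $ L_{neq} $ from Section \ref{section:Introduction}. By the result of Bertoni and Carpentieri \cite{BC01B}, $ L_{neq} $ is recognized by a nondeterministic MCQFA, so $ L_{neq} \in $ NMCL, and Fact \ref{fact:MCLpropersubsetQL} then gives $ L_{neq} \in $ NQL. Nonregularity of $ L_{neq} $ is routine: the strings $ a^{n} $, $ n \geq 0 $, lie in infinitely many Myhill--Nerode classes, since $ a^{n}b^{m} \in L_{neq} \iff n \neq m $. Together with the inclusion, this yields regular $ \subsetneq $ NQL.

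For the inclusion I would begin with a DFA $ A = (P,\Sigma,\delta,p_{1},F) $ for a regular language $ L $ and aim to build a KWQFA $ \mathcal{M} $ with $ f_{\mathcal{M}}(w) > 0 \iff w \in L $, i.e. $ f_{\mathcal{M}}(w) = 0 $ exactly on nonmembers. The first step is to encode each DFA state as a basis state and try to realize each symbol map $ \delta(\cdot,\sigma) $ as a unitary on the non-halting subspace. Since $ \delta(\cdot,\sigma) $ is in general many-to-one, it is not a permutation, so I would use the cutpoint-$ 0 $ setting to divert the colliding amplitude into rejecting states through the intermediate measurements permitted by Definition \ref{definition:kwqfa}, while choosing complex transition amplitudes so that, after reading $ \cent w \dollar $, the total amplitude landing in the accepting states is a sum over the finitely many computation paths that vanishes precisely when $ \delta(p_{1},w) \notin F $.

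The hard part will be making this cancellation \emph{exact} and \emph{uniform} across all nonmembers while keeping $ \mathcal{M} $ finite-state. For reversible (group) languages this difficulty does not arise---permutation automata are already unitary, placing such languages in NMCL---but an aperiodic language such as $ a^{*}b^{*} $ forces two distinct live states to merge under some symbol, so no deterministic reversible simulation can exist and the exact-zero guarantee for nonmembers must be produced entirely by quantum interference. Engineering transition amplitudes that enforce this for an arbitrary DFA is the technical heart of \cite{NIHK02,BP02}; I expect a direct per-symbol unitary construction to be cleaner than any reduction to simpler languages, precisely because the regular languages are not generated by Boolean combinations of the easily handled counting and local-testing conditions.
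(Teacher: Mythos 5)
Your separation half is correct and complete: $L_{neq}\in$ NMCL by \cite{BC01B}, NMCL $\subseteq$ NQL by Fact \ref{fact:MCLpropersubsetQL}, and $L_{neq}$ is nonregular by the Myhill--Nerode argument you give. (For what it's worth, the paper never proves this Fact at all --- it is stated as a citation to \cite{BP02,NIHK02} --- so there is no ``paper proof'' to match; the judgment has to be on self-contained correctness.) The problem is the inclusion half, which you yourself identify as ``where the content lies'': you describe a plan and then defer ``the technical heart'' to \cite{NIHK02,BP02}, so that part is not proved. Moreover, the plan misdiagnoses the difficulty. No exact cancellation by quantum interference is needed for nonmembers: a DFA, viewed as a PFA with $0$/$1$ transition matrices, already accepts every nonmember with probability exactly $0$, so any simulation that reproduces the DFA's acceptance probability up to a positive scalar factor automatically places amplitude exactly $0$ on the accepting states for nonmembers. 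The only obstacle is that $\delta(\cdot,\sigma)$ is not unitary, and that is overcome not by engineered interference but by dissipation: scale the stochastic matrix by a constant, pad it to a unitary whose extra columns feed fresh \emph{rejecting} states, and let the intermediate measurements of Definition \ref{definition:kwqfa} remove the surplus amplitude. All amplitudes stay nonnegative; nothing ever has to cancel. This is precisely the machinery of Lemma \ref{lemma:S-QL_0} (adapted from \cite{YS09D}), and Section 3.3 of the paper records the exact consequence you need: every $n$-state PFA with cutpoint $0$ has an equivalent $(2n+4)$-state KWQFA, while PFA's with cutpoint $0$ recognize precisely the regular languages \cite{Ma93}.

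Alternatively, within this paper the whole Fact is a two-line corollary requiring no construction at all: by Fact \ref{fact:RLpropersubsetSLneqSLeq} the regular languages are a proper subset of S$ ^{\neq} $, and by Theorem \ref{theorem:maintheorem} we have S$ ^{\neq} =$ NQL (the proofs of Lemmas \ref{lemma:S-QL_0} and 3.3 nowhere invoke the present Fact, so there is no circularity); hence the regular languages are a proper subset of NQL. Either route --- the cutpoint-$0$ PFA simulation or the detour through S$ ^{\neq} $ --- closes your gap; as written, your proposal establishes only the separation, not the inclusion.
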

~~\\
By Facts \ref{fact:PFA-KWQFA}-\ref{fact:GPFA-PFA}, QL $ = $ S$ ^{>} $, 
PFA's and KWQFA's have the same language recognition power with general cutpoint and 
with unbounded error (Remark \ref{remark:cutpoint-unbounded}). 
It has in fact been shown \cite{YS09A} that all one-way QFA models \cite{Pa00,Ci01,BMP03,Hi08} 
that generalize the KWQFA are also equivalent to the PFA in this regard. 
(See Subsection \ref{subsection:more-general-QFAs} for more on this.)

We are interested in the case of one-sided unbounded error, where one
of the two responses that
the machine can output about the membership of the input string in the
recognized language is
correct with certainty, and the other response has a nonzero
probability of being correct.
We say that such an automaton has \textit{positive one-sided error}
if it rejects non-members of its language with certainty.
This corresponds to recognition with cutpoint 0.
The opposite case is called \textit{negative one-sided error},
where the language in question is of the form $
\mathbb{L}(\mathcal{A},\neq 1)$,
recalling that, when $\mathcal{A}$ is a PFA or a QFA,
$f_{\mathcal{A}}$ has range $ [0,1] $.

PFA's can recognize all and only the regular languages with cutpoint $
0 $ \cite{Ma93}.
KWQFA's can do more than that, as will be characterized in the next section.

\section{Languages Recognized with One-sided Error} \label{section:MainResults}
We start the presentation of our main result by stating a fact which
will be useful in several proofs in the paper.
\\
\begin{lemma}
    \label{lemma:Sneq-GPFAonesidedcutpoint0}
   For any language $ L $, $ L \in $ S$ ^{\neq} $ if and only if
there exists a GPFA that recognizes $ L $
   with one-sided cutpoint $ 0 $.
\end{lemma}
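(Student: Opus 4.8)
The statement is a biconditional, and I would prove the two implications separately; the ``if'' direction (a GPFA with one-sided cutpoint $0$ forces $L\in$ S$ ^{\neq} $) is almost immediate from Fact~\ref{fact:GPFA-PFA}, while the ``only if'' direction carries the real content. For the ``if'' direction, suppose a GPFA $\mathcal{G}$ recognizes $L$ with one-sided cutpoint $0$; by definition this means $\mathbb{L}(\mathcal{G},<0)=\emptyset$ (equivalently $f_{\mathcal{G}}(w)\geq 0$ for every $w$) and $L=\mathbb{L}(\mathcal{G},>0)$. I would apply Fact~\ref{fact:GPFA-PFA} with $\lambda_{1}=0$ to obtain a PFA $\mathcal{P}$ and a cutpoint $\lambda_{2}\in(0,1)$ with $(\mathcal{G},0)\equiv(\mathcal{P},\lambda_{2})$. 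Cutpoint equivalence transfers all three sets, so $\mathbb{L}(\mathcal{P},<\lambda_{2})=\emptyset$ and $\mathbb{L}(\mathcal{P},>\lambda_{2})=L$; hence $\mathbb{L}(\mathcal{P},\neq\lambda_{2})=\emptyset\cup L=L$, and since $\lambda_{2}\in[0,1]$ this witnesses $L\in$ S$ ^{\neq} $.

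For the ``only if'' direction, suppose $L\in$ S$ ^{\neq} $, say $L=\mathbb{L}(\mathcal{P},\neq\lambda)$ for a PFA $\mathcal{P}$ and $\lambda\in[0,1]$. The key observation is that the real-valued function $(f_{\mathcal{P}}(w)-\lambda)^{2}$ is nonnegative everywhere and strictly positive exactly when $f_{\mathcal{P}}(w)\neq\lambda$, i.e.\ exactly on $L$; so it suffices to exhibit a GPFA $\mathcal{G}$ with $f_{\mathcal{G}}(w)=(f_{\mathcal{P}}(w)-\lambda)^{2}$, since such a $\mathcal{G}$ then recognizes $L$ with one-sided cutpoint $0$. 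I would build $\mathcal{G}$ in two steps. First I would realize $g(w)=f_{\mathcal{P}}(w)-\lambda$ as a GPFA $\mathcal{G}'$: fold the end-marker matrices of $\mathcal{P}$ into the boundary vectors (using $\mathsf{v}_{0}\mathsf{A}_{\cent}$ as initial vector and $\mathsf{A}_{\dollar}\mathbf{1}_{F}$, where $\mathbf{1}_{F}$ is the indicator vector of the accepting set $F$, as final vector, so that the GPFA computes $f_{\mathcal{P}}(w)$ over $\Sigma^{*}$), and adjoin one extra state whose transition weight is $1$ for every symbol and which contributes $-\lambda$ through the final vector, so the additive constant $-\lambda$ survives regardless of input length. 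Second I would take the Kronecker product of $\mathcal{G}'$ with itself, using transition matrices $\mathsf{A}_{\sigma}\otimes\mathsf{A}_{\sigma}$, initial vector $\mathsf{v}_{0}\otimes\mathsf{v}_{0}$, and final vector $\mathsf{f}\otimes\mathsf{f}$; by the mixed-product identity $(A\otimes B)(C\otimes D)=(AC)\otimes(BD)$, the resulting GPFA computes $g(w)\cdot g(w)=(f_{\mathcal{P}}(w)-\lambda)^{2}$, as required.

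The hard part, such as it is, lies entirely in the ``only if'' direction: the crux is representing the pointwise square of a GPFA-computable function by another GPFA, which the tensor-product construction handles. What remains is bookkeeping I would need to verify carefully rather than a conceptual obstacle, namely that folding $\mathsf{A}_{\cent}$ and $\mathsf{A}_{\dollar}$ into the boundary vectors reproduces $f_{\mathcal{P}}$ exactly, and that the adjoined state propagates weight $1$ under every symbol so the constant term $-\lambda$ is independent of $|w|$. I expect no genuine difficulty beyond these routine checks.
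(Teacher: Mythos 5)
Your proof is correct. The ``if'' direction is exactly the paper's argument: both apply Fact~\ref{fact:GPFA-PFA} at $\lambda_{1}=0$ and read $L=\mathbb{L}(\mathcal{P},\neq\lambda_{2})$ off the three-way cutpoint equivalence. The difference lies in the ``only if'' direction, where the paper does no work at all --- it simply cites page~171 of \cite{Pa71} --- whereas you reconstruct the argument: realize $f_{\mathcal{P}}(w)-\lambda$ as a GPFA by folding the end-marker matrices into the boundary vectors (initial vector $\mathsf{v}_{0}\mathsf{A}_{\cent}$, final vector $\mathsf{A}_{\dollar}\mathbf{1}_{F}$) and adjoining a weight-$1$ dummy state carrying the constant $-\lambda$, then square the resulting function with a Kronecker-product automaton so that $f_{\mathcal{G}}(w)=(f_{\mathcal{P}}(w)-\lambda)^{2}$ is nonnegative everywhere and vanishes exactly off $L$. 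This tensor-product squaring is the classical technique behind the cited result, so conceptually you are on the same path as the reference the paper leans on; what your version buys is self-containedness and an explicit construction (with roughly the square of the original number of states), while the paper's citation buys brevity. The bookkeeping you flag does check out against Definitions~\ref{definition:1pfa} and~\ref{definition:gpfa}: the dummy state is untouched by every transition matrix since its block is the $1\times 1$ identity, and the mixed-product identity collapses the tensored product of matrices to the scalar $g(w)\cdot g(w)$.
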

\begin{proof}
    The forward direction is proven on page 171 of \cite{Pa71}.
    In the reverse direction,
    if a GPFA recognizes $ L $ with one-sided cutpoint $ 0 $,
    then  $ L \in $ S$^{\neq} $ by Fact \ref{fact:GPFA-PFA}.
\end{proof}

\subsection{A characterization of NQL}
~~

\begin{lemma}
   \label{lemma:S-QL_0}
   S$ ^{\neq} \subseteq $ NQL.
\end{lemma}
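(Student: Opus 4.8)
The plan is to realize the exclusive stochastic language $L$ inside a KWQFA by embedding a generalized probabilistic automaton for $L$ into a unitary evolution via a dilation argument. By Lemma~\ref{lemma:Sneq-GPFAonesidedcutpoint0}, since $L \in$ S$^{\neq}$ there is a GPFA $\mathcal{G}=(S,\Sigma,\{\mathsf{A}_{\sigma}\},\mathsf{v}_{0},\mathsf{f})$ with $n$ states that recognizes $L$ with one-sided cutpoint $0$; that is, $f_{\mathcal{G}}(w)\geq 0$ for every $w$, and $w\in L$ exactly when $f_{\mathcal{G}}(w)>0$. (The degenerate cases $\mathsf{v}_{0}=0$ or $\mathsf{f}=0$ force $L=\emptyset$, which is handled by a trivial always-rejecting KWQFA, so I may assume $\|\mathsf{v}_{0}\|,\|\mathsf{f}\|\neq 0$.) The obstacle is that the matrices $\mathsf{A}_{\sigma}$ are arbitrary real matrices, whereas a KWQFA must evolve by unitaries; moreover $f_{\mathcal{G}}(w)$ is a bilinear form, while the acceptance probability of a QFA is a sum of squared magnitudes. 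Both gaps will be bridged at once.

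First I would scale the weights: pick a constant $c$ with $0<c\le 1/(1+\max_{\sigma}\|\mathsf{A}_{\sigma}\|)$, so that each $\mathsf{T}_{\sigma}:=c\,\mathsf{A}_{\sigma}^{\mathtt{T}}$ is a strict contraction. Each such contraction admits a Halmos unitary dilation, a $2n\times 2n$ unitary whose top-left $n\times n$ block equals $\mathsf{T}_{\sigma}$. I then build a KWQFA $\mathcal{M}$ on $2n+1$ states: the first $n$ are non-halting (mirroring the GPFA states), the next $n$ are rejecting (the extra ``dilation'' coordinates), and the last is the single accepting state $q_{acc}$. For $\sigma\in\Sigma$, let $\mathsf{U}_{\sigma}$ be this dilation extended by the identity on $q_{acc}$. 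Let $\mathsf{U}_{\cent}$ send the start state to the unit vector supported on the first $n$ coordinates equal to $\mathsf{v}_{0}^{\mathtt{T}}/\|\mathsf{v}_{0}\|$, completed arbitrarily to a unitary; and let $\mathsf{U}_{\dollar}$ have its $q_{acc}$-row equal to $\mathsf{f}^{\mathtt{T}}/\|\mathsf{f}\|$ on the first $n$ coordinates and $0$ elsewhere, again completed to a unitary.

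The key step is to trace the surviving amplitude. After reading $\cent$ the state is $\mathsf{v}_{0}^{\mathtt{T}}/\|\mathsf{v}_{0}\|$ on the non-halting block with no halting amplitude. The point of the dilation is that, at each subsequent symbol $w_i$, the unitary $\mathsf{U}_{w_i}$ maps the current non-halting vector $x$ to $\mathsf{T}_{w_i}x$ on the non-halting block plus some overflow on the rejecting (dilation) block; the intermediate measurement then discards precisely this overflow, so the surviving non-halting vector becomes exactly $\mathsf{T}_{w_i}x$. Iterating, after reading all of $w$ the non-halting vector is $\mathsf{T}_{w_{|w|}}\cdots\mathsf{T}_{w_1}(\mathsf{v}_{0}^{\mathtt{T}}/\|\mathsf{v}_{0}\|)$, and applying $\mathsf{U}_{\dollar}$ places amplitude $\frac{c^{|w|}}{\|\mathsf{f}\|\,\|\mathsf{v}_{0}\|}\,f_{\mathcal{G}}(w)$ on $q_{acc}$, using the scalar identity $\mathsf{f}^{\mathtt{T}}\mathsf{T}_{w_{|w|}}\cdots\mathsf{T}_{w_1}\mathsf{v}_{0}^{\mathtt{T}}=c^{|w|}\,\mathsf{v}_{0}\mathsf{A}_{w_1}\cdots\mathsf{A}_{w_{|w|}}\mathsf{f}$.

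Finally I would read off the probabilities. Since $q_{acc}$ is the only accepting state, $f_{\mathcal{M}}(w)=\frac{c^{2|w|}}{\|\mathsf{f}\|^{2}\|\mathsf{v}_{0}\|^{2}}\big(f_{\mathcal{G}}(w)\big)^{2}$. Because $c\neq 0$, this is strictly positive exactly when $f_{\mathcal{G}}(w)\neq 0$, and the one-sidedness of $\mathcal{G}$ (namely $f_{\mathcal{G}}\geq 0$) makes this equivalent to $f_{\mathcal{G}}(w)>0$, i.e. to $w\in L$, while nonmembers get acceptance probability $0$. Hence $\mathbb{L}(\mathcal{M},0)=L$ and $L\in$ NQL. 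I expect the main obstacle to be the bookkeeping showing that the intermediate rejection measurements discard exactly the dilation overflow and nothing else, so that the surviving amplitude faithfully realizes the GPFA matrix product; the squaring in the last step is harmless precisely because recognition is one-sided.
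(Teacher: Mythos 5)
Your proof is correct, and it reaches the result by a genuinely different implementation of a related idea. The paper works directly from the PFA definition of S$ ^{\neq} $: given a PFA $ \mathcal{P} $ with $ L=\mathbb{L}(\mathcal{P},\neq\frac{1}{2}) $, it first builds an explicit $ (n+3) $-state probabilistic gadget whose final vector has an entry $ \frac{2f_{\mathcal{P}}(w)-1}{4} $ that vanishes exactly off $ L $, and then unitarizes the resulting matrices with the iterative template of \cite{YS09D}: rescale, append a lower-triangular block to orthogonalize the rows, a diagonal block to equalize their lengths, and complete to a unitary, declaring all auxiliary states rejecting. You instead invoke the forward direction of Lemma~\ref{lemma:Sneq-GPFAonesidedcutpoint0} (Paz) to start from a GPFA recognizing $ L $ with one-sided cutpoint $ 0 $, and you replace the ad hoc unitarization by the standard Halmos dilation of the scaled matrices $ c\mathsf{A}_{\sigma}^{\mathtt{T}} $, preparing $ \mathsf{v}_{0} $ on $ \cent $ and reading out $ \mathsf{f} $ on $ \dollar $. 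Both constructions share the essential quantum mechanism: the dilation overflow is funneled into rejecting states at each intermediate measurement, and the only accepting amplitude, generated at the $ \dollar $ step, is a nonzero multiple of the simulated machine's value, so its square is positive iff $ w \in L $. What your route buys is modularity: the cutpoint manipulation is outsourced to Paz's theorem (which the paper states and uses elsewhere anyway) and the unitarization to a textbook operator-theory fact, avoiding both the $ (n+3) $-state gadget and the row-by-row completion procedure; the paper's route is more self-contained relative to the PFA definition of S$ ^{\neq} $ and makes the state overhead of the simulation explicit, which it later exploits in its succinctness discussion. One small remark: your closing claim that the squaring is harmless ``precisely because recognition is one-sided'' gives the wrong reason--all that is needed is the equivalence $ w \in L \Leftrightarrow f_{\mathcal{G}}(w) \neq 0 $, which your setup provides; indeed, in the paper's own construction the accepting amplitude $ \frac{2f_{\mathcal{P}}(w)-1}{4} $ is negative for some members of $ L $, and the argument goes through unchanged.
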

\begin{proof}
   If $ L \in S^{\neq}$, then there exists an $ n $-state  PFA
   $ \mathcal{P} = (S,\Sigma,\{\mathsf{A}_{\sigma \in \Gamma} \},F)
$ such that
           $ L=\mathbb{L}(\mathcal{P},\neq\frac{1}{2}) $.
   We define $ S^{'} $, $ \mathsf{v}_{0}^{'} $, and
   $ \{ \mathsf{A}_{\sigma \in \Gamma}^{'} \} $ as follows:
   \begin{enumerate}
           \item $ S^{'} = S \cup \{ s_{n+1}, s_{n+2}, s_{n+3} \} $;
           \item $ \mathsf{v}_{0}^{'}=(1,0,\cdots,0) $ is a $ 1
\times (n+3) $-dimensional row vector;
           \item Each $ \mathsf{A}_{\sigma}^{'} $ is a $ (n+3)
\times (n+3) $-dimensional matrix:
           \[
           \mathsf{A}_{\cent}^{'}= \left(
                   \begin{array}{c|ccc}
                           \frac{1}{2}\mathsf{A}_{\cent}[r_{1}] & 0
& 0 & \frac{1}{2} \\
                           \hline
                           1~0~\cdots~0 & 0 & 0 & 0 \\
                           \vdots & & \vdots \\
                           1~0~\cdots~0 & 0 & 0 & 0 \\
                   \end{array}
           \right), ~~~~
           \mathsf{A}_{\sigma \in \Sigma}^{'}= \left(
                   \begin{array}{c|ccc}
                   \\
                           \mathsf{A}_{\sigma} &  & 0_{n \times 3} &  \\\\
                           \hline
                            & 1 & 0 & 0 \\
                           0_{3 \times n}   & 0 & 1 & 0 \\
                            & 0 & 0 & 1 \\
                   \end{array}
           \right),
           \]
           \[
           \mathsf{A}_{\dollar}^{'}=
           \left(
                   \begin{array}{c|ccc}
                   \\
                           \mathsf{A}_{\dollar} &  & 0_{n \times 3} &  \\\\
                           \hline
                            & 1 & 0 & 0 \\
                           0_{3 \times n}   & 0 & 1 & 0 \\
                            & 0 & 0 & 1 \\
                   \end{array}
           \right)
           \left(
                   \begin{array}{c|ccc}
                           & t_{1,1} & t_{1,2} & 0 \\
                           0_{n \times n} & & \vdots & \\
                           & t_{n,1} & t_{n,2} & 0 \\
                           \hline
                            & 1 & 0 & 0 \\
                           0_{3 \times n}   & 0 & 1 & 0 \\
                            & \frac{-1}{2} & \frac{1}{2} & 0 \\
                   \end{array}
           \right),
            \]
   where $ \mathsf{A}_{\cent}[r_{1}] $ is the first row of $
\mathsf{A}_{\cent} $;
           $ t_{i,1}=1 $ and $ t_{i,2}=0 $ when $ s_{i} \in F $, and
           $ t_{i,1}=0 $ and $ t_{i,2}=1 $ when $ s_{i} \notin F $
           for $ 1 \le i \le n $.
   \end{enumerate}
   For a given input $ w \in \Sigma^{*} $, $ \mathsf{w}=\cent w
\dollar $, let
   $ \mathsf{v}_{|\mathsf{w}|}^{'} = \mathsf{v}_{0}^{'}
\mathsf{A}_{\cent}^{'} \mathsf{A}_{w_{1}}^{'} \cdots
   \mathsf{A}_{w_{|w|}}^{'} \mathsf{A}_{\dollar}^{'} $.
   It is easily verified that this computation ``imitates"
   the processing of $w$ by $\mathcal{P}$; the first $n$ entries of
   the manipulated vector $\mathsf{v}^{'}$  contain exactly the
state vector of $\mathcal{P}$
   (multiplied by $ \frac{1}{2} $) in the corresponding steps of
its execution. The last matrix multiplication results in
   \[ \mathsf{v}_{|\mathsf{w}|}^{'}=\left(0_{1 \times n} ~\vline~
   \frac{2f_{\mathcal{P}}(w)-1}{4},\frac{3-2f_{\mathcal{P}}(w)}{4},0
\right).\]
   The $ (n+1)^{st} $ entry of $ \mathsf{v}_{|\mathsf{w}|}^{'} $
equals $ 0 $ if and only if $ w \notin L $.

   Using a modified version of the PFA simulation method described
in \cite{YS09D}, we can construct a
   KWQFA $ \mathcal{M}=(Q, \Sigma, \{\mathsf{U}_{\sigma \in \Gamma} \},
   Q_{acc}, Q_{rej} ) $ recognizing $ L $ with cutpoint 0:
   For each $ \sigma \in \Gamma $, $ \mathsf{U}_{\sigma} $ is built
according to the template
   \[
           \mathsf{U}_{\sigma}^{\mathtt{T}}=\left(
                   \begin{array}{c|c|c}
                           &
                           \\
                           ~\mathsf{c}_{\sigma}\mathsf{A}_{\sigma}^{'}~ &
                           ~\mathsf{c}_{\sigma}B_{\sigma}~ &
                           ~\mathsf{c}_{\sigma}C_{\sigma}~
                            \\
                           &  \\
                           \hline
                           \multicolumn{3}{c}{} \\
                           \multicolumn{3}{c}{D_{\sigma}} \\
                           \multicolumn{3}{c}{}
                   \end{array}
           \right),
    \]
    where $ B_{\sigma}=[b_{i,j}] $ is a lower triangular matrix, and
    $ C_{\sigma}=[c_{i,j}] $ is a diagonal matrix. The entries of $
\mathsf{U}_{\sigma} $ are computed iteratively using the following
procedure:
     \begin{enumerate}
             \item The entries of $ B_{\sigma} $ and $ C_{\sigma} $
are set to 0.
             \item The entries of $ B_{\sigma} $ are updated to make
the rows of
                     $ \left( \mathsf{A}_{\sigma}^{'} \mid
B_{\sigma} \right) $ pairwise
orthogonal.
                     Specifically, \\\\
                     \begin{tabular}{ll}
                             ~~~~ & for $ i=1 \cdots n+2 $ \\
                                      & ~~~~set $ b_{i,i}=1 $ \\
                                      & ~~~~for $ j=i+1 \cdots n+3 $ \\
                                      & ~~~~~~~~set $ b_{j,i} $ to
some value so that the $ i $th and $ j $th
                                             rows become orthogonal \\
                                      & set $l_{max} $ to the
maximum of the lengths (norms) of the rows of
                                                             $
\left( \mathsf{A}_{\sigma}^{'} \mid B_{\sigma} \right) $
                     \end{tabular}
                     \\\\
             \item The diagonal entries of $ C_{\sigma} $ are
updated to make the
length of each row of
                     $ \left( \mathsf{A}_{\sigma}^{'} \mid
B_{\sigma} \mid C_{\sigma}
\right) $ equal to $ l_{max} $.
                     Specifically,\\\\
                     \begin{tabular}{ll}
                              ~~~~ & for $ i=1 \cdots n+3 $ \\
                                       & ~~~~set $ l_{i} $ to the
current length of the $ i $th row of
                                       $ \left(
\mathsf{A}_{\sigma}^{'} \mid B_{\sigma} \right) $
                                        \\
                                     & ~~~~set $ c_{i,i} $ to $
\sqrt{l_{max}^{2}-l_{i}^{2}} $
                     \end{tabular}
                     \\\\
             \item Set $ c_{\sigma} $ to $ \dfrac{1}{l_{max}} $.
             \item The entries of $ D_{\sigma} $ are selected to make $
\mathsf{U}_{\sigma}^{\mathtt{T}} $
                     a unitary matrix. The transpose accounts for
the difference
                     between the probabilistic and quantum vector notations.
   \end{enumerate}
   The state set $ Q = Q_{non} \cup Q_{acc} \cup Q_{rej}  $ is specified as:
   \begin{enumerate}
           \item $ q_{n+1} \in Q_{acc} $ corresponds to state $ s_{n+1} $;
           \item $ q_{n+2} \in Q_{rej} $ corresponds to state $ s_{n+2} $;
           \item $ \{q_{1},\cdots,q_{n},q_{n+3}\} \in Q_{non} $
correspond to the remaining states of
           $ S^{'} $, where $ q_{1} $ is the start state;
           \item All the new states that are defined during the
construction of
           $ \{\mathsf{U}_{\sigma \in \Gamma}\} $ are rejecting ones.
   \end{enumerate}
   $ \mathcal{M} $ simulates the computation of $ \mathcal{P} $
   for a given input string $ w \in \Sigma^{*} $ by
representing the probability of each state $ s_{j}$ by the amplitude
of the corresponding state $ q_{j}$; specifically,
this amplitude equals
$\mathsf{c}_{\cent} \left( \prod_{i=1}^{k} \mathsf{c}_{w_{i}} \right)
\times \mathsf{v}^{'}(j) $
immediately after the $(k+1)^{st}$ step of the computation
\cite{YS09D}, where $ k \le |w| $.
   The transitions from the $ 2n+6 $ states added
     during the construction of $ \mathsf{U}_{\sigma \in \Gamma } $ for
ensuring unitarity do not interfere
     with this simulation, since the computation halts immediately on the
     ``branches" where these states are entered.
   Therefore, the top $ n+3$ entries of the state vector of $
\mathcal{M} $ equal
   \[ \mathsf{c}_{\cent} \left( \prod_{i=1}^{|w|}
\mathsf{c}_{w_{i}} \right) \mathsf{c}_{\dollar}
           \left(0_{1 \times n} ~\vline~\frac{2f_{\mathcal{P}}(w)-1}{4},
           \frac{3-2f_{\mathcal{P}}(w)}{4},0 \right)^{\mathtt{T}}
    \]
just before the last measurement on the right end-marker.
Since the amplitude of the only accepting state is nonzero if and only
if $ w \in L $,
$ L $ is recognized by $ \mathcal{M} $ with cutpoint $ 0 $.
\end{proof}
~~
\begin{lemma}
   NQL$ \subseteq  $ S$ ^{\neq} $.
\end{lemma}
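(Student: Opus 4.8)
The plan is to obtain the result as a direct composition of two facts already in hand: the translation of KWQFA's into GPFA's under cutpoint separation (Fact \ref{fact:KWQFA-GPFA}) and the characterization of S$^{\neq}$ via GPFA's recognizing with one-sided cutpoint $0$ (Lemma \ref{lemma:Sneq-GPFAonesidedcutpoint0}). So let $L \in$ NQL, witnessed by a KWQFA $\mathcal{M}$ with $L = \mathbb{L}(\mathcal{M},0) = \mathbb{L}(\mathcal{M},>0)$. The whole task reduces to producing a GPFA that recognizes $L$ with one-sided cutpoint $0$, after which Lemma \ref{lemma:Sneq-GPFAonesidedcutpoint0} finishes immediately.

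The step I would carry out first is to record that the acceptance function of any KWQFA takes values in $[0,1]$, so $f_{\mathcal{M}}(w) \ge 0$ for every $w$, whence $\mathbb{L}(\mathcal{M},<0) = \emptyset$. This is exactly the condition that $\mathcal{M}$ recognizes $L$ with \emph{one-sided} cutpoint $0$, and it is the observation that makes the nondeterministic (cutpoint-$0$) acceptance automatically one-sided. Next I would invoke Fact \ref{fact:KWQFA-GPFA} with $\lambda = 0$ to get a GPFA $\mathcal{G}$ with $(\mathcal{M},0) \equiv (\mathcal{G},0)$. By the definition of cutpoint-separation equivalence, the three subsets $\mathbb{L}(\cdot,<0)$, $\mathbb{L}(\cdot,=0)$, $\mathbb{L}(\cdot,>0)$ coincide for the pairs $(\mathcal{M},0)$ and $(\mathcal{G},0)$; in particular $\mathbb{L}(\mathcal{G},<0) = \mathbb{L}(\mathcal{M},<0) = \emptyset$ and $\mathbb{L}(\mathcal{G},>0) = \mathbb{L}(\mathcal{M},>0) = L$. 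Thus $\mathcal{G}$ recognizes $L$ with one-sided cutpoint $0$, and the reverse direction of Lemma \ref{lemma:Sneq-GPFAonesidedcutpoint0} yields $L \in$ S$^{\neq}$.

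I do not expect a genuine obstacle here, since everything is a chaining of prior results; the only point requiring care is the preservation of one-sidedness. The content that must not be glossed over is precisely that cutpoint-separation equivalence carries the emptiness of the ``$<\lambda$'' stratum from $\mathcal{M}$ to $\mathcal{G}$, so that the GPFA inherits the one-sided property rather than merely recognizing $L$ with an ordinary cutpoint. Everything else, including the quantitative construction inside Fact \ref{fact:KWQFA-GPFA}, is supplied by the cited results and need not be reopened.

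Combined with Lemma \ref{lemma:S-QL_0}, this establishes the characterization NQL $=$ S$^{\neq}$.
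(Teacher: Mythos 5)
Your proof is correct and follows essentially the same route as the paper: apply Fact \ref{fact:KWQFA-GPFA} at cutpoint $0$ and then the reverse direction of Lemma \ref{lemma:Sneq-GPFAonesidedcutpoint0}. Your explicit verification that $f_{\mathcal{M}} \ge 0$ makes the cutpoint-$0$ recognition one-sided, and that cutpoint-separation equivalence transfers this one-sidedness to the GPFA, is exactly the point the paper leaves implicit.
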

\begin{proof}
   By Fact \ref{fact:KWQFA-GPFA}, there exists a GPFA with one-sided cutpoint 0
   for any member $L$ of NQL. By Lemma \ref{lemma:Sneq-GPFAonesidedcutpoint0},
   $L$ is an exclusive stochastic language.
\end{proof}
~~
\begin{theorem}
   \label{theorem:maintheorem}
   S$ ^{\neq} $ = NQL.
\end{theorem}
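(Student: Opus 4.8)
The plan is to obtain the asserted equality of language classes by the standard two-inclusion argument, since for sets $X = Y$ holds iff $X \subseteq Y$ and $Y \subseteq X$. Conveniently, both inclusions have already been isolated as the two lemmas immediately preceding this statement, so the theorem proper follows by citing them in tandem; the genuine content resides inside those lemmas rather than in the combination step itself.

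For S$^{\neq} \subseteq$ NQL I would invoke Lemma \ref{lemma:S-QL_0}, whose proof is a real construction. Starting from a PFA $\mathcal{P}$ with $L = \mathbb{L}(\mathcal{P}, \neq \tfrac{1}{2})$, one builds an $(n+3)$-state machine that replays $\mathcal{P}$'s state vector (scaled by $\tfrac{1}{2}$) and engineers the final end-marker transition so that one designated coordinate of the output vector is nonzero exactly when $w \in L$. Passing to a KWQFA is then done by the amplitude-encoding scheme of \cite{YS09D}: each transition matrix is padded with a lower-triangular block $B_{\sigma}$ and a diagonal block $C_{\sigma}$ (together with a completing block $D_{\sigma}$) so as to become unitary, all spurious branches being routed into rejecting states. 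The accepting amplitude is then nonzero precisely on $L$, yielding cutpoint-$0$ recognition.

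For the reverse inclusion NQL $\subseteq$ S$^{\neq}$ I would chain two ready-made results: given $L \in$ NQL, Fact \ref{fact:KWQFA-GPFA} turns the recognizing KWQFA into a GPFA recognizing $L$ with one-sided cutpoint $0$, and Lemma \ref{lemma:Sneq-GPFAonesidedcutpoint0} then certifies that any language recognized in this way is exclusive stochastic.

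The only delicate point — and thus the sole plausible obstacle — sits wholly within Lemma \ref{lemma:S-QL_0}: confirming that the iterative orthogonalization truly yields unitary matrices while preserving the simulation invariant, and verifying that the final transition produces the advertised coordinates $\tfrac{2f_{\mathcal{P}}(w)-1}{4}$ and $\tfrac{3-2f_{\mathcal{P}}(w)}{4}$, so that the accepting amplitude vanishes off $L$. Granting this bookkeeping, the theorem is immediate.
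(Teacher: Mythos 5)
Your proposal matches the paper's own treatment exactly: the theorem is stated without a separate proof, following immediately from Lemma \ref{lemma:S-QL_0} (S$^{\neq} \subseteq$ NQL, via the scaled-PFA construction and the unitary-completion simulation of \cite{YS09D}) and the subsequent lemma giving NQL $\subseteq$ S$^{\neq}$ by chaining Fact \ref{fact:KWQFA-GPFA} with Lemma \ref{lemma:Sneq-GPFAonesidedcutpoint0}. Your summary of where the real technical content lies is also accurate, so no further comment is needed.
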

~~
\begin{corollary}
      \label{corollary:SLeq-equal-coNQL}
   S$ ^{=} $ is precisely the class of languages that can be recognized
   with negative one-sided error by KWQFA's.
\end{corollary}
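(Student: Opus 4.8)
The plan is to obtain the corollary from Theorem \ref{theorem:maintheorem} by a complementation argument. Since $ \mathbb{L}(\mathcal{P},=\lambda)=\overline{\mathbb{L}(\mathcal{P},\neq\lambda)} $ for every PFA $ \mathcal{P} $ and every $ \lambda $, the class S$ ^{=} $ is exactly the class of complements of S$ ^{\neq} $ languages, i.e. S$ ^{=} $ $=$ co-S$ ^{\neq} $; by the theorem S$ ^{\neq} $ $=$ NQL, so S$ ^{=} $ $=$ co-NQL. A KWQFA recognizes a language $ L $ with negative one-sided error exactly when the members of $ L $ are accepted with certainty, that is, when $ L=\mathbb{L}(\mathcal{M},=1) $ (equivalently $ \overline{L}=\mathbb{L}(\mathcal{M},\neq 1) $). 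The task thus reduces to showing that the class of languages of this form is precisely co-NQL $=$ S$ ^{=} $, which I would establish by two inclusions.

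The intuition driving both inclusions is that interchanging the accepting and rejecting states of a KWQFA $ \mathcal{M} $ yields a KWQFA $ \mathcal{M}^{'} $ whose step measurements project onto the same three subspaces with the labels ``accept'' and ``reject'' swapped; hence the acceptance probability of $ \mathcal{M}^{'} $ equals the rejection probability of $ \mathcal{M} $, and when $ \mathcal{M} $ halts with probability $ 1 $ this is $ f_{\mathcal{M}^{'}}=1-f_{\mathcal{M}} $, so that $ \mathbb{L}(\mathcal{M},=1)=\mathbb{L}(\mathcal{M}^{'},=0)=\overline{\mathbb{L}(\mathcal{M}^{'},>0)} $ lands in co-NQL. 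For S$ ^{=}\subseteq\{\mathbb{L}(\mathcal{M},=1)\} $ I would take $ L\in $ S$ ^{=} $, use $ \overline{L}\in $ S$ ^{\neq} $ $=$ NQL to realize $ \overline{L} $ by the specific KWQFA constructed in Lemma \ref{lemma:S-QL_0}, observe that this machine halts with certainty (its non-halting states all carry zero amplitude just before the final measurement, so all probability is measured into halting states), and then swap its accepting and rejecting states to recognize $ L $ with negative one-sided error.

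The main obstacle is that a general KWQFA need not halt with probability $ 1 $, so for an arbitrary machine the bare swap leaves $ f_{\mathcal{M}^{'}} $ short of $ 1-f_{\mathcal{M}} $ by the residual non-halting probability, and the reverse inclusion cannot simply swap the given machine. To prove $ \{\mathbb{L}(\mathcal{M},=1)\}\subseteq $ S$ ^{=} $ with no halting hypothesis, I would instead pass to the GPFA simulation underlying Fact \ref{fact:KWQFA-GPFA}, which reproduces $ f_{\mathcal{M}} $ exactly, and adjust its vectors to build a GPFA $ \mathcal{G} $ with $ f_{\mathcal{G}}=1-f_{\mathcal{M}}\ge 0 $. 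Then $ \overline{L}=\mathbb{L}(\mathcal{M},\neq 1)=\mathbb{L}(\mathcal{G},>0) $ and $ \mathbb{L}(\mathcal{G},<0)=\emptyset $, so $ \mathcal{G} $ recognizes $ \overline{L} $ with one-sided cutpoint $ 0 $; Lemma \ref{lemma:Sneq-GPFAonesidedcutpoint0} then gives $ \overline{L}\in $ S$ ^{\neq} $, whence $ L\in $ co-S$ ^{\neq} $ $=$ S$ ^{=} $. Combining the two inclusions yields the corollary.
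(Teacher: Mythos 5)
Your proposal is correct, and it follows the route the paper itself intends: the corollary is stated there with no proof at all, as an immediate consequence of Theorem \ref{theorem:maintheorem} via the identity S$^{=}$ $=$ co-S$^{\neq}$. What you supply beyond the paper is the bridge between ``co-NQL'' and ``recognized with negative one-sided error by a KWQFA,'' and this bridge genuinely needs an argument: as you observe, a KWQFA need not halt with probability $1$, so swapping the accepting and rejecting states of an arbitrary machine for $\overline{L}$ does not by itself yield acceptance probability $1-f$. Your two fixes are exactly the right ones. For the inclusion S$^{=}\subseteq\{\mathbb{L}(\mathcal{M},=1)\}$ you check that the specific machine of Lemma \ref{lemma:S-QL_0} leaves zero amplitude on every non-halting state just before the final measurement (true: the top block of its final state vector is proportional to $\left(0_{1\times n}\mid \frac{2f_{\mathcal{P}}(w)-1}{4},\frac{3-2f_{\mathcal{P}}(w)}{4},0\right)$ and all auxiliary states are rejecting), so the accept/reject swap is sound for that machine; for the converse inclusion you avoid any halting hypothesis by passing to the exact GPFA simulation of $f_{\mathcal{M}}$ and invoking Lemma \ref{lemma:Sneq-GPFAonesidedcutpoint0} on $1-f_{\mathcal{M}}$. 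Two cosmetic points, neither a gap: Fact \ref{fact:KWQFA-GPFA} as stated only gives cutpoint equivalence for $\lambda\in[0,1)$, so you are right to appeal to the underlying exact simulation rather than to the fact as a black box; and realizing $f_{\mathcal{G}}=1-f_{\mathcal{M}}$ requires one extra state for the affine shift (a direct-sum construction), not merely an adjustment of the initial and final vectors.
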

~~\\
The superiority of KWQFA's over PFA's in the one-sided error setting
now follows from
Fact \ref{fact:RLpropersubsetSLneqSLeq}. By Fact
\ref{fact:SLneq-propersubsetSL}, there exist
languages that KWQFA's can recognize with two-sided, but not one-sided error.
The class of these languages is precisely 
$ ( \mbox{S} ^{>} \cup \mbox{co-S} ^{>} ) \setminus ( \mbox{S} ^{\neq} \cup \mbox{S} ^{=} ) $ (Remark 2.1).
Note that the above results also establish that the class of languages
recognized by NQFA's is not closed under complementation (Fact \ref{fact:closure-properties-Seq-Sneq}).
\subsection{More general QFA models} \label{subsection:more-general-QFAs}
Several one-way QFA models (like \cite{Na99,Pa00,Ci01,BMP03,Hi08}, and
the one-way
version of the machines of \cite{AW02},) that generalize the KWQFA have
appeared in the literature. In the bounded-error case, some of these
generalized machines recognize more languages than the KWQFA. We claim
that the classes of languages recognized by the nondeterministic
versions of all these automata are identical to each other, and
they coincide with NQL.

We demonstrate this fact for one of the most general models, namely,
the quantum finite automaton with ancilla qubits (QFA-A) \cite{Pa00},
which can simulate all known one-way QFA models.
Let us give the name QFA-A$ _{0} $ to the class of
languages recognized with cutpoint 0 by QFA-A's.
For any QFA-A $ \mathcal{M} $, there exists a GPFA that computes
exactly the same
acceptance probability function as $ \mathcal{M} $ \cite{YS09G},  so
QFA-A$ _{0}  \subseteq $ S$ ^{\neq} $ by Lemma
\ref{lemma:Sneq-GPFAonesidedcutpoint0}.
Since any KWQFA can be simulated by a QFA-A, NQL $ \subseteq $
QFA-A$ _{0} $.
Therefore, QFA-A$ _{0} = $ NQL$ = $ S$ ^{\neq} $.

\subsection{Space efficiency of QFA's with cutpoint 0}
It is well known \cite{AF98,MPP01} that some infinite families of languages can
be recognized with one-sided bounded error by just tuning the
transition amplitudes of a QFA
with a constant number of states,
whereas the sizes of the corresponding PFA's grow without bound.
After a simple example, we will argue that this advantage
is also valid in the unbounded error case.
\\
\begin{definition} For $ m \in \mathbb{Z}^{+} $, $ L_{m} \subseteq
\{a\}^{*} $ is defined as
    \[ L_{m}=\{a^{i} \mid i \mod(m) \neq 0 \}. \]
\end{definition}
~~
\begin{theorem}
     \label{theorem:L_m}
     For $ m>1 $, $ L_{m} $ can be recognized by a 2-state
MCQFA\footnote{There is an equivalent 4-state KWQFA.}
     with cutpoint $ 0 $.
\end{theorem}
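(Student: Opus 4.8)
The plan is to realize the ``divisibility by $m$'' test on a single qubit whose amplitude is rotated by a fixed angle each time the letter $a$ is scanned. I would take $Q=\{q_1,q_2\}$ with $q_1$ the start state and $Q_{acc}=\{q_2\}$, so the state space is the two-dimensional space spanned by $\ket{q_1}$ and $\ket{q_2}$, and the acceptance probability of an input is the squared modulus of the amplitude on $\ket{q_2}$ after the whole tape $\cent a^i \dollar$ has been read. I would set $\mathsf{U}_{\cent}=\mathsf{U}_{\dollar}=I$, so that the end-markers act trivially, and let $\mathsf{U}_{a}$ be the real rotation by angle $\theta=\pi/m$,
\[
\mathsf{U}_{a}=\begin{pmatrix}\cos\theta & -\sin\theta\\ \sin\theta & \cos\theta\end{pmatrix}.
\]
Each of these matrices is unitary, so $\mathcal{M}$ is a legitimate $2$-state MCQFA.

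The next step is to compute the acceptance probability explicitly. Starting from $\ket{u_0}=\ket{q_1}=(1,0)^{\mathtt{T}}$ and applying $\mathsf{U}_a$ exactly $i$ times (the end-marker operations being the identity) yields the final state $(\cos i\theta,\sin i\theta)^{\mathtt{T}}$, so that
\[
f_{\mathcal{M}}(a^i)=\bigl|\braket{q_2}{u}\bigr|^2=\sin^2\!\frac{i\pi}{m}.
\]
It then remains to read off the recognized language. Since $\sin(i\pi/m)=0$ holds precisely when $i\pi/m$ is an integer multiple of $\pi$, i.e. precisely when $m\mid i$, we obtain $f_{\mathcal{M}}(a^i)=0$ iff $i\bmod m=0$, and $f_{\mathcal{M}}(a^i)>0$ otherwise. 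Hence $\mathbb{L}(\mathcal{M},>0)=\{a^i \mid i\bmod m\neq 0\}=L_m$, which is exactly recognition with cutpoint $0$. In particular $i=0$ (the empty string) is correctly rejected, and the machine is uniform in $m$: only the rotation angle $\theta$ depends on $m$.

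The construction is short because the cutpoint-$0$ criterion only distinguishes zero from nonzero acceptance probability, so no estimate on how far $f_{\mathcal{M}}(a^i)$ sits from the cutpoint is needed. The one place that calls for a little care is the choice of angle: the naive guess $\theta=2\pi/m$ fails for even $m$, since then $\sin^2(2\pi i/m)=0$ already whenever $i$ is a multiple of $m/2$, thereby testing divisibility by $m/2$ rather than by $m$. Taking the half-turn $\theta=\pi/m$ repairs this, because it makes the zero set of $i\mapsto\sin(i\pi/m)$ coincide exactly with the multiples of $m$. Verifying this elementary fact about the sine's zeros is essentially the only content of the proof; there is no genuine obstacle beyond it.
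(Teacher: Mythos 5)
Your construction is exactly the paper's: a two-state MCQFA whose transition on $a$ is a rotation by $\pi/m$ in the plane spanned by the start and accepting states, with acceptance probability $\sin^2(i\pi/m)$ vanishing precisely when $m \mid i$. Your write-up merely makes explicit the computation (and the trivial end-marker transitions) that the paper leaves implicit, so the two proofs coincide.
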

\begin{proof}
    $ \mathcal{M} $ begins the computation at state $ q_{0} $, and
each transition with the symbol $a$
    corresponds to a rotation\footnote{For details of a similar
construction for a nonregular language,
    see \cite{BC01B}.} by angle $ \frac{\pi}{m} $ in the $
\ket{q_{0}} $-$ \ket{q_{1}} $ plane,
    where $ q_{1} $ is the accepting state.
\end{proof}

For any positive $n$, it is known \cite{Ma93} that every $n$-state PFA
with cutpoint $ 0 $ has an equivalent
nondeterministic finite automaton with the same number of states.
Therefore, only finitely many distinct languages can be recognized
with one-sided
unbounded error by PFA's with at most $n$ states.

Combining this with the fact that any $n$-state PFA with cutpoint $ 0
$ can be simulated by a
KWQFA with $ 2n+4 $ states using a
simple adaptation of the technique of \cite{YS09D},
the superiority of QFA's over PFA's in this regard is
established\footnote{Note that a QFA-A can realize this simulation
with just $n$ states.}.

\section{S$ ^{\neq} $, S$ ^{=} $, and Languages Recognized with Two-sided Error}
To gain a better understanding of the classes of languages
recognizable by positive one-sided, negative one-sided, and
necessarily two-sided error by QFA's, we examine some examples from
each of those families. Bertoni and Carpentieri \cite{BC01B} showed
that $ L_{neq} $ is in NMCL, and that its complement, say, $
L_{eq}=\{w\in\{a,b\}^{*} \mid |w|_{a}
=|w|_{b}\} $,
is not in MCL. Now that we have Theorem \ref{theorem:maintheorem}, we
can use the well-known
results \cite{Pa71, Ma93} from the PFA literature that state that
$ L_{eq} \in $  S$ ^{=} $, $ L_{neq} \in $ S$ ^{\neq} $, but not
vice versa, to conclude that stronger QFA variants also can not recognize $ L_{eq} $ with positive
one-sided error, and neither can they recognize $ L_{neq} $ with
negative one-sided error. Similarly, L\={a}ce \textit{et al.} \cite{LSF09}
proved recently that the complement of the palindrome language $
L_{pal}=\{w \in \{a,b\}^{*}
\mid w=w^{r} \} $ is in NQL. We can show the corresponding
result for $ L_{pal} $ using the following fact:
\\
\begin{fact} \cite{Di71}
   \label{fact:dieu}
   Let $ L \in $ S$ ^{=} $. Then there exists a natural number $ n
\ge 1 $, such that for any strings
   $ u, v, y \in \Sigma^{*} $,
   \[ \mbox{if } uv, uyv, \cdots, uy^{n-1}v \in L, \mbox{then }
uy^{*}v \subseteq L. \]
\end{fact}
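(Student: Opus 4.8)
The plan is to exploit the PFA definition of S$^{=}$ together with the stochasticity of the transition matrices, so that the acceptance probabilities of the pumped strings form a linear recurrence whose only solution consistent with the hypothesis is the constant one. First I would fix, using $ L\in $ S$^{=}$, an $ n $-state PFA $ \mathcal{P}=(S,\Sigma,\{\mathsf{A}_{\sigma\in\Gamma}\},F) $ and a cutpoint $ \lambda $ with $ L=\mathbb{L}(\mathcal{P},=\lambda) $, and claim that the number $ n $ in the statement can be taken to be this number of states (note that it then works uniformly for all $ u,v,y $, since it depends only on $ \mathcal{P} $). For fixed $ u,v,y $, writing $ \mathsf{A}_{x}=\mathsf{A}_{x_{1}}\cdots\mathsf{A}_{x_{|x|}} $, I would set $ \alpha=\mathsf{v}_{0}\mathsf{A}_{\cent}\mathsf{A}_{u} $ (a row vector absorbing the left end-marker and the prefix) and $ \beta=\mathsf{A}_{v}\mathsf{A}_{\dollar}\chi_{F} $ (a column vector absorbing the suffix, the right end-marker, and the characteristic vector $ \chi_{F} $ of $ F $), so that the pumped acceptance probabilities become $ g(k):=f_{\mathcal{P}}(uy^{k}v)=\alpha\,\mathsf{A}_{y}^{\,k}\,\beta $. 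This reduces the whole question to the behaviour of the single $ n\times n $ matrix $ \mathsf{A}_{y} $.

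Next I would invoke the Cayley--Hamilton theorem. If $ p(x)=x^{n}-c_{n-1}x^{n-1}-\cdots-c_{0} $ is the characteristic polynomial of $ \mathsf{A}_{y} $, then $ \mathsf{A}_{y}^{\,n}=\sum_{i=0}^{n-1}c_{i}\mathsf{A}_{y}^{\,i} $, and sandwiching this identity between $ \alpha $ and $ \beta $ shows that $ g $ obeys the order-$ n $ linear recurrence $ g(k+n)=\sum_{i=0}^{n-1}c_{i}\,g(k+i) $ for all $ k\ge 0 $. The hypothesis $ uv,uyv,\dots,uy^{n-1}v\in L $ says exactly that $ g(0)=g(1)=\cdots=g(n-1)=\lambda $, i.e. that the first $ n $ terms of $ g $ coincide with those of the constant sequence $ \lambda,\lambda,\dots $. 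Since a linear recurrence of order $ n $ has a unique solution once its first $ n $ terms are prescribed, it then suffices to verify that the constant sequence is itself a solution; substituting shows this amounts to $ \lambda=\lambda\sum_{i=0}^{n-1}c_{i} $, that is, to $ p(1)=0 $.

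The one genuinely substantive point, and where I expect the real content to sit, is establishing $ p(1)=0 $, i.e. that $ 1 $ is an eigenvalue of $ \mathsf{A}_{y} $. This is precisely where stochasticity enters and where a GPFA argument would break down: each $ \mathsf{A}_{\sigma} $ is row-stochastic by Definition \ref{definition:1pfa}, a product of row-stochastic matrices is again row-stochastic, and any row-stochastic $ \mathsf{A}_{y} $ satisfies $ \mathsf{A}_{y}\mathbf{1}=\mathbf{1} $ for the all-ones column vector $ \mathbf{1} $ (as $ y\in\Sigma^{*} $ involves no end-markers, though they would be harmless anyway). Hence $ 1 $ is an eigenvalue of $ \mathsf{A}_{y} $ for every $ y $, so $ p(1)=0 $. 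With this in hand the constant sequence is a bona fide solution of the recurrence, uniqueness of solutions forces $ g(k)=\lambda $ for all $ k\ge 0 $, and therefore $ f_{\mathcal{P}}(uy^{k}v)=\lambda $, i.e. $ uy^{k}v\in L $, for every $ k $. This yields $ uy^{*}v\subseteq L $, completing the argument.
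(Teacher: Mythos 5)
Your proof is correct. Note, however, that the paper does not prove this statement at all: it is imported verbatim as a Fact, with a citation to \cite{Di71}, so there is no in-paper argument to compare yours against. Your route---writing $f_{\mathcal{P}}(uy^{k}v)=\alpha\,\mathsf{A}_{y}^{\,k}\,\beta$, using Cayley--Hamilton to obtain an order-$n$ linear recurrence for $g(k)$ whose order depends only on the number of states (hence is uniform in $u,v,y$), and then using row-stochasticity of $\mathsf{A}_{y}$ to get the eigenvalue $1$, i.e.\ $p(1)=0$, so that the constant sequence $\lambda,\lambda,\dots$ satisfies the same recurrence and must coincide with $g$ by uniqueness of forward recursion---is essentially the classical argument behind Dieu's result, and you correctly identify the one step where stochasticity (as opposed to an arbitrary GPFA, for which the statement would be false) is indispensable. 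One microscopic wording point: the constant sequence is a solution of the recurrence if and only if $\lambda\,p(1)=0$, not if and only if $p(1)=0$; this is harmless here, since you prove $p(1)=0$ outright, and in the degenerate case $\lambda=0$ the condition holds even without it.
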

~~
\begin{theorem}
      \label{theorem:L_pal}
    $ L_{pal} \notin $  S$ ^{\neq} $.
\end{theorem}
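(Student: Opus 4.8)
The plan is to reduce the claim to a statement about S$^{=}$ and then apply the pumping-type property of Fact \ref{fact:dieu}. The key observation is that S$^{\neq}$ and S$^{=}$ are complementary classes: for any PFA $\mathcal{P}$ and cutpoint $\lambda$, the sets $\mathbb{L}(\mathcal{P},\neq\lambda)$ and $\mathbb{L}(\mathcal{P},=\lambda)$ partition $\Sigma^{*}$ into complements of each other. Hence a language $L$ lies in S$^{\neq}$ if and only if its complement $\overline{L}$ lies in S$^{=}$. In particular, $L_{pal}\in$ S$^{\neq}$ would force $\overline{L_{pal}}$, the language of all non-palindromes over $\{a,b\}$, to be in S$^{=}$. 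So it suffices to show $\overline{L_{pal}}\notin$ S$^{=}$.

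To this end I would assume for contradiction that $\overline{L_{pal}}\in$ S$^{=}$ and invoke Fact \ref{fact:dieu} to obtain the associated natural number $n\ge 1$. The heart of the argument is to exhibit strings $u,y,v$ that satisfy the hypothesis of the fact but violate its conclusion. I would take $u=a^{n}b$, $y=a$, and $v=\varepsilon$, so that $uy^{i}v=a^{n}b\,a^{i}$. The reverse of $a^{n}b\,a^{i}$ is $a^{i}b\,a^{n}$, which equals $a^{n}b\,a^{i}$ exactly when $i=n$. Consequently $a^{n}b\,a^{i}$ is a non-palindrome (a member of $\overline{L_{pal}}$) for every $i\neq n$, and is a palindrome precisely at $i=n$.

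With this choice, the strings $uv,uyv,\dots,uy^{n-1}v$ correspond to the exponents $i=0,1,\dots,n-1$, all of which differ from $n$, so all of them lie in $\overline{L_{pal}}$; the hypothesis of Fact \ref{fact:dieu} is therefore met. The conclusion of the fact would then require $uy^{*}v\subseteq\overline{L_{pal}}$, but the word $uy^{n}v=a^{n}b\,a^{n}$ is a palindrome and hence is \emph{not} in $\overline{L_{pal}}$. This contradiction shows $\overline{L_{pal}}\notin$ S$^{=}$, and therefore $L_{pal}\notin$ S$^{\neq}$.

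I expect the only genuine subtlety to be the reduction via complementation, i.e., checking that the complement of an exclusive stochastic language is in S$^{=}$ for the same PFA and cutpoint; once that is in place the pumping counterexample is completely explicit, and the rest is the routine verification that $a^{n}b\,a^{i}$ is a palindrome if and only if $i=n$. Note that allowing $v=\varepsilon$ is legitimate, since Fact \ref{fact:dieu} quantifies over all $u,v,y\in\Sigma^{*}$.
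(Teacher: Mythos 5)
Your proof is correct and follows essentially the same route as the paper: the same complementation step ($L_{pal}\in$ S$^{\neq}$ forces $\overline{L_{pal}}\in$ S$^{=}$), the same choice $u=a^{n}b$, $y=a$, $v=\varepsilon$ in Fact \ref{fact:dieu}, and the same contradiction at $a^{n}ba^{n}$. The only difference is that you spell out the justification of the complementation step and the palindrome check explicitly, which the paper leaves implicit.
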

\begin{proof}
    Suppose that $ L_{pal} \in $ S$ ^{\neq} $. Then $
\overline{L_{pal}} \in $ S$ ^{=} $.
    Let $ u=a^{n}b $, $ y=a $, and $ v=\varepsilon $.
    \[ a^{n}b, a^{n}ba,\cdots,a^{n}ba^{n-1} \in \overline{L_{pal}} \]
    imply that $ a^{n}ba^{n} \in $ $ \overline{L_{pal}} $ by Fact
\ref{fact:dieu}.
    Since this string is actually a member of $ L_{pal} $,  we have
a contradiction.
\end{proof}
We will now exhibit some languages which can only be recognized by
two-sided error by a QFA.
\\
\begin{theorem}
      \label{theorem:L_twosided-1}
    $ L=\{aw_{1} \cup bw_{2} \mid w_{1} \in L_{eq}, w_{2} \in
              L_{neq} \}  \in $ S$ ^{>} \setminus ($S$ ^{=} \cup $
S$ ^{\neq} )$.
\end{theorem}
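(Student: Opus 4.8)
The plan is to prove the three assertions $L \in$ S$^{>}$, $L \notin$ S$^{=}$, and $L \notin$ S$^{\neq}$ separately: the first by an explicit (generalized) automaton construction, and the last two by applications of Fact \ref{fact:dieu}. Throughout I write $L = (a\cdot L_{eq}) \cup (b\cdot L_{neq})$, a disjoint union in which the leading symbol decides which count condition governs membership, and $d = |w|_{a} - |w|_{b}$. For the membership in S$^{>}$, I would first check that both $L_{eq}$ and $L_{neq}$ are stochastic. Since $L_{neq} \in$ S$^{\neq}$ and S$^{\neq}\subseteq$ S$^{>}$ (Fact \ref{fact:SLneq-propersubsetSL}), $L_{neq}$ is already stochastic. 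For $L_{eq}$, I would exhibit a GPFA $\mathcal{G}$ whose acceptance function equals $-(r^{d}+r^{-d})$ for a fixed $r>1$: taking $\mathsf{A}_{a}=\mathrm{diag}(r,r^{-1})$, $\mathsf{A}_{b}=\mathrm{diag}(r^{-1},r)$, $\mathsf{v}_{0}=(1,1)$ and $\mathsf{f}=(-1,-1)^{\mathtt{T}}$ gives exactly this, since the matrix product over $w$ is $\mathrm{diag}(r^{d},r^{-d})$. This function attains its maximum $-2$ precisely when $d=0$ and drops to at most $-(r+r^{-1})$ otherwise, so any cutpoint $\lambda\in(-(r+r^{-1}),-2)$ yields $\mathbb{L}(\mathcal{G},>\lambda)=L_{eq}$; Fact \ref{fact:GPFA-PFA} then converts $\mathcal{G}$ into a PFA and places $L_{eq}$ in S$^{>}$.

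To finish $L\in$ S$^{>}$, I would combine the two recognizers. First I align their cutpoints to $\frac{1}{2}$ using Fact \ref{fact:PFA-PFA}, obtaining PFA's $\mathcal{P}_{eq}$ and $\mathcal{P}_{neq}$ for $L_{eq}$ and $L_{neq}$ with cutpoint $\frac{1}{2}$. I then build one PFA that, upon reading the first input symbol, sends all of its probability mass into a copy of $\mathcal{P}_{eq}$ (if the symbol is $a$) or into a disjoint copy of $\mathcal{P}_{neq}$ (if it is $b$), and thereafter simulates the chosen machine on the remainder of the input. The two branches use disjoint state sets and never interact, and the empty string is rejected, so the overall acceptance probability exceeds $\frac{1}{2}$ exactly on the strings of $L$, giving $L\in$ S$^{>}$.

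For the two negative claims I would argue by contradiction with Fact \ref{fact:dieu}. If $L\in$ S$^{=}$, the fact supplies an $n\ge 1$; taking $u=ba^{n}$, $y=b$, $v=\varepsilon$, the string $uy^{i}v=ba^{n}b^{i}$ lies in $L$ iff $a^{n}b^{i}\in L_{neq}$, i.e.\ iff $i\neq n$. Hence $uv,uyv,\ldots,uy^{n-1}v$ are all in $L$, while $uy^{n}v=ba^{n}b^{n}\notin L$, contradicting the conclusion $uy^{*}v\subseteq L$. For $L\notin$ S$^{\neq}$ I would reuse the complementation device from the proof of Theorem \ref{theorem:L_pal}: $L\in$ S$^{\neq}$ would force $\overline{L}\in$ S$^{=}$, so it suffices to show $\overline{L}\notin$ S$^{=}$. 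Here $\overline{L}=\{\varepsilon\}\cup(a\cdot L_{neq})\cup(b\cdot L_{eq})$, and its leading-$a$ part has exactly the same structure; assuming $\overline{L}\in$ S$^{=}$ yields an $n$, and with $u=a^{n+1}$, $y=b$, $v=\varepsilon$ the string $uy^{i}v=a^{n+1}b^{i}$ lies in $\overline{L}$ iff $a^{n}b^{i}\in L_{neq}$, i.e.\ iff $i\neq n$, again contradicting Fact \ref{fact:dieu}.

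The main obstacle is the membership $L\in$ S$^{>}$. The delicate points there are establishing that $L_{eq}$ is stochastic, which genuinely requires the GPFA detour since the natural S$^{=}$-recognizer for $L_{eq}$ does not directly produce a cutpoint separation, together with Fact \ref{fact:GPFA-PFA}; and verifying that the first-symbol routing yields a legitimate stochastic machine operating under a single common cutpoint. The two pumping arguments, by contrast, are routine once the correct choices of $u$, $y$, and $v$ are identified.
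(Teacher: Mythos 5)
Your proof is correct, but it takes a genuinely different route from the paper's in both halves. For the two non-membership claims, the paper does not pump at all: it takes a GPFA $\mathcal{G}$ recognizing $L$ with one-sided cutpoint $0$ (via Lemma \ref{lemma:Sneq-GPFAonesidedcutpoint0}) and replaces the initial vector $\mathsf{v}_{0}$ by $\mathsf{v}_{0}\mathsf{A}_{a}$, obtaining a GPFA that recognizes the quotient $a^{-1}L = L_{eq}$ with one-sided cutpoint $0$; this would place $L_{eq}$ in S$^{\neq}$, contradicting the fact, cited from the PFA literature, that $L_{eq}\in$ S$^{=}$ but $L_{eq}\notin$ S$^{\neq}$. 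The S$^{=}$ case is then handled by passing to $\overline{L}=\{\varepsilon\}\cup aL_{neq}\cup bL_{eq}$ and running the same quotient trick with $\mathsf{A}_{b}$. You instead apply Fact \ref{fact:dieu} directly to $L$ (with $u=ba^{n}$, $y=b$, $v=\varepsilon$) and to $\overline{L}$ (with $u=a^{n+1}$, $y=b$, $v=\varepsilon$); both instantiations check out, and they mirror the paper's own proofs of Theorem \ref{theorem:L_pal} and Lemma \ref{lemma:L_twosided-2}. What the paper's argument buys is brevity and reuse of machinery (it is essentially an instance of closure of S$^{\neq}$ under word quotient, cf.\ Theorem \ref{theorem:close-word-quotient}); what yours buys is self-containedness, since you never invoke the unproved external fact $L_{eq}\notin$ S$^{\neq}$ --- your pumping argument on $\overline{L}$ reproves it implicitly. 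For membership in S$^{>}$, the paper only remarks that since both $L_{eq}$ and its complement are stochastic, stochasticity of $L$ is ``not difficult to show''; your explicit diagonal GPFA with acceptance function $-(r^{d}+r^{-d})$ for $L_{eq}$, the conversions via Facts \ref{fact:GPFA-PFA} and \ref{fact:PFA-PFA}, and the first-symbol routing into disjoint copies fill in exactly what the paper leaves out (one small detail you gloss over: when routing on the first symbol, the machine must enter the chosen copy in the distribution that copy would have after reading $\cent$, i.e.\ in $\mathsf{v}_{0}^{eq}\mathsf{A}_{\cent}^{eq}$ rather than $\mathsf{v}_{0}^{eq}$, which is immediate to implement but needed for the simulation to be exact).
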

\begin{proof}
    Suppose that $ L \in $ S$ ^{\neq} $, then there exists a GPFA
    \[ \mathcal{G}=(S,\Sigma,\{\mathsf{A}_{\sigma \in \{a,b\}}
\},\mathsf{v}_{0},\mathsf{f}) \]
    recognizing $ L$ with one-sided cutpoint $ 0 $. The GPFA
    \[ \mathcal{G}^{'}=(S,\Sigma,\{\mathsf{A}_{\sigma \in \{a,b\}}
\},\mathsf{v}_{0}\mathsf{A}_{a},\mathsf{f}) \]
    recognizes $ L_{eq} $ with one-sided cutpoint 0, meaning that $
L_{eq} \in $ S$ ^{\neq} $.
    This contradicts the well-known fact mentioned in the first
paragraph of this section.
    Suppose now that $ L \in $ S$ ^{=} $,  then
    \[ \overline{L}=\{\varepsilon \cup aw_{2} \cup bw_{1} \mid w_{1}
\in L_{eq}, w_{2} \in L_{neq} \} \]
    is in S$ ^{\neq} $, which also results in a contradiction for
the same reason.
    Since both $ L_{eq} $ and its complement are stochastic, it is
not difficult to show that $ L $ is stochastic.
\end{proof}
~~
\begin{lemma}
      \label{lemma:L_twosided-2}
      $ L_{lt}= \{w \in \{a,b\}^{*} \mid |w|_{a} < |w|_{b} \} $ $ \notin
($S$ ^{=} \cup $  S$ ^{\neq} )$.
\end{lemma}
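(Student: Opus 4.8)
The plan is to establish the two non-membership claims separately, reducing both to the pumping-type property of Fact \ref{fact:dieu}. For S$^{=}$ I would argue directly by contradiction; for S$^{\neq}$ I would first pass to the complement. The key observation enabling the second reduction is that for every PFA $\mathcal{P}$ and cutpoint $\lambda$ the sets $\mathbb{L}(\mathcal{P},\neq\lambda)$ and $\mathbb{L}(\mathcal{P},=\lambda)$ are complementary, so that a language lies in S$^{\neq}$ if and only if its complement lies in S$^{=}$. Hence it suffices to prove both $L_{lt} \notin$ S$^{=}$ and $\overline{L_{lt}} = \{w \in \{a,b\}^{*} \mid |w|_{a} \ge |w|_{b}\} \notin$ S$^{=}$, the latter being equivalent to $L_{lt} \notin$ S$^{\neq}$.

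For the first part I would suppose $L_{lt} \in$ S$^{=}$ and let $n \ge 1$ be the constant supplied by Fact \ref{fact:dieu}. Taking $u=b^{n}$, $y=a$, and $v=\varepsilon$, each string $uy^{i}v = b^{n}a^{i}$ for $0 \le i \le n-1$ has $|w|_{a}=i < n = |w|_{b}$ and therefore lies in $L_{lt}$. Fact \ref{fact:dieu} then forces $b^{n}a^{*} \subseteq L_{lt}$; but $b^{n}a^{n}$ has $|w|_{a}=|w|_{b}$ and so is not in $L_{lt}$, a contradiction.

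For the second part I would suppose $\overline{L_{lt}} \in$ S$^{=}$ with corresponding constant $n$. Choosing $u=a^{n}$, $y=b$, and $v=\varepsilon$, each string $a^{n}b^{i}$ for $0 \le i \le n-1$ satisfies $|w|_{a}=n \ge i = |w|_{b}$ and hence lies in $\overline{L_{lt}}$, so Fact \ref{fact:dieu} yields $a^{n}b^{*} \subseteq \overline{L_{lt}}$. This fails at $a^{n}b^{n+1}$, giving the contradiction. Combining the two parts with the complementation equivalence above delivers $L_{lt} \notin$ S$^{=} \cup$ S$^{\neq}$.

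The whole argument is short once the right decomposition is identified; I do not expect a genuine obstacle, only a point requiring care. The substantive idea is that the strictness of the inequality defining $L_{lt}$ lets one cross the boundary from membership to non-membership by pumping a single symbol, and that the S$^{\neq}$ case reduces cleanly to a second application of Fact \ref{fact:dieu} on $\overline{L_{lt}}$. The one thing to watch is the direction of the inequalities: one must pump the symbol whose count is driven toward and past the threshold, namely $a$ for $L_{lt}$ and $b$ for $\overline{L_{lt}}$.
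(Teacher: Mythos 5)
Your proof is correct and follows essentially the same route as the paper's: both parts apply Fact \ref{fact:dieu} by contradiction, pumping $a$'s against a block of $n$ $b$'s to refute $L_{lt} \in$ S$^{=}$, and pumping $b$'s past $a^{n}$ to refute $\overline{L_{lt}} \in$ S$^{=}$ (equivalently $L_{lt} \in$ S$^{\neq}$). The only differences are cosmetic — you place the pumped block in a different position ($b^{n}a^{*}$ versus the paper's $a^{*}b^{n}$, and $v=\varepsilon$ versus $v=b$) and you spell out the complementation equivalence that the paper uses implicitly.
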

\begin{proof}
      Suppose that $ L_{lt} \in $ S$ ^{=} $. Let $ u=\varepsilon  $, $ y=a
$, and $ v=b^{n} $.
      \[ b^{n}, ab^{n},\cdots,a^{n-1}b^{n} \in L_{lt} \]
      imply that $ a^{n}b^{n} \in $ $ L_{lt} $ by Fact \ref{fact:dieu}.
      Since this string is actually a member of $ \overline{L_{lt}} $,  we
have a contradiction.

      Similarly, suppose that $ L_{lt} \in $ S$ ^{\neq} $, or $
\overline{L_{lt}} \in $ S$ ^{=} $.
      Let $ u=a^{n} $, $ y=b $, and $ v=b $.
      \[ a^{n}b, a^{n}b^{2},\cdots,a^{n}b^{n} \in \overline{L_{lt}} \]
      imply that $ a^{n}b^{n+1} \in \overline{L_{lt}} $ by Fact
\ref{fact:dieu}.
      Since this string is actually a member of $ L_{lt} $,  we have
a contradiction.
\end{proof}
~~
\begin{corollary}
      \label{corollary:L_twosided-2}
      $ L_{lt} \in $ S$ ^{>} \setminus ($S$ ^{=} \cup $  S$ ^{\neq} )$.
\end{corollary}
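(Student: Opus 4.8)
Lemma \ref{lemma:L_twosided-2} has already established that $ L_{lt} \notin $ S$^{=} \cup $ S$^{\neq} $, so the only thing that remains in order to obtain the corollary is the positive membership $ L_{lt} \in $ S$^{>} $; the set-difference statement then follows by simply combining the two facts. The plan is to exhibit a generalized probabilistic finite automaton whose associated function computes the signed count difference $ |w|_{b}-|w|_{a} $, and then to descend from this GPFA to an honest PFA with a cutpoint in $ [0,1) $ by invoking Fact \ref{fact:GPFA-PFA}.

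Concretely, I would take the two-state GPFA $ \mathcal{G}=(S,\Sigma,\{\mathsf{A}_{a},\mathsf{A}_{b}\},\mathsf{v}_{0},\mathsf{f}) $ with $ \mathsf{v}_{0}=(1,0) $, $ \mathsf{f}=(0,1)^{\mathtt{T}} $, and upper-triangular transition matrices that maintain a running counter in the second coordinate: reading $ b $ increments it and reading $ a $ decrements it, that is $ \mathsf{A}_{b}=\left(\begin{smallmatrix}1&1\\0&1\end{smallmatrix}\right) $ and $ \mathsf{A}_{a}=\left(\begin{smallmatrix}1&-1\\0&1\end{smallmatrix}\right) $. A one-line induction on $ |w| $ shows that after reading $ w $ the manipulated row vector equals $ (1,\,|w|_{b}-|w|_{a}) $, so that $ f_{\mathcal{G}}(w)=|w|_{b}-|w|_{a} $. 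Consequently $ \mathbb{L}(\mathcal{G},>0)=\{w\in\{a,b\}^{*}\mid |w|_{a}<|w|_{b}\}=L_{lt} $.

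Next I would apply Fact \ref{fact:GPFA-PFA} with $ \lambda_{1}=0 $ to obtain a PFA $ \mathcal{P} $ and a cutpoint $ \lambda_{2}\in(0,1) $ with $ (\mathcal{G},0)\equiv(\mathcal{P},\lambda_{2}) $. Since cutpoint separation preserves the $ \mathbb{L}(\cdot,>\lambda) $ class, we get $ L_{lt}=\mathbb{L}(\mathcal{P},>\lambda_{2})=\mathbb{L}(\mathcal{P},\lambda_{2}) $ with $ \lambda_{2}\in[0,1) $, which is precisely what it means for $ L_{lt} $ to be a stochastic language, i.e. $ L_{lt}\in $ S$^{>} $. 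Putting this together with Lemma \ref{lemma:L_twosided-2} yields $ L_{lt}\in $ S$^{>}\setminus($S$^{=}\cup$ S$^{\neq}) $, as claimed.

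I do not expect a genuine obstacle here: the GPFA construction is routine and the reduction to a PFA is carried out entirely by the cited Turakainen normalization (Fact \ref{fact:GPFA-PFA}). The only points needing a moment's care are the empty-string boundary case, where $ f_{\mathcal{G}}(\varepsilon)=\mathsf{v}_{0}\mathsf{f}=0 $ correctly excludes $ \varepsilon $ from $ L_{lt} $, and the observation that $ L_{lt} $ depends only on the Parikh image of $ w $, which is what legitimizes the order-insensitive counter argument. An equally short alternative would be to recall from the PFA literature that count-comparison languages such as $ L_{neq} $ are stochastic and adapt that construction to the strict inequality; however, the GPFA route above is the most self-contained given the facts already available in the paper.
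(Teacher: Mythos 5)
Your proof is correct, but it takes a genuinely different route from the paper's for the positive half of the statement. The paper disposes of $L_{lt} \in $ S$^{>}$ in one stroke by citing the literature (the same \cite{Ra92,Ka89} references it uses for $L_{eq \cdot b}$: bounded-error two-way PFA's recognize such counting languages, and two-way PFA's with cutpoint can be converted to one-way PFA's with cutpoint), then combines this with Lemma \ref{lemma:L_twosided-2} exactly as you do. You instead make the positive membership self-contained: your two-state GPFA does compute $f_{\mathcal{G}}(w)=|w|_{b}-|w|_{a}$ (the induction $(1,c)\mathsf{A}_{b}=(1,c+1)$, $(1,c)\mathsf{A}_{a}=(1,c-1)$ is immediate), so $\mathbb{L}(\mathcal{G},>0)=L_{lt}$, and Fact \ref{fact:GPFA-PFA} (Turakainen's normalization, already stated in the paper) converts this to a PFA with cutpoint $\lambda_{2}\in(0,1)$, giving $L_{lt}\in$ S$^{>}$ directly from the paper's own definitions. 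What your route buys is independence from external two-way-PFA results and an explicit, checkable witness; what the paper's route buys is brevity and consistency with the argument it reuses for Theorem \ref{theorem:L_twosided-3}, where the two-way PFA machinery is genuinely needed (the language $L_{eq\cdot b}$ there does not admit such an obvious GPFA, whereas $L_{lt}$ does). Your handling of the boundary case $\varepsilon$ and of the fact that equivalence under cutpoint separation preserves $\mathbb{L}(\cdot,>\lambda)$ is accurate, so no gap remains.
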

\begin{proof}
      This follows from Lemma \ref{lemma:L_twosided-2} and the fact that $
L_{lt} \in  $ S$ ^{>} $ \cite{Ra92,Ka89}.
\end{proof}
~~
\begin{theorem}
      \label{theorem:L_twosided-3}
      $ L_{eq \cdot b}=
L_{eq} \cdot b^{+} $ $ \in $
      S$ ^{>} \setminus ($S$ ^{=} \cup $  S$ ^{\neq} )$.
\end{theorem}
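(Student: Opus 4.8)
The plan is to reduce everything to a single combinatorial description of $ L_{eq \cdot b} $ and then to treat the three claims separately, dispatching the two exclusions with Fact \ref{fact:dieu} and reserving the stochasticity claim for a direct automaton construction. First I would establish the characterization: for $ x \in \{a,b\}^{*} $, writing $ d := |x|_{b} - |x|_{a} $, we have $ x \in L_{eq \cdot b} $ if and only if $ d \ge 1 $ and the last $ d $ symbols of $ x $ are all $ b $'s (equivalently, the trailing run of $ b $'s in $ x $ has length at least $ d $). Indeed, if $ x = wb^{k} $ with $ w \in L_{eq} $ and $ k \ge 1 $, then $ w $ contains all of the $ a $'s of $ x $, so $ |x|_{a} = |w|_{a} = |w|_{b} $ forces $ k = |x|_{b} - |x|_{a} = d \ge 1 $, and the last $ d $ symbols are exactly $ b^{k} $; conversely, if the last $ d \ge 1 $ symbols of $ x $ are $ b $'s, deleting them leaves a prefix $ w $ with $ |w|_{a} = |x|_{a} $ and $ |w|_{b} = |x|_{b} - d = |x|_{a} $, so $ w \in L_{eq} $ and $ x = wb^{d} \in L_{eq}\cdot b^{+} $. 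This description is the only tool the remaining steps need.

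For $ L_{eq \cdot b} \notin $ S$ ^{=} $, I would assume the contrary and let $ n \ge 1 $ be the constant supplied by Fact \ref{fact:dieu}. Taking $ u = \varepsilon $, $ y = a $, and $ v = b^{n} $, each of $ a^{i}b^{n} $ for $ i = 0, \ldots, n-1 $ lies in $ L_{eq \cdot b} $ (its $ b $-surplus is $ n - i \ge 1 $ and its trailing run of $ n $ $ b $'s is long enough), so Fact \ref{fact:dieu} forces $ a^{n}b^{n} \in L_{eq \cdot b} $; but $ a^{n}b^{n} $ has $ b $-surplus $ 0 $, a contradiction. For $ L_{eq \cdot b} \notin $ S$ ^{\neq} $, I would instead assume $ \overline{L_{eq \cdot b}} \in $ S$ ^{=} $, let $ n $ be the corresponding constant, and take $ u = a^{n} $, $ y = b $, $ v = \varepsilon $: each $ a^{n}b^{i} $ for $ i = 0, \ldots, n-1 $ has nonpositive $ b $-surplus and so lies in $ \overline{L_{eq \cdot b}} $, whereas Fact \ref{fact:dieu} would place $ a^{n}b^{n+1} $ in $ \overline{L_{eq \cdot b}} $, even though $ a^{n}b^{n+1} $ has $ b $-surplus $ 1 $ and ends in $ b $ and hence belongs to $ L_{eq \cdot b} $, again a contradiction. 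These two arguments are short and mirror the proofs of Lemma \ref{lemma:L_twosided-2} and Theorem \ref{theorem:L_pal}.

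The substantive part, and the step I expect to be the main obstacle, is showing $ L_{eq \cdot b} \in $ S$ ^{>} $. The difficulty is that, by the characterization, membership is the conjunction of two ``counting'' conditions, namely $ |x|_{b} > |x|_{a} $ (a condition already known to define a stochastic language, cf.\ \cite{Ra92,Ka89}) and a bound tying the trailing $ b $-run to the $ b $-surplus, and S$ ^{>} $ is not closed under intersection. My plan is to build a generalized PFA directly and then appeal to Facts \ref{fact:GPFA-GPFA} and \ref{fact:GPFA-PFA} to convert it into a PFA with a cutpoint in $ (0,1) $. Concretely, I would arrange the value $ F(x) = \sum_{x = uv,\ v \in b^{+}} \phi(u) $, the Cauchy product of a recognizable real-valued series $ \phi $ that peaks exactly when $ |u|_{a} = |u|_{b} $ with the characteristic series of the regular set $ b^{+} $; each admissible split $ x = ub^{s} $ (with $ 1 \le s \le t $, where $ t $ is the trailing $ b $-count) then contributes a term whose argument $ |u|_{b} - |u|_{a} = s - d $ equals $ 0 $ for exactly one $ s $, and precisely when $ 1 \le d \le t $, so $ F(x) $ should exceed a suitable cutpoint exactly on $ L_{eq \cdot b} $. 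The delicate point, and where I expect most of the effort to go, is choosing $ \phi $ and the cutpoint so that the single peaked term dominates the sum of the off-peak terms uniformly over all $ x $, since the window of contributing splits can be arbitrarily long; the natural alternative is an explicit random-walk PFA equipped with a finite control that latches onto the trailing run, extending the constructions known to place $ L_{lt} $ in S$ ^{>} $. Once stochasticity is secured, the theorem follows by combining it with the two exclusions above.
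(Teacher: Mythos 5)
Your combinatorial characterization of $ L_{eq \cdot b} $ and both exclusion arguments are correct: up to the inessential change of $ v $ from $ b $ to $ \varepsilon $ in the second application of Fact \ref{fact:dieu}, these are exactly the paper's own choices, so $ L_{eq \cdot b} \notin ($S$ ^{=} \cup $ S$ ^{\neq}) $ is fully established by what you wrote.

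The gap is the remaining claim $ L_{eq \cdot b} \in $ S$ ^{>} $, which you leave as a plan, and the plan as described cannot be completed. In your Cauchy-product construction every contributing split $ x = u b^{s} $ is weighted by a quantity depending only on $ |u|_{b}-|u|_{a} = d-s $, i.e.\ by $ g(d-s) $ for a function $ g $ on $ \mathbb{Z} $ that must itself be computed by a GPFA. By the finite-rank (Hankel) characterization of GPFA-computable series, a series whose value depends only on the letter-count difference must be an exponential polynomial $ g(k)=\sum_{i}p_{i}(k)\mu_{i}^{k} $ valid on all of $ \mathbb{Z} $. Your requirement is that sums of $ g $ over windows of consecutive integers exceed the cutpoint $ \lambda $ exactly when the window contains $ 0 $; taking windows of length one (strings with trailing run $ t=1 $) this forces $ g(0)>\lambda $ but $ g(k)\le\lambda $ for all $ k\neq 0 $. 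If $ g $ is bounded it is a trigonometric polynomial, hence almost periodic, so $ g(k) $ returns within any $ \varepsilon $ of $ g(0) $ at infinitely many $ k \neq 0 $ --- contradiction. If $ g $ is unbounded above, singleton windows already fail; if it is bounded above but unbounded below on one side, its partial sums on that side are forced below every bound, so long windows containing $ 0 $ eventually drop below $ \lambda $ --- again a contradiction. So the difficulty you flag ("the window of contributing splits can be arbitrarily long") is not delicate but fatal: no choice of $ \phi $ of this form and no cutpoint work. Your fallback, a one-way random-walk PFA that "latches onto the trailing run", founders on the same point: a one-way machine cannot know where the trailing run begins. The paper's proof supplies exactly the missing idea: it notes that your three-part characterization (ends with $ b $; $ w \in L_{lt} $; the longest prefix ending with $ a $ lies in $ \overline{L_{lt}} $) is a conjunction of conditions each checkable with bounded error by a \emph{two-way} PFA --- conjunction is harmless there, since the tests can be run sequentially --- and then invokes the equivalence of two-way and one-way PFAs with respect to cutpoint recognition \cite{Ra92,Ka89}, i.e.\ precisely the results you cite only for $ L_{lt} \in $ S$ ^{>} $. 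Without that detour through two-way bounded-error machines (or some genuinely different construction you have not supplied), the membership $ L_{eq \cdot b} \in $ S$ ^{>} $ remains unproven.
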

\begin{proof}
      The proof of $ L_{eq \cdot b} \notin ($S$ ^{=} \cup $  S$
^{\neq} ) $ uses the
      setup presented in Lemma \ref{lemma:L_twosided-2}, i.e.,
      \begin{enumerate}
              \item select $ u=\varepsilon  $, $ y=a $, and $ v=b^{n} $ to
contradict with $ L_{eq \cdot b} \in $ S$ ^{=} $,
              \item select $ u=a^{n} $, $ y=b $, and $ v=b $ to
contradict with $
L_{eq \cdot b} \in $ S$ ^{\neq} $.
      \end{enumerate}
      Any string $ w $ is a member of $ L_{eq \cdot b} $ if and only if it
      has the following three properties:
      \begin{itemize}
              \item $ w $ ends with $ b $.
              \item $ w \in L_{lt} $.
              \item Let $ u $ be the longest prefix of $ w $ ending with $ a $
              ($ u = \varepsilon $ if $ w \in \{b^{*}\} $). Then, $ u \in
\overline{L_{lt}} $.
      \end{itemize}
      Since these properties can be checked easily by a two-way PFA with
bounded error,
      $ L_{eq \cdot b} \in $ S$ ^{>} $ \cite{Ra92,Ka89}.
\end{proof}

We conclude this section by showing the stochasticity of an important
family of languages.
\\
\begin{definition}
      \label{definition:L_word-problem}
      \cite{LZ77}
      The \textit{word problem} for a group is the problem of deciding
whether or not a product of group elements
      is equal to the identity element.
\end{definition}
~~
\begin{definition}
      \label{definition:L_word-problem-free-group}
      Let $ \mathcal{G}^{k}=(G,\circ) $ be a finitely generated free
group with a basis
      \[ \Sigma
=\{\sigma_{1},\ldots,\sigma_{k},\sigma_{1}^{-1},\ldots,\sigma_{k}^{-1}\},
\]
      where $ k \in \mathbb{Z}^{+} $ is the rank of $ \mathcal{G}^{k} $.
      $ L_{wp}(\mathcal{G}^{k}) \subseteq \Sigma^{*} $ is the
language defined as
      \[ L_{wp}(\mathcal{G}^{k})=\{ w=w_{1} \cdots w_{|w|} \mid
      w_{i} \in \Sigma, 1 \le i \le |w|,
      w_{1} \circ \cdots \circ w_{|w|} = \imath \}, \]
      where $ \imath \in G $ is the identity element of $ \mathcal{G}^{k} $.
\end{definition}
~~
\begin{fact}
      \label{fact:isomorphic-free-group}
      (Page 1 of \cite{LS77})
      Let $ \mathcal{G}_{1}^{k_{1}} $ and $ \mathcal{G}_{2}^{k_{2}} $ be
finitely generated free groups.
      Then $ \mathcal{G}_{1}^{k_{1}} $ and $ \mathcal{G}_{2}^{k_{2}} $ are
isomorphic if and only if $ k_{1} = k_{2} $.
\end{fact}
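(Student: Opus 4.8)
The plan is to prove the two directions separately, with the ``same rank'' direction being essentially a formality and the ``isomorphic implies same rank'' direction carrying the real content.

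For the implication $ k_{1} = k_{2} \Rightarrow \mathcal{G}_{1}^{k_{1}} \cong \mathcal{G}_{2}^{k_{2}} $, I would invoke the universal mapping property of free groups. Since $ k_{1} = k_{2} $, there is a bijection between the two bases, and this bijection extends uniquely to a group homomorphism in each direction; composing the two homomorphisms and appealing to the uniqueness clause of the universal property shows that both composites fix the generators and hence equal the identity. The two homomorphisms are therefore mutually inverse isomorphisms. Nothing beyond this bookkeeping is required.

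For the reverse direction, the idea is to exhibit an isomorphism invariant of a free group that recovers its rank. The cleanest route is to count homomorphisms into a fixed finite group. Let $ H $ be any finite group. Because the generators of a free group of rank $ k $ may be sent to arbitrary elements of $ H $ independently of one another, the universal property gives a bijection between $ \mathrm{Hom}(\mathcal{G}^{k}, H) $ and $ H^{k} $, so $ |\mathrm{Hom}(\mathcal{G}^{k}, H)| = |H|^{k} $. Since the cardinality of $ \mathrm{Hom}(-, H) $ depends only on the isomorphism type of the source (an isomorphism of the source induces a bijection of the Hom-sets by precomposition), an isomorphism $ \mathcal{G}_{1}^{k_{1}} \cong \mathcal{G}_{2}^{k_{2}} $ forces $ |H|^{k_{1}} = |H|^{k_{2}} $. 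Choosing $ H = \mathbb{Z}/2\mathbb{Z} $ yields $ 2^{k_{1}} = 2^{k_{2}} $, hence $ k_{1} = k_{2} $.

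The main (and essentially only) obstacle is this reverse implication, and the homomorphism-counting trick dispatches it in a couple of lines. An alternative, which I mention because it is the more standard textbook argument, is to pass to abelianizations: the abelianization of a free group of rank $ k $ is the free abelian group $ \mathbb{Z}^{k} $, and any group isomorphism descends to an isomorphism of abelianizations, so it suffices to know that $ \mathbb{Z}^{k_{1}} \cong \mathbb{Z}^{k_{2}} $ implies $ k_{1} = k_{2} $. This last fact is the invariance of the rank of a finitely generated free abelian group, which itself follows by tensoring with $ \mathbb{Q} $ and comparing dimensions of the resulting vector spaces. Either way, once the invariant is chosen the argument is short, and the forward direction is immediate.
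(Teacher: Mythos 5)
The paper does not actually prove this statement: it is imported wholesale as a black box, cited to page~1 of Lyndon and Schupp, and its only role is to justify Corollary~4.1 and the generic notation $L_{wp}(k)$. So there is no ``paper proof'' to match; what matters is whether your argument is sound, and it is. The forward direction via the universal property is the standard bookkeeping, and your homomorphism-counting argument for the hard direction is valid: the universal property gives $|\mathrm{Hom}(\mathcal{G}^{k},H)| = |H|^{k}$ for any finite group $H$, precomposition with an isomorphism bijects the Hom-sets, and since the ranks of finitely generated free groups are finite natural numbers, $2^{k_{1}} = 2^{k_{2}}$ indeed forces $k_{1}=k_{2}$. The abelianization route you mention as an alternative is the one given in the cited reference and in most textbooks; it trades your counting trick for the invariance of rank of $\mathbb{Z}^{k}$ (settled by tensoring with $\mathbb{Q}$). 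Your primary argument is arguably the more self-contained of the two, since it needs nothing beyond the universal property and finiteness, whereas the abelianization proof quietly invokes commutative algebra. Either version would serve as a complete proof of the Fact that the paper leaves to the literature.
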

~~
\begin{corollary}
      \label{corollary:isomorphic-free-group}
      $ L_{wp}(\mathcal{G}_{1}^{k_{1}}) $ and $
L_{wp}(\mathcal{G}_{2}^{k_{2}}) $ are isomorphic
      if and only if $ k_{1} = k_{2} $,
      where $ \mathcal{G}_{1}^{k_{1}} $ and $ \mathcal{G}_{2}^{k_{2}} $ are
finitely generated free groups.
\end{corollary}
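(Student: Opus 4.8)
The plan is to derive the corollary from Fact \ref{fact:isomorphic-free-group} by showing that the word problem language of a finitely generated free group encodes exactly the data that determines the group up to isomorphism, namely the alphabet together with the involution pairing each generator with its inverse. Throughout I use the standard notion of language isomorphism: two languages $L_1 \subseteq \Sigma_1^*$ and $L_2 \subseteq \Sigma_2^*$ are isomorphic if there is a bijection $\phi : \Sigma_1 \to \Sigma_2$ whose extension to a monoid homomorphism $\Sigma_1^* \to \Sigma_2^*$ restricts to a bijection of $L_1$ onto $L_2$.

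First I would record the combinatorial characterization of membership. In a free group, a product of generators and their inverses equals $\imath$ if and only if its \emph{free reduction} --- obtained by repeatedly deleting adjacent factors of the form $\sigma_i \sigma_i^{-1}$ or $\sigma_i^{-1} \sigma_i$ --- is the empty word. Hence $w \in L_{wp}(\mathcal{G}^k)$ is a purely combinatorial condition that depends only on which letters of $\Sigma$ are paired as mutual inverses, and not on any finer group structure.

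For the direction $k_1 = k_2 \Rightarrow \text{isomorphic}$, I would set $k_1 = k_2 = k$ and define $\phi$ on the two standard bases by $\sigma_i \mapsto \tau_i$ and $\sigma_i^{-1} \mapsto \tau_i^{-1}$. Since $\phi$ is an alphabet bijection commuting with the inverse-pairing involution, it carries each cancelling factor to a cancelling factor, so a word and its image have matching free reductions; therefore $\phi$ maps $L_{wp}(\mathcal{G}_1^k)$ bijectively onto $L_{wp}(\mathcal{G}_2^k)$, which is exactly a language isomorphism. For the converse, a language isomorphism forces $|\Sigma_1| = |\Sigma_2|$, i.e.\ $2k_1 = 2k_2$, whence $k_1 = k_2$. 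Alternatively, and more in the spirit of a corollary to Fact \ref{fact:isomorphic-free-group}, one checks that any alphabet bijection carrying one word problem language to the other must respect the involution --- the length-two members $\sigma_i \sigma_i^{-1}$ pin down the pairing --- so that it induces a genuine isomorphism of the free groups, and Fact \ref{fact:isomorphic-free-group} then yields $k_1 = k_2$.

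The step I expect to require the most care is making precise that a language isomorphism, a priori just a relabelling of letters, is forced to be compatible with the inverse-pairing involution, so that it really corresponds to an isomorphism of the underlying free groups and legitimizes the appeal to Fact \ref{fact:isomorphic-free-group}. The cardinality argument sidesteps this for the ``only if'' direction, but the structural version is what ties the statement back to the free-group fact of which it is a corollary.
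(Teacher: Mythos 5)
Your proof is correct and takes essentially the same route as the paper, which states this corollary without any written proof as an immediate consequence of Fact~\ref{fact:isomorphic-free-group}; your argument (letter bijections that respect the inverse pairing correspond exactly to basis-preserving free-group isomorphisms, with free reduction characterizing membership in $L_{wp}$) is just the natural formalization of that implicit derivation. The cardinality shortcut for the ``only if'' direction is a harmless, even more elementary, alternative.
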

~~\\
As a generic name, $ L_{wp}(k) $ can be used instead of $ L_{wp}(\mathcal{G}^{k}) $
due to Corollary \ref{corollary:isomorphic-free-group}, where $ k \in \mathbb{Z}^{+} $.
\\
\begin{fact}
      \label{fact:word-problem-one-SL}
      \cite{Tu81}
      $ L_{wp}(1) \in $ S$ ^{>} $.
\end{fact}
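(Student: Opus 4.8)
The plan is to realize $ L_{wp}(1) $ directly as a stochastic language by building a GPFA whose acceptance function has a \emph{uniform gap} between members and non-members, and then laundering it into an honest PFA via Fact \ref{fact:GPFA-PFA}. First I would observe that $ \mathcal{G}^{1} $, the free group of rank $1$, is infinite cyclic, so a word $ w $ over $ \Sigma=\{\sigma_{1},\sigma_{1}^{-1}\} $ evaluates to the identity exactly when $ d(w):=|w|_{\sigma_{1}}-|w|_{\sigma_{1}^{-1}}=0 $. Thus $ L_{wp}(1)=\{w\mid d(w)=0\} $, which is (up to renaming the symbols) just $ L_{eq} $.

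The key step is to construct a $3$-state GPFA $ \mathcal{G} $ with $ f_{\mathcal{G}}(w)=d(w)^{2} $. The idea is to let the (row) state vector carry the triple $ (1,d,d^{2}) $ as the computation proceeds: reading $ \sigma_{1} $ sends $ d\mapsto d+1 $ and reading $ \sigma_{1}^{-1} $ sends $ d\mapsto d-1 $, and in both cases the update of $ (1,d,d^{2}) $ is linear, realized by the upper-triangular (Jordan-type) matrices
\[
\mathsf{A}_{\sigma_{1}}=\begin{pmatrix} 1 & 1 & 1 \\ 0 & 1 & 2 \\ 0 & 0 & 1 \end{pmatrix},\qquad
\mathsf{A}_{\sigma_{1}^{-1}}=\begin{pmatrix} 1 & -1 & 1 \\ 0 & 1 & -2 \\ 0 & 0 & 1 \end{pmatrix}.
\]
With $ \mathsf{v}_{0}=(1,0,0) $ and $ \mathsf{f}=(0,0,1)^{\mathtt{T}} $ a routine induction on $ |w| $ shows that the state vector after reading $ w $ is $ (1,d(w),d(w)^{2}) $, so $ f_{\mathcal{G}}(w)=d(w)^{2} $.

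The payoff is that $ d(w) $ is an integer, so $ f_{\mathcal{G}}(w)=0 $ when $ w\in L_{wp}(1) $ and $ f_{\mathcal{G}}(w)\ge 1 $ otherwise; this is the uniform separation that a cutpoint needs. Negating the final vector gives a GPFA $ \mathcal{G}^{'} $ with $ f_{\mathcal{G}^{'}}(w)=-d(w)^{2} $, whence $ L_{wp}(1)=\mathbb{L}(\mathcal{G}^{'},>-\tfrac{1}{2}) $. Applying Fact \ref{fact:GPFA-PFA} to $ (\mathcal{G}^{'},-\tfrac{1}{2}) $ yields a PFA $ \mathcal{P} $ and a cutpoint $ \lambda\in(0,1) $ with $ L_{wp}(1)=\mathbb{L}(\mathcal{P},\lambda) $, so $ L_{wp}(1)\in $ S$ ^{>} $. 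This is in essence Turakainen's argument \cite{Tu81}.

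The main subtlety — and the reason the result is not completely immediate from $ L_{eq}\in $ S$ ^{=} $ — is the uniform-gap requirement. A tempting ``spectral'' encoding, e.g.\ using planar rotations by $ \theta $ and $ -\theta $ to obtain $ f(w)=\cos(d(w)\theta) $, \emph{fails}: for irrational $ \theta/\pi $ the values at non-members accumulate at the value $1$ taken on members, so no cutpoint separates them, while for rational $ \theta/\pi $ one only detects $ d\equiv 0 $ modulo some integer. The resolution is precisely that GPFA's, unlike genuine stochastic PFA's, may use non-stochastic triangular matrices to compute the \emph{unbounded polynomial} $ d^{2} $ in the letter counts, which jumps from $0$ to $\ge 1$; Fact \ref{fact:GPFA-PFA} then converts this into a bona fide PFA while preserving the cutpoint separation. (Alternatively, one could note that $ L_{eq} $ is recognizable by a two-way PFA with bounded error and invoke \cite{Ra92,Ka89}, as in Theorem \ref{theorem:L_twosided-3}.)
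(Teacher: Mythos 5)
Your proof is correct, but note that the paper itself gives \emph{no} proof of this statement: it is stated as a Fact with a bare citation to \cite{Tu81}, so yours is a self-contained argument filling a gap the paper leaves to the literature (whether it reproduces Turakainen's actual 1981 argument cannot be checked against this paper). The construction itself is sound: a direct multiplication confirms that your triangular matrices preserve the invariant $(1,d,d^{2})\mathsf{A}_{\sigma_{1}}=(1,d+1,(d+1)^{2})$ and $(1,d,d^{2})\mathsf{A}_{\sigma_{1}^{-1}}=(1,d-1,(d-1)^{2})$, integrality of $d(w)$ gives the separation $f_{\mathcal{G}^{'}}(w)=0$ versus $f_{\mathcal{G}^{'}}(w)\le -1$ after negating $\mathsf{f}$, and since equivalence under cutpoint separation preserves the three sets, Fact \ref{fact:GPFA-PFA} yields $\mathbb{L}(\mathcal{P},>\lambda_{2})=L_{wp}(1)$ with $\lambda_{2}\in(0,1)$, hence $L_{wp}(1)\in$ S$ ^{>} $. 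The natural comparison is with the paper's own proof of the companion result for $k\ge 2$ (Theorem \ref{theorem:word-problem-is-stochastic}): there the authors build a \emph{rational} GPFA from a rational rotation representation, conclude $L_{wp}(k)\in$ S$ ^{=}_{rat} $ with cutpoint $1$, and finish by invoking S$ ^{=}_{rat}\subsetneq$ S$ ^{>} $ (Fact \ref{fact:SLeq-rat-subset-of-SL}); the same scheme would handle $k=1$ by taking a single infinite-order rational rotation (e.g.\ by $\arccos(3/5)$), with the gap-creation hidden inside Fact \ref{fact:SLeq-rat-subset-of-SL}. Your route instead manufactures the gap explicitly --- the polynomial $d^{2}$ jumps from $0$ to at least $1$ --- and then uses only the generic conversion of Fact \ref{fact:GPFA-PFA}. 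What this buys is elementariness and a completely explicit $3$-state witness; what it costs is generality: counting letter differences is intrinsically abelian, so your method cannot be extended to the nonabelian free groups of rank $k\ge 2$, whereas the representation-theoretic approach covers all ranks uniformly. Finally, your diagnosis of the subtlety is accurate and matches the paper's own remark preceding Theorem \ref{theorem:word-problem-is-stochastic}: a rotation encoding alone only establishes membership in S$ ^{=} $ (non-member values accumulate at the member value $1$, so no strict cutpoint separates), and since it is open whether S$ ^{=}\subseteq$ S$ ^{>} $, Corollary \ref{corollary:free-group-SLeq} cannot be used directly --- the uniform integer gap is exactly what circumvents this obstacle.
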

~~
\begin{fact}
      \label{fact:free-group-co-NMCL}
      \cite{BP02}
      $ L_{wp}(k) \in $ co-NMCL, the class of languages whose
complements are in NMCL,
       for any $ k \in \mathbb{Z}^{+} $.
\end{fact}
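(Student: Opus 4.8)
The plan is to prove the equivalent statement that $\overline{L_{wp}(k)}$, the set of words representing a \emph{nontrivial} element of the free group, lies in NMCL; that is, to exhibit a Moore--Crutchfield QFA $\mathcal{M}$ with $f_{\mathcal{M}}(w)=0$ exactly when the product $w_{1}\circ\cdots\circ w_{|w|}$ equals the identity $\imath$, and $f_{\mathcal{M}}(w)>0$ otherwise. The key idea, in the spirit of the rotation constructions of Theorem \ref{theorem:L_m} and of Bertoni--Carpentieri for $L_{neq}$, is to convert the word problem into a question about a \emph{faithful} unitary representation of the group. It is classical that the rotation group $SO(3)$ contains a free subgroup of rank $2$ (for instance the rotations by $\arccos\frac{1}{3}$ about two perpendicular axes), and since $\mathcal{G}^{k}$ embeds in this rank-$2$ group for $k\ge 2$, while $\mathcal{G}^{1}=\mathbb{Z}$ embeds via an irrational rotation, there is a faithful homomorphism $\rho:\mathcal{G}^{k}\to SO(3)$ for every $k$. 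I would give $\mathcal{M}$ three states and set the transition matrix $\mathsf{U}_{\sigma}$ of each basis symbol $\sigma\in\Sigma$ to the real orthogonal (hence unitary) matrix $\rho(\sigma)$, with the end-marker transitions equal to the identity. Since $\rho$ is a homomorphism, after reading $w$ the state vector is $\rho(w)\ket{v_{0}}$, where $\ket{v_{0}}$ is the start vector and $\rho(w)=I$ precisely when $w\in L_{wp}(k)$; unreduced inputs need no special care, as matrix multiplication already respects $\rho(\sigma)\rho(\sigma^{-1})=I$.

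The detection mechanism exploits the single final measurement of the MCQFA. Working in a basis in which $\ket{v_{0}}$ is the one rejecting state and the other two states are accepting, the acceptance probability is $f_{\mathcal{M}}(w)=1-|\braket{v_{0}}{\rho(w)\,v_{0}}|^{2}$. By the equality case of Cauchy--Schwarz this vanishes if and only if $\rho(w)\ket{v_{0}}=\pm\ket{v_{0}}$, i.e. if and only if $\ket{v_{0}}$ lies on the rotation axis of $\rho(w)$ (the eigenvalue $-1$ case being ruled out below). Thus the entire construction reduces to choosing $\ket{v_{0}}$ so that the only element of $\rho(\mathcal{G}^{k})$ fixing its line is the identity.

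To justify such a choice I would combine two observations. First, because $\mathcal{G}^{k}$ is torsion-free and $\rho$ is faithful, every nontrivial element maps to a rotation of infinite order, hence by an angle that is an irrational multiple of $2\pi$; in particular no nontrivial $\rho(w)$ is a rotation by $\pi$, so $-1$ is never an eigenvalue and the case $\rho(w)\ket{v_{0}}=-\ket{v_{0}}$ cannot arise for $w\notin L_{wp}(k)$. Second, each nontrivial element has a single rotation axis meeting the unit sphere in two points, so the set of vectors fixed by \emph{some} nontrivial element is a countable union of such pairs and therefore has measure zero. Taking any $\ket{v_{0}}$ outside this countable set then guarantees $f_{\mathcal{M}}(w)=0$ iff $w\in L_{wp}(k)$, so $\mathcal{M}$ recognizes $\overline{L_{wp}(k)}$ with cutpoint $0$ and $L_{wp}(k)\in$ co-NMCL.

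The main obstacle is the group-theoretic input rather than the automaton bookkeeping: one must invoke (and, for a self-contained account, verify via the ping-pong lemma) the faithful free embedding into $SO(3)$, and then make the genericity of $\ket{v_{0}}$ precise. Everything else — unitarity of the transitions, the closed form of the acceptance probability, and the torsion-free eigenvalue argument — is routine. A feature worth recording is that the resulting automaton has a \emph{constant} number of states (three, or two if one works in $SU(2)$ in place of $SO(3)$), independent of $k$, with only the transition amplitudes depending on the rank; this dovetails with the succinctness theme of Section \ref{section:MainResults}.
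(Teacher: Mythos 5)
Your construction is correct, but note first that the paper itself contains no proof of this statement: it is imported from \cite{BP02} as a black box, so the only in-paper material to compare against is the closely related construction in the proof of Theorem \ref{theorem:word-problem-is-stochastic}. Your route is essentially the standard one behind the cited result: pull the word problem back along a faithful orthogonal representation $\rho$ of $\mathcal{G}^{k}$ into $SO(3)$, let the MCQFA transition matrices be the images of the generators, and detect the identity through the overlap, $f_{\mathcal{M}}(w)=1-|\braket{v_{0}}{\rho(w)v_{0}}|^{2}$. The group-theoretic ingredients you invoke (a rank-2 free group of rotations, $F_{k}\hookrightarrow F_{2}$ for $k\ge 2$, an irrational rotation for $k=1$) are sufficient, and for $k \ge 2$ you could equally well cite the paper's own Fact \ref{fact:SQ3-rational-free-group}. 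The genuinely valuable part of your write-up is the care taken over $\ket{v_{0}}$: torsion-freeness plus faithfulness to exclude eigenvalue $-1$, and the countable-union-of-axes measure-zero argument to obtain a vector moved by every nontrivial element of $\rho(\mathcal{G}^{k})$. This is precisely the point that is easy to get wrong; indeed, the paper's proof of Theorem \ref{theorem:word-problem-is-stochastic} asserts that $\mathsf{A}_{1}\cdots\mathsf{A}_{w}=I$ iff $f_{\mathcal{G}_{k}}(w)=1$, which requires that no nontrivial element of $\mathcal{S}^{k}$ fix the vector $(1,0,0)$ --- false for the standard choice of generators as rotations about two coordinate axes, since $(1,0,0)$ then lies on a generator's axis. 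Your genericity argument (or, alternatively, the explicit number-theoretic/ping-pong choice of vector used in the Banach--Tarski-style constructions) closes exactly this gap. Two harmless bookkeeping remarks: making $\ket{v_{0}}$ both the start state and the unique rejecting state requires conjugating all transition matrices by a change of basis, which is fine since MCQFA amplitudes are unrestricted; and setting the end-marker transitions to $I$ matches the paper's convention that the MCQFA is the quantum analogue of Definition \ref{definition:1pfa}.
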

~~
\begin{corollary}
      \label{corollary:free-group-SLeq}
      $ L_{wp}(k) \in $ S$ ^{=} $ for any $ k \in \mathbb{Z}^{+} $.
\end{corollary}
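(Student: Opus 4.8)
The plan is to obtain this membership purely by chaining together the results already established, together with the complementation duality between S$^{\neq}$ and S$^{=}$. First I would invoke Fact \ref{fact:free-group-co-NMCL}, which asserts that $L_{wp}(k) \in$ co-NMCL for every $k \in \mathbb{Z}^{+}$. By the definition of co-NMCL given there, this is exactly the statement that the complement $\overline{L_{wp}(k)}$ belongs to NMCL. So the whole argument reduces to transporting this single membership through the known inclusions and then complementing back.

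Next I would push $\overline{L_{wp}(k)}$ up the established chain of class containments. By Fact \ref{fact:MCLpropersubsetQL} we have NMCL $\subseteq$ NQL, so $\overline{L_{wp}(k)} \in$ NQL; and by the main characterization, Theorem \ref{theorem:maintheorem}, NQL $=$ S$^{\neq}$, whence $\overline{L_{wp}(k)} \in$ S$^{\neq}$. These two steps are immediate applications of results stated above and require no new construction.

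The final step is the complementation duality. Any language $A \in$ S$^{\neq}$ has the form $A = \mathbb{L}(\mathcal{P},\neq\lambda)$ for some PFA $\mathcal{P}$ and cutpoint $\lambda \in [0,1]$; since the three sets $\mathbb{L}(\mathcal{P},<\lambda)$, $\mathbb{L}(\mathcal{P},=\lambda)$, $\mathbb{L}(\mathcal{P},>\lambda)$ partition $\Sigma^{*}$, the complement satisfies $\overline{A} = \mathbb{L}(\mathcal{P},=\lambda) \in$ S$^{=}$. In other words S$^{=}$ $=$ co-S$^{\neq}$, which is precisely the content of Corollary \ref{corollary:SLeq-equal-coNQL}. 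Applying this to $A = \overline{L_{wp}(k)} \in$ S$^{\neq}$ yields $L_{wp}(k) = \overline{A} \in$ S$^{=}$, as required. I do not expect any real obstacle here, since the corollary is a direct consequence of the preceding machinery; the only point deserving care is recording the elementary identity S$^{=}$ $=$ co-S$^{\neq}$, which is the hinge converting the co-NMCL membership of $L_{wp}(k)$ into a positive S$^{=}$ membership.
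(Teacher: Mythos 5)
Your proposal is correct and follows exactly the route the paper intends: the paper states this corollary without proof immediately after Fact \ref{fact:free-group-co-NMCL}, the implicit argument being precisely your chain co-NMCL $\subseteq$ co-NQL $=$ co-S$^{\neq}$ $=$ S$^{=}$, using Fact \ref{fact:MCLpropersubsetQL}, Theorem \ref{theorem:maintheorem}, and the definitional duality between $\mathbb{L}(\mathcal{P},=\lambda)$ and $\mathbb{L}(\mathcal{P},\neq\lambda)$. The only cosmetic point is that the identity S$^{=}$ $=$ co-S$^{\neq}$ is a direct consequence of the definitions (as you in fact show), rather than literally the statement of Corollary \ref{corollary:SLeq-equal-coNQL}, which is that identity combined with Theorem \ref{theorem:maintheorem}; this does not affect the validity of your argument.
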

~~\\
We will now provide a proof of the following theorem.
\\
\begin{theorem}
      \label{theorem:word-problem-is-stochastic}
      $ L_{wp}(k) \in $ S$ ^{>} $ for any $ k \ge 2 $.
\end{theorem}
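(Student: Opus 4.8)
The plan is to exhibit, for each $ k\ge 2 $, a single GPFA $ \mathcal{G} $ and a cutpoint such that $ L_{wp}(k)=\mathbb{L}(\mathcal{G},>\lambda) $, and then invoke Fact~\ref{fact:GPFA-PFA} to pass to a genuine PFA and conclude membership in S$^{>}$. The starting point is a classical fact from combinatorial group theory: every finitely generated free group embeds faithfully into $ SL_{2}(\mathbb{Z}) $. Indeed, the Sanov subgroup generated by $ \left(\begin{smallmatrix}1&2\\0&1\end{smallmatrix}\right) $ and $ \left(\begin{smallmatrix}1&0\\2&1\end{smallmatrix}\right) $ is free of rank two, and a free group of rank two contains free subgroups of every finite rank, so $ \mathcal{G}^{k} $ embeds in it. I would fix such a faithful homomorphism $ \rho $ and set $ M_{\sigma}=\rho(\sigma) $ for each $ \sigma\in\Sigma $; because $ \rho $ is a homomorphism and these matrices lie in $ SL_{2}(\mathbb{Z}) $, the matrix assigned to $ \sigma_{i}^{-1} $ is exactly the (integer) inverse of the one assigned to $ \sigma_{i} $. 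Writing $ M(w)=M_{w_{1}}\cdots M_{w_{|w|}} $, faithfulness gives the key equivalence $ w\in L_{wp}(k)\iff M(w)=I $.

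Next I would encode the deviation of $ M(w) $ from the identity as the output of a GPFA. Each entry $ M(w)_{ij}=e_{i}^{\mathtt{T}}M(w)e_{j} $ is computed by a GPFA whose transition matrices are precisely the integer matrices $ M_{\sigma} $, with initial vector $ e_{i}^{\mathtt{T}} $ and final vector $ e_{j} $. Using the standard closure of GPFA-computable functions under products (Kronecker/tensor product of the underlying automata), squares, sums (direct sums), and the addition of constant functions, the quantity $ F(w)=\sum_{i,j}(M(w)_{ij}-\delta_{ij})^{2}=\|M(w)-I\|^{2} $ is computed by one GPFA; negating its final vector yields a GPFA $ \mathcal{G} $ with $ f_{\mathcal{G}}(w)=-F(w) $.

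The crux of the argument is an integrality gap. Since every $ M_{\sigma} $ has integer entries, $ M(w) $ is an integer matrix, so $ F(w) $ is a nonnegative integer; it equals $ 0 $ exactly when $ w\in L_{wp}(k) $ and is at least $ 1 $ otherwise. Hence $ f_{\mathcal{G}}(w)=-F(w)>-1 $ if and only if $ F(w)=0 $ if and only if $ w\in L_{wp}(k) $, i.e. $ L_{wp}(k)=\mathbb{L}(\mathcal{G},>-1) $. Applying Fact~\ref{fact:GPFA-PFA} with $ \lambda_{1}=-1 $ then produces a PFA $ \mathcal{P} $ and a cutpoint $ \lambda_{2}\in(0,1)\subseteq[0,1) $ with $ \mathbb{L}(\mathcal{P},>\lambda_{2})=L_{wp}(k) $, establishing $ L_{wp}(k)\in $ S$^{>}$. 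The same construction in fact also covers $ k=1 $, recovering Fact~\ref{fact:word-problem-one-SL}.

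I expect the main obstacle to be mostly bookkeeping: verifying the GPFA closure operations (tensor products for entrywise squares, direct sums for the summation, constants for the $ -\delta_{ij} $ terms) and pinning down the explicit faithful integer representation. The conceptual heart, however, is the observation that integrality converts the equality test ``$ M(w)=I $'' into the strict cutpoint condition ``$ F(w)<1 $''; this is exactly what lets us land in S$^{>}$ rather than merely in S$^{=}$, where the bare equality condition would naturally place it (cf. Corollary~\ref{corollary:free-group-SLeq}).
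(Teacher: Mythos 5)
Your proposal is correct, but it takes a genuinely different route from the paper's. The paper embeds the rank-$k$ free group into SO$_{3}(\mathbb{Q})$ (Fact~\ref{fact:SQ3-rational-free-group}), builds a three-state \emph{rational} GPFA whose value on $w$ is the $(1,1)$ entry of the product rotation, concludes $L_{wp}(k)\in$ S$^{=}_{rat}$ with cutpoint $1$, and then delegates the conversion of an equality condition into a strict cutpoint entirely to Turakainen's theorem S$^{=}_{rat}\subsetneq$ S$^{>}$ (Fact~\ref{fact:SLeq-rat-subset-of-SL}). You instead embed the free group into $SL_{2}(\mathbb{Z})$ via Sanov's theorem (together with the standard fact that a rank-two free group contains free subgroups of every finite rank), assemble a GPFA computing $-\|M(w)-I\|^{2}$ from tensor products, direct sums, and constants, and exploit integrality to get a \emph{uniform} gap: the value is $0$ on members and at most $-1$ on non-members, so the cutpoint $-1$ separates the language, and Fact~\ref{fact:GPFA-PFA} then lands you in S$^{>}$. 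Your approach buys three things: it internalizes the equality-to-inequality conversion (the role played by S$^{=}_{rat}\subseteq$ S$^{>}$ in the paper) in an elementary way, since integrality gives a length-independent gap rather than the denominator analysis hidden inside Turakainen's theorem; it handles $k=1$ uniformly, recovering Fact~\ref{fact:word-problem-one-SL}; and, because you test the whole matrix rather than a single entry, you sidestep a point the paper's verification glosses over --- for a rotation matrix, a $(1,1)$ entry equal to $1$ only means the rotation fixes $e_{1}$, so the paper's ``if and only if'' implicitly requires that no non-identity element of the chosen free subgroup of SO$_{3}(\mathbb{Q})$ stabilizes $(1,0,0)$. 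What the paper's approach buys in exchange is brevity and a sharper classification: given its cited facts, the construction is a one-line, three-state automaton, and it explicitly places $L_{wp}(k)$ in S$^{=}_{rat}$, tying in with Corollary~\ref{corollary:free-group-SLeq}.
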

~~

In fact, Theorem \ref{theorem:word-problem-is-stochastic} was stated
as a corollary on page 1463 of \cite{BP02}, but the purported proof
there
was based on the claim that co-NMCL$ \subseteq $MCL$ \subseteq $ S$ ^{>} $.
It is however known \cite{BC01B}, as we mentioned above, that a member
of co-NMCL ($ L_{eq} $) lies outside MCL.
Furthermore, the same demonstration can be easily extended to $
L_{wp}(k) $, where $ k \in \mathbb{Z}^{+} $.
\\
\begin{corollary}
      \label{corollary:word-problem-is-not-in-MCL}
      $ L_{wp}(k) \notin $ MCL for any $ k \in \mathbb{Z}^{+} $.
\end{corollary}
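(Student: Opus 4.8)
The plan is to reduce the claim to the known result of \cite{BC01B} that $ L_{eq} \notin $ MCL, exactly as the remark preceding this corollary anticipates. The key observation is purely group-theoretic: inside the free group $ \mathcal{G}^{k} $, any word written using only the two letters $ \sigma_{1} $ and $ \sigma_{1}^{-1} $ reduces to $ \sigma_{1}^{t} $, where $ t $ is the net exponent of $ \sigma_{1} $, and this equals the identity $ \imath $ precisely when the number of occurrences of $ \sigma_{1} $ equals the number of occurrences of $ \sigma_{1}^{-1} $. Thus, when we restrict attention to the subalphabet $ \{ \sigma_{1}, \sigma_{1}^{-1} \} $ of $ \Sigma $, the membership test for $ L_{wp}(k) $ degenerates into exactly the balance condition defining $ L_{eq} $.

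To exploit this, I would introduce the letter-to-letter homomorphism $ h : \{a,b\}^{*} \rightarrow \Sigma^{*} $ given by $ h(a)=\sigma_{1} $ and $ h(b)=\sigma_{1}^{-1} $. By the observation above, for every $ w \in \{a,b\}^{*} $ we have $ h(w) \in L_{wp}(k) $ if and only if $ |w|_{a}=|w|_{b} $, that is, if and only if $ w \in L_{eq} $; hence $ h^{-1}(L_{wp}(k))=L_{eq} $. (For $ k=1 $ the basis is $ \{\sigma_{1},\sigma_{1}^{-1}\} $, so $ h $ is a mere renaming of the alphabet and $ L_{wp}(1) $ is literally $ L_{eq} $, making the corollary immediate in that case.) The remaining task is to verify that MCL is closed under inverse letter-to-letter homomorphism. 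Suppose $ L_{wp}(k) \in $ MCL, recognized by an MCQFA $ \mathcal{M} $ with transition matrices $ \{\mathsf{U}_{\sigma \in \Gamma}\} $ and accepting state set $ F $, at cutpoint $ \lambda \in [0,1) $. I would build an MCQFA $ \mathcal{M}^{'} $ over the alphabet $ \{a,b\} $ by setting $ \mathsf{U}^{'}_{a}=\mathsf{U}_{\sigma_{1}} $ and $ \mathsf{U}^{'}_{b}=\mathsf{U}_{\sigma_{1}^{-1}} $, retaining the end-marker matrices $ \mathsf{U}^{'}_{\cent}=\mathsf{U}_{\cent} $ and $ \mathsf{U}^{'}_{\dollar}=\mathsf{U}_{\dollar} $, and leaving $ F $ unchanged.

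Each $ \mathsf{U}^{'}_{\sigma} $ is unitary since it is a copy of a unitary matrix of $ \mathcal{M} $, so $ \mathcal{M}^{'} $ is a legitimate MCQFA. For every $ w \in \{a,b\}^{*} $, the operator product $ \mathsf{U}^{'}_{\dollar}\mathsf{U}^{'}_{w_{|w|}} \cdots \mathsf{U}^{'}_{w_{1}}\mathsf{U}^{'}_{\cent} $ coincides with $ \mathsf{U}_{\dollar}\mathsf{U}_{h(w)_{|w|}} \cdots \mathsf{U}_{h(w)_{1}}\mathsf{U}_{\cent} $, so the two machines reach identical final state vectors and therefore $ f_{\mathcal{M}^{'}}(w)=f_{\mathcal{M}}(h(w)) $. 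Consequently $ \mathcal{M}^{'} $ recognizes $ h^{-1}(L_{wp}(k))=L_{eq} $ with the same cutpoint $ \lambda $, contradicting $ L_{eq} \notin $ MCL and completing the proof. There is essentially no serious obstacle: the construction is a direct symbol-by-symbol simulation, and the only points needing a line of care are that the substituted matrices remain unitary (immediate) and that the end-markers, which $ h $ leaves untouched, are handled consistently between $ \mathcal{M} $ and $ \mathcal{M}^{'} $ (also immediate). The one genuinely external ingredient is the result $ L_{eq} \notin $ MCL of \cite{BC01B}, which we take as given.
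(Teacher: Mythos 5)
Your proof is correct, but it takes a genuinely different route from the paper's. The paper offers no self-contained argument for this corollary: in the paragraph preceding it, it simply asserts that the demonstration of \cite{BC01B} that $L_{eq} \notin$ MCL ``can be easily extended'' to $L_{wp}(k)$ --- i.e., the reader is asked to rerun the internals of Bertoni and Carpentieri's proof with the word-problem language in place of $L_{eq}$. You instead treat that result as a black box and reduce to it: since a basis element of a free group has infinite order, a word over $\{\sigma_{1},\sigma_{1}^{-1}\}$ multiplies to the identity exactly when its net exponent vanishes, so your letter-to-letter homomorphism $h(a)=\sigma_{1}$, $h(b)=\sigma_{1}^{-1}$ indeed satisfies $h^{-1}(L_{wp}(k))=L_{eq}$, and closure of MCL under inverse homomorphism finishes the argument (your remark that $L_{wp}(1)$ is $L_{eq}$ up to renaming correctly handles $k=1$). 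Note that you did not even need to re-prove the closure property: the paper itself records, in the list at the end of Section 5, that both MCL and NMCL are closed under inverse homomorphism \cite{MC00}, so granting that cited fact your proof collapses to a few lines; your explicit substitution $\mathsf{U}'_{a}=\mathsf{U}_{\sigma_{1}}$, $\mathsf{U}'_{b}=\mathsf{U}_{\sigma_{1}^{-1}}$ with end-marker matrices retained is a correct, if redundant, verification of the letter-to-letter case, and the identity $f_{\mathcal{M}'}(w)=f_{\mathcal{M}}(h(w))$ preserves the cutpoint comparison as required. As for what each approach buys: yours is modular and fully rigorous without ever inspecting the technique of \cite{BC01B} (which exploits structural properties of MCQFA acceptance functions), whereas the paper's intended extension, had it been spelled out, would exhibit directly why every $L_{wp}(k)$ escapes MCL by the same mechanism; since the paper leaves that extension entirely to the reader, your reduction is arguably the more complete proof of the corollary as stated.
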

~~

Since it is still an open problem whether S$ ^{=} \subseteq $ S$ ^{>} $ or not,
we cannot use Corollary \ref{corollary:free-group-SLeq} directly to
prove Theorem \ref{theorem:word-problem-is-stochastic}.
Instead, we will focus on a subclass of S$ ^{=} $ that is known to be
a subset of S$ ^{>} $.
\\
\begin{definition}
      \label{definition:SLeq-rat}
      \cite{Tu69B}
      S$ ^{=}_{rat} $ is the class of the languages of the form $
\mathbb{L}(\mathcal{G},=\lambda) $,
      where $ \mathcal{G} $ is a \textit{rational} GPFA, (i.e. one whose
transition matrices and initial and final vectors contain only
rational numbers,)
      and $ \lambda $ is a rational number. Additionally, S$ ^{\neq}_{rat}
$ is the class of languages whose complements are in S$ ^{=}_{rat} $.
\end{definition}
~~
\begin{fact}
      \label{fact:SLeq-rat-subset-of-SL}
      \cite{Tu69B}
      S$ ^{=}_{rat} $ $ \subsetneq $ S$ ^{>} $.
\end{fact}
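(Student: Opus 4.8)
The plan is to establish the two halves of the proper inclusion separately, the containment being the substantive part. For S$ ^{=}_{rat} $ $ \subseteq $ S$ ^{>} $, the idea is to convert the equality test $ f_{\mathcal{G}}(w)=\lambda $ into a \emph{strict} cutpoint test by squaring the deviation and then using rationality to force a uniform gap between members and nonmembers. Let $ \mathcal{G} $ be a rational GPFA and $ \lambda $ a rational number with $ L=\mathbb{L}(\mathcal{G},=\lambda) $. First I would pass to a rational GPFA $ \mathcal{G}_{0} $ computing $ f_{\mathcal{G}_{0}}(w)=f_{\mathcal{G}}(w)-\lambda $; since GPFA weights may be arbitrary reals, this is a routine augmentation by one coordinate carrying the constant $ -\lambda $. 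Thus $ L=\{w \mid f_{\mathcal{G}_{0}}(w)=0\} $.

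Next I would clear denominators. Picking a positive integer $ d $ that simultaneously clears the denominators of the initial vector, the final vector, and every transition matrix of $ \mathcal{G}_{0} $, and replacing each of these by its product with $ d $, yields an integer-valued GPFA whose acceptance function on an input of length $ n $ is the integer $ q(w)=d^{\,n+2}f_{\mathcal{G}_{0}}(w) $. Hence $ q(w)=0 $ exactly when $ w \in L $, while $ |q(w)| \ge 1 $ otherwise. Forming the Kronecker-product GPFA, with matrices $ M_{\sigma}\otimes M_{\sigma} $, initial vector $ u_{0}\otimes u_{0} $, and \emph{negated} final vector $ -(g\otimes g) $, produces a GPFA $ \mathcal{H} $ with $ f_{\mathcal{H}}(w)=-q(w)^{2} $, which is $ 0 $ on members of $ L $ and $ \le -1 $ on nonmembers. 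Consequently $ L=\mathbb{L}(\mathcal{H},>-\tfrac{1}{2}) $. Applying Fact \ref{fact:GPFA-PFA} with $ \lambda_{1}=-\tfrac{1}{2} $ gives a PFA $ \mathcal{P} $ and a cutpoint $ \lambda_{2}\in(0,1) $ with $ L=\mathbb{L}(\mathcal{P},\lambda_{2}) $, so $ L\in $ S$ ^{>} $.

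The properness then follows with little extra work. Every language in S$ ^{=}_{rat} $ has the form $ \mathbb{L}(\mathcal{G},=\lambda) $ for a (rational) GPFA, which by Fact \ref{fact:GPFA-PFA} equals $ \mathbb{L}(\mathcal{P},=\lambda_{2}) $ for a PFA $ \mathcal{P} $ and some $ \lambda_{2}\in(0,1)\subseteq[0,1] $; hence S$ ^{=}_{rat} $ $ \subseteq $ S$ ^{=} $. By Fact \ref{fact:SLneq-propersubsetSL} there is a language in S$ ^{>} $ $ \setminus $ S$ ^{=} $, and any such language lies outside S$ ^{=}_{rat} $ as well, witnessing S$ ^{=}_{rat} $ $ \subsetneq $ S$ ^{>} $.

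The main obstacle is the denominator-clearing step, and more precisely recognizing that it is indispensable. The squared-deviation function $ (f_{\mathcal{G}}(w)-\lambda)^{2} $ is nonnegative and vanishes exactly on $ L $ for \emph{any} GPFA, but without rationality its positive values can approach $ 0 $ arbitrarily closely, so no single cutpoint separates $ L $ from its complement. Integrality is what forces the uniform gap of $ 1 $ that makes the strict cutpoint $ -\tfrac{1}{2} $ work, and this is exactly the ingredient unavailable for general S$ ^{=} $, consistent with the fact, noted earlier in the paper, that whether S$ ^{=} $ $ \subseteq $ S$ ^{>} $ remains open.
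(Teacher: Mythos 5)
Your proof is correct; note that the paper itself offers no argument for this statement, importing it as a fact from Turakainen's work, and your reconstruction---shifting by the rational $\lambda$, clearing denominators so that $q(w)$ is an integer and hence $|q(w)|\ge 1$ on nonmembers, squaring via the Kronecker construction to obtain $f_{\mathcal{H}}(w)=-q(w)^{2}$ separated from $0$ by a uniform gap, and finishing with Fact \ref{fact:GPFA-PFA}---is essentially the classical argument from that cited reference. Your derivation of properness, via S$ ^{=}_{rat} \subseteq $ S$ ^{=} $ (using that equivalence under cutpoint separation preserves the $=$-language) together with Fact \ref{fact:SLneq-propersubsetSL}, is also sound, and your closing observation correctly isolates integrality as precisely the ingredient unavailable for general S$ ^{=} $, in line with the paper's open problem of whether S$ ^{=} \subseteq $ S$ ^{>} $.
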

~~
\begin{definition}
      \label{definition:SQ3-rational}
      SO$ _{3}(\mathbb{Q}) $ is the group of rotations on $ \mathbb{R}^{3}
$ that are
      $ 3 \times 3 $ dimensional orthogonal matrices having only rational
entries with determinant $ +1 $.
\end{definition}
~~
\begin{fact}
       \label{fact:SQ3-rational-free-group}
       \cite{Sw94,DD06} For any $ k \ge 2 $, SO$ _{3}(\mathbb{Q}) $
       contains a free subgroup with rank $ k $, namely $ \mathcal{S}^{k} $.
\end{fact}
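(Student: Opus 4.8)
The plan is to establish the stronger statement $L_{wp}(k) \in$ S$^{=}_{rat}$ and then invoke Fact \ref{fact:SLeq-rat-subset-of-SL} to conclude $L_{wp}(k) \in$ S$^{>}$. The starting point is Fact \ref{fact:SQ3-rational-free-group}, which provides, for $k \ge 2$, a rank-$k$ free subgroup $\mathcal{S}^{k} = \langle M_1, \ldots, M_k \rangle$ of SO$_3(\mathbb{Q})$ whose generators $M_1, \ldots, M_k$ are rational rotation matrices. First I would define a homomorphism $\phi$ from the free group $\mathcal{G}^{k}$ of Definition \ref{definition:L_word-problem-free-group} into SO$_3(\mathbb{Q})$ by setting $\phi(\sigma_i) = M_i$ and $\phi(\sigma_i^{-1}) = M_i^{-1} = M_i^{\mathtt{T}}$. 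Because $\mathcal{S}^{k}$ is free of rank $k$ on $\{M_1, \ldots, M_k\}$, this $\phi$ is an isomorphism onto $\mathcal{S}^{k}$ and hence faithful, so for any input $w = w_1 \cdots w_{|w|}$ we have $w \in L_{wp}(k)$ if and only if $P := M_{w_1} \cdots M_{w_{|w|}} = I_3$. All the matrices $M_{\sigma}$ ($\sigma \in \Sigma$) are rational.

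The next step is to replace the matrix identity $P = I_3$ by a single rational equality on a scalar that a GPFA can compute. Since $P$ is itself a rotation (orthogonal with determinant $+1$), its eigenvalues are $1$ and $e^{\pm i\theta}$, so $\mathrm{tr}(P) = 1 + 2\cos\theta \le 3$, with equality if and only if $\theta = 0$, i.e. if and only if $P = I_3$. Consequently $w \in L_{wp}(k)$ if and only if $\mathrm{tr}(P) = 3$.

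It then remains to build a rational GPFA $\mathcal{G}$ whose acceptance function equals $\mathrm{tr}(P)$. I would use a $9$-dimensional state space, viewed as three copies of $\mathbb{R}^3$, and set each transition matrix to the block-diagonal $\mathsf{A}_{\sigma} = M_{\sigma} \oplus M_{\sigma} \oplus M_{\sigma}$, so that processing $w$ applies $P \oplus P \oplus P$. Letting $e_1, e_2, e_3$ be the standard basis of $\mathbb{R}^3$, I would take $\mathsf{v}_{0}$ to be the row vector obtained by concatenating $e_1^{\mathtt{T}}, e_2^{\mathtt{T}}, e_3^{\mathtt{T}}$ and $\mathsf{f}$ the column vector obtained by stacking $e_1, e_2, e_3$. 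A direct computation then gives
\[ f_{\mathcal{G}}(w) = \sum_{i=1}^{3} e_i^{\mathtt{T}} P\, e_i = \mathrm{tr}(P). \]
Every entry of $\mathsf{v}_{0}$, $\mathsf{f}$ and the $\mathsf{A}_{\sigma}$ is rational, so $\mathcal{G}$ is a rational GPFA in the sense of Definition \ref{definition:SLeq-rat}, and $L_{wp}(k) = \mathbb{L}(\mathcal{G}, = 3)$ with the rational cutpoint $3$; thus $L_{wp}(k) \in$ S$^{=}_{rat}$, and Fact \ref{fact:SLeq-rat-subset-of-SL} finishes the proof. The empty word is consistent with this, since it reduces to $\imath$ and $f_{\mathcal{G}}(\varepsilon) = \mathrm{tr}(I_3) = 3$.

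The essential content of the argument, and the only place where something nontrivial is used, is the existence of a faithful rational rotation representation of the free group, which Fact \ref{fact:SQ3-rational-free-group} supplies; everything after that is elementary. The two points that require care are confirming that $\phi$ is genuinely faithful, so that no nontrivial reduced word maps to $I_3$, and verifying that the trace criterion $\mathrm{tr}(P) = 3 \iff P = I_3$ holds for every rotation rather than merely generically. Neither of these is a serious obstacle, but both are needed to guarantee that the equality $f_{\mathcal{G}}(w) = 3$ characterizes membership in $L_{wp}(k)$ exactly.
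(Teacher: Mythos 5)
What you prove is not Fact~\ref{fact:SQ3-rational-free-group} itself --- that is a cited result of \cite{Sw94,DD06}, which the paper (like you) takes as given --- but Theorem~\ref{theorem:word-problem-is-stochastic}, and indeed the proof that follows the Fact in the paper is exactly the proof of that theorem, so that is the right comparison. Your argument is correct and shares the paper's skeleton: realize the free group faithfully inside SO$ _{3}(\mathbb{Q}) $ via $ \mathcal{S}^{k} $, build a rational GPFA whose transition matrices are the generating rotations, place $ L_{wp}(k) $ in S$ ^{=}_{rat} $ with a rational cutpoint, and finish with Fact~\ref{fact:SLeq-rat-subset-of-SL}. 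The genuine difference is the scalar you test. The paper uses the 3-state machine with $ \mathsf{v}_{0}=(1,0,0) $ and $ \mathsf{f}=(1,0,0)^{\mathtt{T}} $, i.e., it tests the $ (1,1) $ entry of the product $ P $ against cutpoint $ 1 $ and asserts it is ``obvious'' that $ f_{\mathcal{G}_{k}}(w)=1 $ iff $ P=I_{3} $; you instead compute $ \mathrm{tr}(P) $ with a 9-state block-diagonal machine and test against cutpoint $ 3 $. Your extra care here actually buys something: for an orthogonal $ P $, $ P_{11}=1 $ is equivalent to $ Pe_{1}=e_{1} $ (the first column is a unit vector with first entry $ 1 $), i.e., to $ P $ fixing the first coordinate axis, \emph{not} to $ P=I_{3} $. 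So the paper's 3-state test is sound only under the additional, unstated hypothesis that no nontrivial element of $ \mathcal{S}^{k} $ is a rotation about the axis through $ (1,0,0)^{\mathtt{T}} $ --- a hypothesis that fails for standard bases of free rotation subgroups, e.g., when one generator is itself a rotation about the first coordinate axis, as in the classical pairs of rotations by $ \arccos(1/3) $ about two perpendicular coordinate axes. Your criterion $ \mathrm{tr}(P)\le 3 $ with equality iff $ P=I_{3} $ holds for \emph{every} orthogonal matrix (entries lie in $ [-1,1] $, and diagonal entries equal to $ 1 $ together with unit rows force the off-diagonal entries to vanish), so your construction is valid for any choice of $ \mathcal{S}^{k} $, at the negligible cost of tripling the state count. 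The remaining points you flag are handled correctly: faithfulness of $ \phi $ follows from $ \mathcal{S}^{k} $ being free on the given generators, unreduced input words are covered because the matrix product computes the group element regardless of reduction, all entries and the cutpoint $ 3 $ are rational as Definition~\ref{definition:SLeq-rat} requires, and the empty word is consistent.
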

\begin{proof}[Proof of Theorem \ref{theorem:word-problem-is-stochastic}]
      For $ L_{wp}(k) $, we define a rational GPFA
      $ \mathcal{G}_{k}=(\{s_{1},s_{2},s_{3}\}, \Sigma,
      \{\mathsf{A}_{\sigma \in \Sigma}\},\mathsf{v}_{0},\mathsf{f}) $,
      where
      \begin{enumerate}
              \item $
\Sigma=\{R_{1},\ldots,R_{k},R_{1}^{-1},\ldots,R_{k}^{-1}\} $
                      is a basis of $ \mathcal{S}^{k} $;
              \item $ \mathsf{A}_{\sigma}=\sigma $ for each $ \sigma
\in \Sigma $;
              \item $ \mathsf{v}_{0}=(1,0,0) $;
              \item $ \mathsf{f}=(1,0,0)^{\mathtt{T}} $.
      \end{enumerate}
      It is obvious that $ w \in L_{wp}(k) $ if and only if
      $ \mathsf{A}_{1} \ldots \mathsf{A}_{w} = I_{3 \times 3} $ if and only if
      $ f_{\mathcal{G}_{k}}(w)=\mathsf{v}_{0} \mathsf{A}_{1} \ldots
\mathsf{A}_{w} \mathsf{f} = 1 $,
      where $ w \in \Sigma^{*} $.
      Thus, $ L_{wp}(k) \in $ S$ ^{=}_{rat} $ by selecting the
cutpoint as $ 1 $.
      We can conclude with Fact \ref{fact:SLeq-rat-subset-of-SL}.
\end{proof}

%
\section{Closure Properties}
%

The previously discovered closure properties of $ S^{>} $, $ S^{\neq} $ and $ S^{=} $ are listed below.
\\
\begin{fact}
      \label{fact:closure-properties-SL}~
      \begin{enumerate}
      \item $ S^{>} $ is not closed under union and intersection \cite{Fl73A,Fl73B,La74}.
      \item $ S^{>} $ is closed under union and intersection with a regular language \cite{Bu68,Tu68}.
              \item $ S^{>} $ is closed under reversal \cite{Tu69}.
              \item $ S^{>} $ is not closed under concatenation, Kleene closure, and homomorphism \cite{Co69,Tu71}.
              \item $ S^{>} $ is closed under complementation over unary alphabets \cite{Fl73A}.
      \end{enumerate}
\end{fact}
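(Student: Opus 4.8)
The plan is to treat this statement as a compilation of five classical properties of S$^{>}$ and to organise the argument into the three positive closure results (items 2, 3, 5), which admit explicit automaton constructions, and the two non-closure results (items 1, 4), which proceed by exhibiting counterexample languages. I would prove each positive item by a direct construction on PFA's (detouring through GPFA's where convenient) and each negative item by producing witnesses and certifying that the resulting language falls outside S$^{>}$.

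First the positive results. For closure under reversal (item 3), given a PFA $\mathcal{P}$ with $f_{\mathcal{P}}(w)=\mathsf{v}_{0}\mathsf{A}_{w_{1}}\cdots\mathsf{A}_{w_{|w|}}\mathsf{f}$, I would build a GPFA whose value on $w$ reproduces $f_{\mathcal{P}}(w^{r})$ by transposing every transition matrix and interchanging the roles of the initial and final vectors, so that $\mathsf{f}^{\mathtt{T}}\mathsf{A}_{w_{|w|}}^{\mathtt{T}}\cdots\mathsf{A}_{w_{1}}^{\mathtt{T}}\mathsf{v}_{0}^{\mathtt{T}}$ is exactly the acceptance value read in reverse order; converting this GPFA back into a genuine PFA with the same three-way cutpoint partition is then immediate from Fact \ref{fact:GPFA-PFA}. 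For closure under intersection and union with a regular language (item 2), I would form the product of $\mathcal{P}$ with a DFA for the regular operand $R$; because the DFA component evolves deterministically, the product is again a PFA, and by choosing the product accepting states appropriately---the pairs accepting in both components for intersection, and all DFA-accepting pairs together with the PFA-accepting pairs for union---the same cutpoint separates exactly $L\cap R$, respectively $L\cup R$, since on any input the DFA sits in a single state and therefore either forces acceptance probability to $1$ or leaves it equal to $f_{\mathcal{P}}(w)$.

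For unary complementation (item 5), over a one-letter alphabet the acceptance probability is a single matrix power $\mathsf{v}_{0}\mathsf{A}^{n}\mathsf{f}$, and I would analyse this sequence through the spectral (Perron--Frobenius) decomposition of the stochastic matrix $\mathsf{A}$: up to exponentially decaying terms, $f_{\mathcal{P}}(a^{n})$ is a bounded, almost-periodic combination of roots-of-unity oscillations, and this structure lets one build a PFA for the complement by a suitable relocation of the cutpoint. For the non-closure results (items 1 and 4), the plan is purely by counterexample: exhibit explicit stochastic languages whose union, intersection, concatenation, Kleene closure, or homomorphic image is a language that can be shown to lie outside S$^{>}$, typically through a counting or Dieudonn\'e-type argument in the spirit of Fact \ref{fact:dieu} contradicting a necessary condition for stochasticity.

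The main obstacle is the non-closure half (items 1 and 4): the difficulty is never in constructing the candidate languages but in \emph{certifying} that the result is not stochastic, since S$^{>}$ admits no convenient pumping lemma and the acceptance probability of an arbitrary PFA is hard to control. Establishing non-membership therefore rests on ad hoc growth- or rank-based arguments bounding the behaviour of $f_{\mathcal{A}}$, and each operation in item 4 needs its own tailored witness. The unary complementation result (item 5) is the subtlest of the positive cases, as it hinges on a careful accounting of the eigenvalues of $\mathsf{A}$ on the unit circle and the resulting eventually almost-periodic behaviour of $f_{\mathcal{P}}(a^{n})$.
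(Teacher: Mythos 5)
You should first note that the paper itself offers no proof of this statement: it is labelled a \emph{Fact} precisely because all five items are imported from the cited literature (Flachs, Laing, Bukharaev, Turakainen, et al.), so there is no ``paper proof'' to match. Judged on its own merits, your plan for the positive items 2 and 3 is correct and standard: the product-with-a-DFA construction does yield closure under union and intersection with a regular language, and the transpose-and-swap construction for reversal, followed by an appeal to Fact \ref{fact:GPFA-PFA}, is exactly the argument the paper itself reuses later in Theorem \ref{theorem:close-reversal}.

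The genuine gap is in items 1 and 4, which you yourself identify as the hard part but then leave unproved: you name no witness languages and, more importantly, you propose no valid method for certifying non-membership in S$^{>}$. The specific tool you invoke, ``a Dieudonn\'e-type argument in the spirit of Fact \ref{fact:dieu},'' cannot do this job: Fact \ref{fact:dieu} is a necessary condition for membership in S$^{=}$, not in S$^{>}$, and the paper itself exhibits the obstruction --- Lemma \ref{lemma:L_twosided-2} uses Fact \ref{fact:dieu} to show $L_{lt}\notin$ S$^{=}\cup$ S$^{\neq}$, while Corollary \ref{corollary:L_twosided-2} records that $L_{lt}$ \emph{is} stochastic. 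So a language can fail the Fact \ref{fact:dieu} condition and still lie in S$^{>}$; violating it proves nothing about stochasticity. The actual non-closure proofs in the cited works rest on much more delicate, operation-specific non-stochasticity arguments, and without them items 1 and 4 remain unestablished. Item 5 has a smaller but real gap: the spectral/almost-periodic description of $f_{\mathcal{P}}(a^{n})$ is the right starting point, but the crux --- handling the boundary set where $f_{\mathcal{P}}(a^{n})=\lambda$, which must switch sides under complementation, and turning the ``relocation of the cutpoint'' into an actual PFA --- is exactly what is waved away in the phrase ``a suitable relocation,'' so that item too is a direction rather than a proof.
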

~~
\begin{fact}
      \label{fact:closure-properties-Seq-Sneq}~
      \begin{enumerate}
              \item Both S$ ^{\neq} $ and S$ ^{=} $ are closed under union and
intersection \cite{Pa71}.
              \item Neither S$ ^{\neq} $ nor S$ ^{=} $ is closed under
complementation \cite{Di71}.
              \item S$ ^{>} $  is closed under intersection with a
member of S$ ^{\neq} $ \cite{Pa71}.
      \end{enumerate}
\end{fact}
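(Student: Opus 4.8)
The plan is to derive all three parts from a single toolkit of GPFA arithmetic. Given GPFAs computing functions $ f_{1} $ and $ f_{2} $, their direct sum (block-diagonal transition matrices, with the initial and final vectors concatenated) computes $ f_{1}+f_{2} $, while their Kronecker product computes $ f_{1}\cdot f_{2} $; in particular, tensoring a GPFA with itself computes the square of its function. Combined with the cutpoint shift of Fact \ref{fact:GPFA-GPFA} and the GPFA-to-PFA conversion of Fact \ref{fact:GPFA-PFA}, which preserves the $ < $, $ = $, $ > $, and $ \neq $ relations at the cutpoint, this lets me realize any polynomial combination of given acceptance functions by a GPFA and then read off the resulting language as a member of S$ ^{=} $, S$ ^{\neq} $, or S$ ^{>} $. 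Throughout I use that S$ ^{\neq} $ $ = $ co-S$ ^{=} $ by definition, and that every S$ ^{\neq} $ language has a GPFA recognizing it with one-sided cutpoint $ 0 $ (Lemma \ref{lemma:Sneq-GPFAonesidedcutpoint0}).

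For part 1, I would take $ L_{1},L_{2}\in $ S$ ^{=} $ and, via Fact \ref{fact:GPFA-GPFA}, write $ L_{i}=\{w\mid g_{i}(w)=0\} $. Since $ g_{1}^{2}+g_{2}^{2} $ vanishes exactly when both $ g_{i} $ vanish, while $ g_{1}\cdot g_{2} $ vanishes exactly when at least one does, the GPFAs computing these two functions recognize $ L_{1}\cap L_{2} $ and $ L_{1}\cup L_{2} $ respectively with cutpoint $ 0 $ under ``$ = $'', placing both in S$ ^{=} $ after Fact \ref{fact:GPFA-PFA}. Closure of S$ ^{\neq} $ then follows by De Morgan from S$ ^{\neq} $ $ = $ co-S$ ^{=} $, or directly: if instead $ L_{1},L_{2}\in $ S$ ^{\neq} $, take GPFAs with $ L_{i}=\{w\mid g_{i}(w)\neq 0\} $ from Lemma \ref{lemma:Sneq-GPFAonesidedcutpoint0} and observe that $ \{w\mid g_{1}^{2}+g_{2}^{2}\neq 0\}=L_{1}\cup L_{2} $ and $ \{w\mid g_{1}\cdot g_{2}\neq 0\}=L_{1}\cap L_{2} $, both in S$ ^{\neq} $ by Fact \ref{fact:GPFA-PFA}.

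Part 3 is where the one real idea is needed. Writing $ L_{1}=\{w\mid f_{1}(w)>\lambda_{1}\} $ for a PFA acceptance function $ f_{1} $ and, by Lemma \ref{lemma:Sneq-GPFAonesidedcutpoint0}, $ L_{2}=\{w\mid g_{2}(w)\neq 0\} $, I would build the GPFA computing $ h(w)=(f_{1}(w)-\lambda_{1})\cdot g_{2}(w)^{2} $, where cutpoint-shifting supplies the factor $ f_{1}-\lambda_{1} $ and self-tensoring supplies $ g_{2}^{2} $. Because $ g_{2}^{2} $ is nonnegative and strictly positive exactly on $ L_{2} $, the product $ h $ is positive precisely when $ f_{1}-\lambda_{1}>0 $ and $ g_{2}^{2}>0 $, i.e.\ exactly on $ L_{1}\cap L_{2} $; hence $ L_{1}\cap L_{2}=\mathbb{L}(\mathcal{G},>0)\in $ S$ ^{>} $ by Fact \ref{fact:GPFA-PFA}. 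The step I expect to be the main obstacle is exactly this sign analysis: one must square $ g_{2} $ rather than use it directly, so that the ``$ \neq 0 $'' membership test for $ L_{2} $ becomes a nonnegative quantity and the sign of $ h $ is governed solely by $ f_{1}-\lambda_{1} $; multiplying by an unsquared, sign-changing $ g_{2} $ would corrupt the strict-inequality characterization.

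Finally, for part 2 I would use that ``S$ ^{=} $ is closed under complementation'' is equivalent to S$ ^{=} $ $ = $ co-S$ ^{=} $ $ = $ S$ ^{\neq} $ (and symmetrically for S$ ^{\neq} $), so it suffices to separate the two classes. This separation is already available: $ L_{eq}\in $ S$ ^{=} $ but $ L_{eq}\notin $ S$ ^{\neq} $ --- equivalently $ L_{neq}=\overline{L_{eq}}\notin $ S$ ^{=} $ --- the latter provable from the pumping property of Fact \ref{fact:dieu} in the manner of Theorem \ref{theorem:L_pal}. Hence $ L_{eq} $ lies in S$ ^{=} $ $ \setminus $ S$ ^{\neq} $, so S$ ^{=} $ $ \neq $ S$ ^{\neq} $ and neither class is closed under complementation.
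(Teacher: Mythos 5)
Your proposal is correct, but there is nothing in the paper to compare it against line by line: the statement is a \emph{Fact} that the paper imports from the literature with citations to Paz and Dieu, giving no proof of its own. What you have done is reconstruct the classical proofs, and your reconstruction is sound. The GPFA toolkit you set up (block-diagonal direct sum computes $f_{1}+f_{2}$, Kronecker product computes $f_{1}\cdot f_{2}$) is exactly the algebra underlying Paz's original arguments, and it is also the same style of construction the paper itself uses for its new closure results in Section 5, so your proof fits the paper's framework naturally. Parts 1 and 3 go through as you describe: $g_{1}^{2}+g_{2}^{2}$ and $g_{1}g_{2}$ give intersection and union under ``$=0$'' (and union and intersection under ``$\neq 0$''), and $(f_{1}-\lambda_{1})\cdot g_{2}^{2}$ is positive exactly on $L_{1}\cap L_{2}$; Fact \ref{fact:GPFA-PFA} then converts everything back to PFA form with a cutpoint in $(0,1)$, which is what membership in S$^{=}$, S$^{\neq}$, or S$^{>}$ requires. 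Part 2's reduction of both non-closure claims to the single separation S$^{=}\neq$ S$^{\neq}$ (via S$^{\neq}=$ co-S$^{=}$) is clean, and the witness works: $L_{eq}\in$ S$^{=}$ is standard, while $L_{neq}\in$ S$^{=}$ is refuted by Fact \ref{fact:dieu} with $u=\varepsilon$, $y=a$, $v=b^{n}$. Two minor imprecisions, neither a gap: (i) Fact \ref{fact:GPFA-GPFA} guarantees only equivalence under cutpoint separation, not that the shifted GPFA literally computes $f_{1}-\lambda_{1}$; this does not matter since only the sign pattern enters your argument, and in any case the exact function $f_{1}-\lambda_{1}$ is obtainable by direct-summing with a one-state constant GPFA. (ii) In part 3 the squaring of $g_{2}$ is not actually needed once you invoke Lemma \ref{lemma:Sneq-GPFAonesidedcutpoint0}, since one-sided cutpoint $0$ already forces $g_{2}\geq 0$; squaring is harmless, though, and makes the argument work even from an arbitrary ``$\neq 0$'' representation.
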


We will prove several new nontrivial closure properties of the
``one-sided" classes $ S^{\neq} $ and $ S^{=} $.

\subsection{Dissimilar closure properties of S$ ^{\neq} $ and S$ ^{=} $}

The proofs of the next few theorems use the capability of GPFA's to
implement nondeterministic branching by just adding the transition
matrices of the branches, and the nice
properties of computation with one-sided cutpoint 0.
\\
\begin{theorem}
     \label{theorem:one-sided-closure-concatenation} S$ ^{\neq} $ is
closed under concatenation.
\end{theorem}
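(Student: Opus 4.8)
The plan is to use the characterization from Lemma~\ref{lemma:Sneq-GPFAonesidedcutpoint0}, which says that a language is in S$^{\neq}$ if and only if it is recognized by some GPFA with one-sided cutpoint $0$. So suppose $L_1, L_2 \in $ S$^{\neq}$ over an alphabet $\Sigma$, and let $\mathcal{G}_1 = (S_1, \Sigma, \{\mathsf{A}^{(1)}_{\sigma}\}, \mathsf{v}^{(1)}_0, \mathsf{f}^{(1)})$ and $\mathcal{G}_2 = (S_2, \Sigma, \{\mathsf{A}^{(2)}_{\sigma}\}, \mathsf{v}^{(2)}_0, \mathsf{f}^{(2)})$ be GPFA's recognizing them with one-sided cutpoint $0$; that is, $f_{\mathcal{G}_i}(w) \ge 0$ for every $w$, and $f_{\mathcal{G}_i}(w) > 0$ exactly when $w \in L_i$. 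The goal is to build a single GPFA $\mathcal{G}$ with one-sided cutpoint $0$ whose acceptance function is strictly positive precisely on $L_1 \cdot L_2$, so that $L_1 \cdot L_2 \in $ S$^{\neq}$ again by Lemma~\ref{lemma:Sneq-GPFAonesidedcutpoint0}.

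First I would exploit the nondeterministic-branching idea mentioned in the subsection preamble: a string $w$ is in $L_1 \cdot L_2$ if and only if there is \emph{some} split $w = xy$ with $x \in L_1$ and $y \in L_2$. The natural construction runs $\mathcal{G}_1$ on a prefix, then nondeterministically ``guesses'' the split point and switches to running $\mathcal{G}_2$ on the remaining suffix, summing over all guessed split points. Concretely, I would take the state set to be the disjoint union $S_1 \cup S_2$, keep the block $\mathsf{A}^{(1)}_\sigma$ acting on the $S_1$-states and $\mathsf{A}^{(2)}_\sigma$ acting on the $S_2$-states, and add, for each symbol $\sigma$, transition weights from the $S_1$-block into the $S_2$-block that effect the ``end the $\mathcal{G}_1$-computation and start the $\mathcal{G}_2$-computation'' step. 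The initial vector would be $\mathsf{v}^{(1)}_0$ embedded in the $S_1$-block, and the final vector would be $\mathsf{f}^{(2)}$ embedded in the $S_2$-block. The key algebraic point is that the crossover block should contribute, at each position, a term proportional to $f_{\mathcal{G}_1}(x) \cdot (\text{start-up of } \mathcal{G}_2 \text{ on the suffix})$, so that the total acceptance weight becomes $\sum_{w = xy} f_{\mathcal{G}_1}(x)\, f_{\mathcal{G}_2}(y)$.

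The crucial observation that makes one-sided cutpoint $0$ survive the construction is that every term $f_{\mathcal{G}_1}(x)\, f_{\mathcal{G}_2}(y)$ in this sum is nonnegative, being a product of two nonnegative reals. Hence the total sum $f_{\mathcal{G}}(w) = \sum_{w = xy} f_{\mathcal{G}_1}(x)\, f_{\mathcal{G}_2}(y)$ is always $\ge 0$, and it is strictly positive if and only if at least one split $w = xy$ has both factors strictly positive, i.e. $x \in L_1$ and $y \in L_2$, i.e. $w \in L_1 \cdot L_2$. This is exactly the one-sided property required. This is also precisely where closure under \emph{concatenation} differs from the problematic operations: the argument relies essentially on the fact that a sum of nonnegative terms vanishes if and only if all of them vanish, which is a special feature of the cutpoint-$0$ (exclusive-stochastic) setting and fails for general cutpoints.

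The main obstacle I anticipate is purely mechanical rather than conceptual: realizing the ``crossover'' as a valid single GPFA transition step so that the weight picked up at split point $|x|$ really equals $f_{\mathcal{G}_1}(x)$ times the correct initialization of $\mathcal{G}_2$, while not accidentally generating spurious terms (for instance, continuing a $\mathcal{G}_2$-computation as though it were a new $\mathcal{G}_1$-prefix, or double-counting). Since $\mathcal{G}_1$'s acceptance is read off by the final vector $\mathsf{f}^{(1)}$, I would implement the crossover by inserting, at the moment a symbol $\sigma$ is read, an operator of the form $\mathsf{f}^{(1)} (\mathsf{v}^{(2)}_0 \mathsf{A}^{(2)}_\sigma)$ (a rank-one outer product from the $S_1$-block to the $S_2$-block), so that taking the inner product with the left state records the current value of the $\mathcal{G}_1$-computation and reinjects it as the starting amplitude for $\mathcal{G}_2$. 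Verifying by a short induction on $|w|$ that the resulting state vector, paired with $\mathsf{f}^{(2)}$, yields exactly $\sum_{w=xy} f_{\mathcal{G}_1}(x)\, f_{\mathcal{G}_2}(y)$ would complete the proof; this bookkeeping is the only delicate part, and I would handle the empty-string boundary case ($x = \varepsilon$ or $y = \varepsilon$) explicitly to make sure the edge splits are counted correctly.
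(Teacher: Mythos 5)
Your proposal is correct and takes essentially the same route as the paper's proof: a block upper-triangular GPFA on the disjoint union of the two state sets, with a rank-one crossover block per symbol, so that $f_{\mathcal{G}}(w)=\sum_{w=xy}f_{\mathcal{G}_1}(x)\,f_{\mathcal{G}_2}(y)$, which is nonnegative and positive exactly when some split succeeds. The only (immaterial) difference is the crossover convention: the paper charges the symbol read at the split to the prefix, taking $\mathsf{X}_{\sigma}=(\mathsf{A}'_{\sigma}\mathsf{f}')\mathsf{v}''_{0}$ and handling the edge splits via the initial and final vectors, whereas you charge it to the suffix via $\mathsf{f}^{(1)}(\mathsf{v}^{(2)}_{0}\mathsf{A}^{(2)}_{\sigma})$.
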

\begin{proof}
     If $ L_{1} $, $ L_{2} \in $ S$ ^{\neq} $, then there exist two GPFA's
     $ \mathcal{G}_{1}= (S^{'},\Sigma,\{\mathsf{A}_{\sigma \in
\Sigma}^{'} \},\mathsf{v}_{0}^{'},\mathsf{f}^{'}) $
     and $ \mathcal{G}_{2}= (S^{''},\Sigma,\{\mathsf{A}_{\sigma \in
\Sigma}^{''}
     \},\mathsf{v}_{0}^{''},\mathsf{f}^{''}) $ such that $ L_{1} $
and $ L_{2} $ are recognized with one-sided
     cutpoint $ 0 $ by $ \mathcal{G}_{1} $ and $ \mathcal{G}_{2} $,
respectively.
     Let $ n_{1} $ and $ n_{2} $ be the sizes of the state sets $
S_{1} $ and $ S_{2} $, respectively.

     We construct a new GPFA $
\mathcal{G}=(S,\Sigma,\{\mathsf{A}_{\sigma \in \Sigma}
\},\mathsf{v}_{0},\mathsf{f}) $
     recognizing $ L=L_{1}L_{2} $ (concatenation of $ L_{1} $ and $
L_{2} $) with one-sided cutpoint $ 0 $.
     The details of $ \mathcal{G} $ are as follows:
     \begin{enumerate}
              \item The size of $ S $ is $ n= n_{1}+n_{2} $;
              \item $ \mathsf{v}_{0} $ is a $ 1 \times n $ row vector,
                      \begin{enumerate}
                              \item $ \mathsf{v}_{0}=(\mathsf{v}_{0}^{'} \mid
\mathsf{v}_{0}^{''}) $ if $ \varepsilon $ (empty string)
                                      belongs to $ L_{1} $, and
                              \item $
\mathsf{v}_{0}=(\mathsf{v}_{0}^{'} \mid 0_{1 \times
n_{2}}) $ if $ \varepsilon \notin L_{1} $;
                      \end{enumerate}
              \item $ \mathsf{f} $ is a $ n \times 1 $ column vector,
                      \begin{enumerate}
                              \item $ \mathsf{f}=((\mathsf{f}^{'})^{\mathtt{T}}
\mid(\mathsf{f}^{''})^{\mathtt{T}})^{\mathtt{T}} $
                                      if $ \varepsilon \in L_{2} $, and
                              \item $ \mathsf{f}=(0_{1 \times n_{1}} \mid
(\mathsf{f}^{''})^{\mathtt{T}})^{\mathtt{T}} $
                                      if $ \varepsilon \notin L_{2} $;
                      \end{enumerate}
              \item $ \{ \mathsf{A}_{\sigma \in \Sigma} \} $ is the set of $ n
\times n $ matrices,
\begin{equation}
      \label{equation:A_sigma}
       \mathsf{A}_{\sigma}=
                              \left(
                              \begin{array}{c|c}
                                      \mathsf{A}_{\sigma}^{'} &
~~\mathsf{X}_{\sigma}~~ \\ \hline
                                      0_{n_{2} \times n_{1}} &
\mathsf{A}_{\sigma}^{''}
                              \end{array}
                              \right),
\end{equation}
              where $ \mathsf{X}_{\sigma} $ is an $ n_{1} \times
n_{2} $ matrix, defined as
\begin{equation}
      \label{equation:X_sigma}
      \mathsf{X}_{\sigma}=
              \left(
                      \begin{array}{c|c|c|c}
                              & & ~~~~~~~~ &  \\
                              \mathsf{v}_{0}^{''}(1)
\mathsf{A}_{\sigma}^{'}\mathsf{f}^{'} ~&~
                              \mathsf{v}_{0}^{''}(2)
\mathsf{A}_{\sigma}^{'}\mathsf{f}^{'} ~&~
                              \cdots &~
                              \mathsf{v}_{0}^{''}(n_{2})
\mathsf{A}_{\sigma}^{'}\mathsf{f}^{'} \\
                              \underbrace{~~~~~~~~~~~~}_{\mbox{column 1}} &
\underbrace{~~~~~~~~~~~~}_{\mbox{column 2}} &  &
                              \underbrace{~~~~~~~~~~~~}_{\mbox{column
$ n_{2} $}}
                      \end{array}
              \right).
\end{equation}
     \end{enumerate}
     The idea behind the construction is that for a given input
string $ w \in \Sigma^{*} $,
     each prefix of $ w $, say, $ u \in \Sigma^{*} $ ($ w=uv $), is
checked for belonging to $ L_{1} $,
     and if so, the rest, $ v $, is checked for belonging to $ L_{2} $.

$ \mathcal{G} $ simulates $\mathcal{G}_{1}$ in the first $ n_{1} $
positions of its state vector. If $ \mathcal{G}_{1} $ accepts an input
prefix $u$ ending with $ \sigma $, the result of the multiplication
between that $ 1 \times n_{1}  $ row vector describing the distribution after
processing the first $ |u|-1 $ input symbols
and the column vector
$\mathsf{A}_{\sigma}^{'}\mathsf{f}^{'} $, that is,
$f_{\mathcal{G}_{1}}(u)$, will be positive. Otherwise,
$f_{\mathcal{G}_{1}}(u)=0$. By Equations \ref{equation:A_sigma} and
\ref{equation:X_sigma}, the
vector $f_{\mathcal{G}_{1}}(u) . \mathsf{v}_{0}^{''} $ will be
added to the last $ n_{2} $ positions of $ \mathcal{G} $'s state vector,
meaning that $\mathcal{G}_{2}$ will run on the remainder $v$ of the
input. The contribution of this branch of the computation to
$f_{\mathcal{G}}(w)$ is just the product of the value
$f_{\mathcal{G}_{2}}(v)$ and the coefficient
$f_{\mathcal{G}_{1}}(u)$, and will be positive if  both substrings are
accepted by the respective machines, and zero otherwise. Since
$\mathcal{G}_{2}$ starts running in this manner in each step, its part
of the overall state vector contains in general the sum of many $ 1
\times n_{2}  $ vectors, multiplied by their respective coefficients, at
any intermediate step of the computation. The cases where the empty
string appears in $L_{1} $ or $L_{2} $ are handled appropriately.

In other words,
     \[ f_{\mathcal{G}}(w)=\sum_{uv=w}
f_{\mathcal{G}_{1}}(u)f_{\mathcal{G}_{2}}(v), \]
and $ \mathcal{G} $ recognizes the concatenation of  $L_{1} $ with $L_{2} $.
\end{proof}
~~
\begin{theorem}
      \label{theorem:negative-one-sided-not-close-concatenation}
      S$ ^{=} $ is not closed under concatenation.
\end{theorem}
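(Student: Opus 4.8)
The plan is to establish this negative result by exhibiting an explicit counterexample: two languages that both lie in S$^{=}$ but whose concatenation does not. No new automaton construction is needed, since suitable candidates are already available from Section~\ref{section:MainResults} and the previous section. I would take the first factor to be $L_{eq}=\{w\in\{a,b\}^{*}\mid |w|_{a}=|w|_{b}\}$ and the second factor to be the regular language $b^{+}$.

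First I would verify that both factors belong to S$^{=}$. That $L_{eq}\in$ S$^{=}$ is the well-known fact from the PFA literature recalled in the first paragraph of Section~\ref{section:MainResults}. That $b^{+}\in$ S$^{=}$ follows because $b^{+}$ is regular and the class of regular languages is a proper subset of S$^{=}$ (Fact~\ref{fact:RLpropersubsetSLneqSLeq}). Next I would observe that the concatenation of these two factors is exactly $L_{eq}\cdot b^{+}=L_{eq\cdot b}$, the language treated in Theorem~\ref{theorem:L_twosided-3}. That theorem already shows $L_{eq\cdot b}\in$ S$^{>}\setminus($S$^{=}\cup$ S$^{\neq})$, and in particular $L_{eq\cdot b}\notin$ S$^{=}$. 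Hence S$^{=}$ is not closed under concatenation.

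The substantive work is entirely absorbed into Theorem~\ref{theorem:L_twosided-3}; the only genuine ``obstacle'' here is the non-membership $L_{eq}\cdot b^{+}\notin$ S$^{=}$, which rests on the pumping-style Fact~\ref{fact:dieu}. Were that theorem not directly citable, I would reprove the non-membership in place: assuming $L_{eq\cdot b}\in$ S$^{=}$ and applying Fact~\ref{fact:dieu} with $u=\varepsilon$, $y=a$, $v=b^{n}$, the strings $b^{n},ab^{n},\ldots,a^{n-1}b^{n}$ (each of which factors as $a^{i}b^{i}\cdot b^{n-i}\in L_{eq}\cdot b^{+}$) force $a^{n}b^{n}\in L_{eq\cdot b}$, contradicting the fact that $a^{n}b^{n}$ has equally many $a$'s and $b$'s and so cannot be written as a member of $L_{eq}$ followed by a nonempty block of $b$'s.

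It is worth noting the contrast with the preceding Theorem~\ref{theorem:one-sided-closure-concatenation}: S$^{\neq}$ is closed under concatenation while its ``dual'' S$^{=}$ is not. This asymmetry is precisely what this subsection is meant to highlight, and the counterexample above isolates it in the simplest possible terms.
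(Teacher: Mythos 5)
Your proof is correct and is essentially identical to the paper's: both exhibit $L_{eq}$ and $b^{+}$ as members of S$^{=}$ whose concatenation $L_{eq\cdot b}$ fails to be in S$^{=}$ by Theorem~\ref{theorem:L_twosided-3}. Your added fallback argument via Fact~\ref{fact:dieu} is also sound, and in fact reproduces the very pumping argument the paper uses inside Theorem~\ref{theorem:L_twosided-3}.
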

\begin{proof}
      $ L_{eq} $ and $ \{b\}^{+} $ are in S$ ^{=} $,
      but $ L_{eq \cdot b} = L_{eq}.\{b\}^{+} $ is not, due to Theorem \ref{theorem:L_twosided-3}.
\end{proof}
~~
\begin{theorem}
     \label{theorem-one-sided-closure-star}
     S$ ^{\neq} $ is closed under Kleene closure.
\end{theorem}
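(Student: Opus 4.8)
The plan is to adapt the concatenation construction from Theorem \ref{theorem:one-sided-closure-concatenation} to the iterated setting, using Lemma \ref{lemma:Sneq-GPFAonesidedcutpoint0} to pass freely between $S^{\neq}$ and GPFA's recognizing with one-sided cutpoint $0$. Suppose $L \in S^{\neq}$; then there is a GPFA $\mathcal{G}' = (S',\Sigma,\{\mathsf{A}'_{\sigma}\},\mathsf{v}'_{0},\mathsf{f}')$ recognizing $L$ with one-sided cutpoint $0$, so that $f_{\mathcal{G}'}(u) > 0$ exactly when $u \in L$ and $f_{\mathcal{G}'}(u)=0$ otherwise. The key observation is that one-sided cutpoint $0$ makes acceptance values behave like a ``nonnegativity flag'': a sum of such values is positive if and only if at least one summand is positive. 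This is precisely what lets nondeterministic branching by matrix addition compute the existential quantifier implicit in $L^{*} = \{ u_{1}u_{2}\cdots u_{m} \mid m \ge 0,\ u_{i} \in L \}$.

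The construction I would build is a GPFA $\mathcal{G}$ whose state vector carries a ``fresh copy'' of $\mathcal{G}'$ that can be relaunched whenever a prefix lands in $L$. Concretely, I would take a single block of $n' = |S'|$ states running $\mathcal{G}'$, together with the off-diagonal feedback device used in Equation \ref{equation:X_sigma}: whenever the currently running copy completes a factor (i.e.\ reads a symbol $\sigma$ that makes $f_{\mathcal{G}'}$ of the prefix-so-far positive), we reinject $\mathsf{v}'_{0}$, scaled by that positive value, back into the same block so a new factor may begin. The transition matrix thus has the schematic form $\mathsf{A}_{\sigma}=\mathsf{A}'_{\sigma} + \mathsf{f}'_{\sigma}\,(\mathsf{v}'_{0})$, where $\mathsf{f}'_{\sigma}$ is the column $\mathsf{A}'_{\sigma}\mathsf{f}'$ encoding ``this symbol completes a factor.'' Because $\varepsilon \in L^{*}$ always, the final vector $\mathsf{f}$ should register acceptance both when the running copy has just accepted and at the start; the empty-string bookkeeping (whether $\varepsilon \in L$, and crediting the empty product) is handled exactly as in the concatenation proof. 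The resulting acceptance function satisfies
\[
    f_{\mathcal{G}}(w) \;=\; \sum_{\substack{w=u_{1}\cdots u_{m} \\ m \ge 0}} \prod_{i=1}^{m} f_{\mathcal{G}'}(u_{i}),
\]
a sum over all factorizations of $w$ into nonempty pieces; by one-sided cutpoint $0$ each product is nonnegative, and the whole sum is positive if and only if some factorization places every piece in $L$, i.e.\ $w \in L^{*}$. Then $\mathcal{G}$ recognizes $L^{*}$ with one-sided cutpoint $0$, so $L^{*} \in S^{\neq}$ by Lemma \ref{lemma:Sneq-GPFAonesidedcutpoint0}.

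The main obstacle I anticipate is the empty-factor subtlety: if $\varepsilon \in L$, then $f_{\mathcal{G}'}(\varepsilon)>0$ and one could in principle insert arbitrarily many empty factors, so the sum above must be restricted to \emph{nonempty} $u_{i}$ to keep the number of terms finite while still computing the correct set (this is harmless since $L^{*} = (L\setminus\{\varepsilon\})^{*}$). I would therefore define the feedback so that a new copy of $\mathcal{G}'$ is launched only upon consuming at least one symbol, and handle $\varepsilon \in L^{*}$ separately through the initial and final vectors, mirroring cases (2) and (3) of the concatenation construction. The verification that the off-diagonal reinjection reproduces the displayed factorization sum is then a routine induction on $|w|$, essentially identical to the argument following Equation \ref{equation:X_sigma}; the only genuinely new point over concatenation is that the feedback loops back into the \emph{same} block rather than a disjoint second block, which is what produces the Kleene-star (arbitrarily many factors) rather than a single concatenation.
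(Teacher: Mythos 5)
Your proposal is correct and takes essentially the same route as the paper: the paper's proof also keeps a single copy of the machine and sets $\mathsf{A}_{\sigma}^{'}=\mathsf{A}_{\sigma}+\mathsf{X}_{\sigma}$ with $\mathsf{X}_{\sigma}=(\mathsf{A}_{\sigma}\mathsf{f})\,\mathsf{v}_{0}$ (exactly your $\mathsf{f}'_{\sigma}(\mathsf{v}'_{0})$ feedback), then concludes from the identity $f_{\mathcal{G}^{'}}(w)=\sum_{u_{1}\cdots u_{k}=w}\prod_{i}f_{\mathcal{G}}(u_{i})$ over factorizations into nonempty pieces, using nonnegativity of all terms. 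The only cosmetic difference is the $\varepsilon$-bookkeeping: the paper starts from a GPFA recognizing $L\cup\{\varepsilon\}$ with one-sided cutpoint $0$ (so that $\mathsf{v}_{0}\mathsf{f}>0$ and the empty string is accepted), which is precisely the initial/final-vector fix you sketch.
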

\begin{proof}
     If $ L \in $ S$ ^{\neq} $, then there exists a GPFA
     $ \mathcal{G}=(S,\Sigma,\{\mathsf{A}_{\sigma \in \Sigma}
\},\mathsf{v}_{0},\mathsf{f}) $
     such that $ L \cup \{\varepsilon\} $ is recognized by $
\mathcal{G} $ with one-sided
     cutpoint $ 0 $. Let $ n $ be the size of the state set $ S $.

     We construct a new GPFA $ \mathcal{G}^{'}=
     (S,\Sigma,\{\mathsf{A}_{\sigma \in \Sigma}^{'}
\},\mathsf{v}_{0},\mathsf{f}) $
     recognizing $ L^{*} $ (Kleene closure of $ L $) with one-sided
cutpoint $ 0 $.
     Each element of $ \{\mathsf{A}_{\sigma \in \Sigma}^{'} \} $ is defined as
\begin{equation}
      \label{equation:A_sigma2}
      \mathsf{A}_{\sigma}^{'}=\mathsf{A}_{\sigma}+\mathsf{X}_{\sigma},
\end{equation}
where $ \mathsf{X}_{\sigma} $ is an $ n \times n $ matrix, defined as
\begin{equation}
      \label{equation:X_sigma2}
      \mathsf{X}_{\sigma}=
                     \left(
                             \begin{array}{c|c|c|c}
                                      &  & ~~~~~~~~ &  \\
                                     \mathsf{v}_{0}(1)
\mathsf{A}_{\sigma}\mathsf{f} ~&~
                                     \mathsf{v}_{0}(2)
\mathsf{A}_{\sigma}\mathsf{f} ~&~
                                     \cdots &~
                                     \mathsf{v}_{0}(n)
\mathsf{A}_{\sigma}\mathsf{f} \\

\underbrace{~~~~~~~~~~~~}_{\mbox{column 1}} &
\underbrace{~~~~~~~~~~~~}_{\mbox{column 2}} &  &
                              \underbrace{~~~~~~~~~~~~}_{\mbox{column $ n $}}
                             \end{array}
                     \right).
\end{equation}
     For a given input string $ w \in \Sigma^{*} $, $ |w|=l $,
     \[ f_{\mathcal{G}^{'}}(w)= v_{0} [\mathsf{A}_{w_{1}} +
\mathsf{X}_{w_{1}}]
             [\mathsf{A}_{w_{2}} + \mathsf{X}_{w_{2}}] \cdots
[\mathsf{A}_{w_{l}}
+ \mathsf{X}_{w_{l}}] \mathsf{f}, \]
     and so
     \[ f_{\mathcal{G}^{'}}(w)=\sum_{u_{1}u_{2}\cdots u_{k}=w} \left(
\prod_{i=1}^{k} f_{\mathcal{G}}(u_{i}) \right), \]
     where $ 1 \le k \le l $ and each $ u_{i} \in \Sigma^{*} $. Therefore,
if $ w $ can be divided,
     i.e., $ w=u_{1} \cdots u_{k} $, such that each $ u_{i} \in L $
($ f_{\mathcal{G}}(u_{i})>0 $),
     then $ f_{\mathcal{G}^{'}}(w) >0 $. On the other
hand, if there is no such division,
     then $ f_{\mathcal{G}^{'}}(w)=0 $.
\end{proof}
~~
\begin{lemma}
      \label{lemma:Leq-prime-equal-L-lt}
      Let $ L_{eq^{'}} = \{ w \in \{a,b\}^{*} \mid |w|_{a}+1=|w|_{b}  \} $.
      Then, $ L^{+}_{eq^{'}} = L_{lt} $.
\end{lemma}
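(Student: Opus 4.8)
The plan is to establish the two inclusions $L_{eq^{'}}^+ \subseteq L_{lt}$ and $L_{lt} \subseteq L_{eq^{'}}^+$ separately. The forward inclusion is immediate from additivity of symbol counts: if $w = u_1 u_2 \cdots u_k$ with $k \ge 1$ and each $u_i \in L_{eq^{'}}$ (so $|u_i|_b - |u_i|_a = 1$), then $|w|_b - |w|_a = \sum_{i=1}^{k} (|u_i|_b - |u_i|_a) = k \ge 1$, whence $|w|_a < |w|_b$ and $w \in L_{lt}$.

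For the reverse inclusion I would use a running-difference (lattice-path) argument. Given $w \in L_{lt}$, put $k = |w|_b - |w|_a \ge 1$, and for $0 \le i \le |w|$ let $h(i) = |w_1 \cdots w_i|_b - |w_1 \cdots w_i|_a$ be the height after reading the first $i$ symbols, so that $h(0) = 0$, $h(|w|) = k$, and each step changes $h$ by exactly $\pm 1$ (a $b$ raises it, an $a$ lowers it). Because the height cannot skip an integer value, for each $j \in \{1, \ldots, k\}$ there is a least index $t_j$ with $h(t_j) = j$; moreover $0 = t_0 < t_1 < \cdots < t_k$, since reaching level $j$ forces first reaching level $j-1$.

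I would then cut $w$ at the positions $t_1, \ldots, t_{k-1}$, forming the factors $u_j = w_{t_{j-1}+1} \cdots w_{t_j}$ for $1 \le j \le k-1$ together with $u_k = w_{t_{k-1}+1} \cdots w_{|w|}$. Each factor satisfies $|u_j|_b - |u_j|_a = h(t_j) - h(t_{j-1}) = j - (j-1) = 1$ (and likewise $k - (k-1) = 1$ for $u_k$), so every $u_j \in L_{eq^{'}}$. The factorization covers all of $w$, and each factor is nonempty: $t_{j-1} < t_j$ by strict monotonicity, while $t_{k-1} < |w|$ because $h(t_{k-1}) = k-1 \ne k = h(|w|)$. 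Hence $w = u_1 \cdots u_k \in L_{eq^{'}}^+$.

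The only delicate points are bookkeeping: verifying that the first-hitting times $t_j$ exist and are strictly increasing, which follows from the no-skip property of $\pm 1$ steps, and confirming that the final factor $u_k$ is nonempty so that the decomposition genuinely uses exactly $k$ factors. I expect no substantive obstacle, since membership in $L_{eq^{'}}$ depends only on the total count difference within each factor and imposes no sign constraint on the intermediate heights; the naive cut points therefore work directly.
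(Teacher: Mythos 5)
Your proof is correct and takes essentially the same approach as the paper's: both split $w \in L_{lt}$ with $k = |w|_{b} - |w|_{a}$ at positions where the running difference of $b$'s over $a$'s attains the intermediate levels $1, \ldots, k-1$, yielding $k$ consecutive nonempty factors each with difference exactly $1$. The paper simply asserts the existence of cut indices $0 = i_{0} < i_{1} < \cdots < i_{k} = |w|$ whose prefixes have heights $1, \ldots, k$; your first-hitting-time argument is the same idea with the bookkeeping spelled out.
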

\begin{proof}
      It is obvious that if $ w \in L^{+}_{eq^{'}} $, then $ w \in L_{lt} $.
      If $ w \in L_{lt} $, then $ \exists k \in \mathbb{Z}^{+} $ such that
$ |w|_{b} = |w|_{a} + k $.
      Then, there must exist $ k+1 $ indices, $ i_{0} = 0 < i_{1} < i_{2} <
\cdots < i_{k} = |w| $, such that
      each prefix of $ w $ of length $ i_{j} $ has $ j $ more $ b $'s than
$ a $'s, where $ 1 \le j \le k $.
      In other words, $ w $ can be partitioned into $ k $ consecutive
substrings, $ u_{1}, u_{2}, \cdots, u_{k} $,
      satisfying
      \begin{enumerate}
              \item $ w = u_{1} u_{2} \cdots u_{k} $, and,
              \item $ |u_{j}| = i_{j} - i _{j-1} $, that is, $ u_{j} $
              begins with the $ (i_{j-1}+1) $th symbol of $ w $ and
ends with the
$ i_{j} $th symbol of $ w $,
              where $ 1 \le j \le k $.
      \end{enumerate}
      Since each $ u_{j} $ is a member of $ L_{eq^{'}} $, we can conclude
that $ w \in L^{+}_{eq^{'}} $.
\end{proof}
~~
\begin{theorem}
      \label{theorem:negative-one-sided-not-close-Kleene}
      S$ ^{=} $ is not closed under Kleene closure.
\end{theorem}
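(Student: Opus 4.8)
The plan is to combine the two lemmas immediately preceding this theorem with closure properties of S$ ^{=} $ that have already been recorded, via a proof by contradiction. The idea is that $ L_{eq^{'}} $ itself lies in S$ ^{=} $, but its Kleene closure is (up to the empty string) the language $ L_{lt} $, which Lemma \ref{lemma:L_twosided-2} has shown to lie outside S$ ^{=} $. So if S$ ^{=} $ were closed under Kleene closure we would reach a contradiction.

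First I would verify that $ L_{eq^{'}} \in $ S$ ^{=} $. The quickest route is a direct two-state rational GPFA whose state vector carries a constant $ 1 $ in its first coordinate and the running difference $ |u|_{b}-|u|_{a} $ over prefixes $ u $ in its second coordinate: take $ \mathsf{v}_{0}=(1,0) $ and $ \mathsf{f}=(0,1)^{\mathtt{T}} $, and let each transition matrix fix the first coordinate while adding $ +1 $ to the second on reading $ b $ and $ -1 $ on reading $ a $. Then $ f_{\mathcal{G}}(w)=|w|_{b}-|w|_{a} $, so $ L_{eq^{'}}=\mathbb{L}(\mathcal{G},=1) \in $ S$ ^{=} $. (Alternatively, since $ w \in L_{eq^{'}} $ iff $ wa \in L_{eq} $, one may take any GPFA recognizing $ L_{eq} \in $ S$ ^{=} $ and right-multiply its final vector by $ \mathsf{A}_{a} $, so that $ f_{\mathcal{G}^{'}}(w)=f_{\mathcal{G}}(wa) $.)

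Next I would invoke Lemma \ref{lemma:Leq-prime-equal-L-lt}, which gives $ L^{+}_{eq^{'}}=L_{lt} $. Since $ \varepsilon \notin L_{eq^{'}} $ and $ \varepsilon \notin L_{lt} $, the Kleene closure is $ L^{*}_{eq^{'}}=L_{lt} \cup \{\varepsilon\} $. The only remaining step is to delete the empty string so as to recover $ L_{lt} $ exactly: because $ \{a,b\}^{+} $ is regular and hence in S$ ^{=} $ by Fact \ref{fact:RLpropersubsetSLneqSLeq}, and S$ ^{=} $ is closed under intersection by Fact \ref{fact:closure-properties-Seq-Sneq}, we have $ L_{lt}=L^{*}_{eq^{'}} \cap \{a,b\}^{+} $. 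Thus membership of $ L^{*}_{eq^{'}} $ in S$ ^{=} $ would force $ L_{lt} \in $ S$ ^{=} $.

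Assembling these, if S$ ^{=} $ were closed under Kleene closure then $ L_{eq^{'}} \in $ S$ ^{=} $ would yield $ L^{*}_{eq^{'}} \in $ S$ ^{=} $, hence $ L_{lt} \in $ S$ ^{=} $, contradicting Lemma \ref{lemma:L_twosided-2}. The only genuinely delicate point is the bookkeeping around the empty string, that is, distinguishing the $ L^{*} $ of the theorem statement from the $ L^{+} $ produced by the partition lemma; the intersection with the regular language $ \{a,b\}^{+} $ disposes of this cleanly, and everything else is an immediate appeal to facts already in hand, so I do not anticipate a serious obstacle.
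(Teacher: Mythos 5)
Your proof is correct and takes essentially the same route as the paper's: the paper likewise shows $L_{eq^{'}} \in$ S$^{=}$, applies Lemma \ref{lemma:Leq-prime-equal-L-lt} to identify $L_{eq^{'}}^{*}$ with $L_{lt} \cup \{\varepsilon\}$, and contradicts Corollary \ref{corollary:L_twosided-2}. The only difference is that you spell out two details the paper leaves implicit, namely an explicit automaton witnessing $L_{eq^{'}} \in$ S$^{=}$ and the intersection with the regular language $\{a,b\}^{+}$ to pass from $L_{lt} \cup \{\varepsilon\}$ back to $L_{lt}$, both of which are sound.
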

\begin{proof}
      It can be  shown easily that $ L_{eq^{'}} $ is in S$ ^{=} $.
      However, $ L_{eq^{'}}^{*} $,
      which is $ L_{lt} \cup \{ \varepsilon \} $ by Lemma
\ref{lemma:Leq-prime-equal-L-lt},
      is not in S$ ^{=} $ due to Corollary \ref{corollary:L_twosided-2}.
\end{proof}
~~
\begin{lemma}
     \label{lemma:homomorphism1}
     Let $ h: \Sigma \rightarrow \Sigma \setminus \{\kappa\} $ be a
homomorphism such that
     \[
             h(\sigma)=\left\lbrace
                     \begin{array}{ll}
                             \sigma & ,  \sigma \neq \kappa \\
                             \varepsilon &, \sigma=\kappa
                     \end{array}
             \right.,
     \]
     where $ \kappa $ is a specific symbol in $ \Sigma $.
     If $ L \subseteq \Sigma^{*} $ is in S$ ^{\neq} $, then so is $ h(L) $.
\end{lemma}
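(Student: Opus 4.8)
The plan is to work through the GPFA characterization of Lemma \ref{lemma:Sneq-GPFAonesidedcutpoint0}. Since $L \in$ S$^{\neq}$, I fix a GPFA $\mathcal{G}=(S,\Sigma,\{\mathsf{A}_{\sigma}\},\mathsf{v}_{0},\mathsf{f})$ recognizing $L$ with one-sided cutpoint $0$, so that $f_{\mathcal{G}}(w)\ge 0$ for every $w\in\Sigma^{*}$, with strict inequality exactly when $w\in L$. My goal is to build a GPFA $\mathcal{G}'$ over $\Sigma\setminus\{\kappa\}$ whose acceptance value on an input $x$ equals $\sum_{w:\,h(w)=x} f_{\mathcal{G}}(w)$ (up to a harmless positive rescaling). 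Because every summand is nonnegative, this value is always $\ge 0$, and it is positive exactly when some preimage of $x$ lies in $L$, i.e. exactly when $x\in h(L)$; Lemma \ref{lemma:Sneq-GPFAonesidedcutpoint0} then yields $h(L)\in$ S$^{\neq}$. Strings containing $\kappa$ are outside the alphabet of $\mathcal{G}'$, which is consistent with $h(L)\subseteq(\Sigma\setminus\{\kappa\})^{*}$.

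Concretely, for $x=x_{1}\cdots x_{m}$ with each $x_{i}\ne\kappa$, a preimage is $w=\kappa^{j_{0}}x_{1}\kappa^{j_{1}}\cdots x_{m}\kappa^{j_{m}}$ with $j_{0},\dots,j_{m}\ge 0$, and summing $f_{\mathcal{G}}$ over all such $w$ collapses each run of $\kappa$'s into the matrix $M=\sum_{j\ge 0}\mathsf{A}_{\kappa}^{\,j}$, producing the value $\mathsf{v}_{0} M\,\mathsf{A}_{x_{1}}M\cdots \mathsf{A}_{x_{m}}M\,\mathsf{f}$. I would therefore set $\mathsf{v}_{0}'=\mathsf{v}_{0} M$, $\mathsf{A}_{\sigma}'=\mathsf{A}_{\sigma} M$ for each $\sigma\in\Sigma\setminus\{\kappa\}$, and $\mathsf{f}'=\mathsf{f}$. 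This is precisely the ``add the transition matrices of the branches'' idea used in the earlier closure proofs, where the branching here is over how many copies of $\kappa$ the machine $\mathcal{G}$ reads at each gap between surviving symbols.

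The main obstacle is convergence: for a GPFA the matrix $\mathsf{A}_{\kappa}$ has arbitrary real entries, so the Neumann series $M=\sum_{j\ge 0}\mathsf{A}_{\kappa}^{\,j}$ need not converge. I would remove this obstacle by rescaling first. Replace $\mathsf{A}_{\kappa}$ by $c\,\mathsf{A}_{\kappa}$ for a fixed $c\in(0,1)$ chosen small enough that $c\,\rho<1$, where $\rho$ denotes the spectral radius of $\mathsf{A}_{\kappa}$. Reading each $\kappa$ now contributes a scalar factor $c$, so the rescaled machine $\mathcal{G}_{c}$ satisfies $f_{\mathcal{G}_{c}}(w)=c^{\,|w|_{\kappa}}f_{\mathcal{G}}(w)$; since $c>0$ this preserves the sign of every acceptance value, and hence $\mathcal{G}_{c}$ still recognizes $L$ with one-sided cutpoint $0$. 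With this choice $\rho(c\,\mathsf{A}_{\kappa})<1$, so $M=\sum_{j\ge 0}(c\,\mathsf{A}_{\kappa})^{j}=(I-c\,\mathsf{A}_{\kappa})^{-1}$ is a well-defined real matrix and the series converges absolutely. (All the non-$\kappa$ matrices of $\mathcal{G}_{c}$ agree with those of $\mathcal{G}$, so the $\mathsf{A}_{x_{i}}$ above are unchanged.)

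The remaining work is routine bookkeeping. Using absolute convergence of $M$, I expand $f_{\mathcal{G}'}(x)=\mathsf{v}_{0} M\,\mathsf{A}_{x_{1}}M\cdots\mathsf{A}_{x_{m}}M\,\mathsf{f}$ into the multiple sum over $(j_{0},\dots,j_{m})$, identify each term as $f_{\mathcal{G}_{c}}$ of the corresponding preimage $\kappa^{j_{0}}x_{1}\cdots x_{m}\kappa^{j_{m}}$ of $x$, and conclude $f_{\mathcal{G}'}(x)=\sum_{w:\,h(w)=x}f_{\mathcal{G}_{c}}(w)$ (the case $x=\varepsilon$ giving $\mathsf{v}_{0}M\mathsf{f}=\sum_{j}f_{\mathcal{G}_{c}}(\kappa^{j})$). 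Nonnegativity of the summands gives $f_{\mathcal{G}'}(x)\ge 0$, with equality iff no preimage of $x$ is accepted by $\mathcal{G}_{c}$, i.e. iff $x\notin h(L)$. Thus $\mathcal{G}'$ recognizes $h(L)$ with one-sided cutpoint $0$, and Lemma \ref{lemma:Sneq-GPFAonesidedcutpoint0} finishes the proof. I expect the rescaling step to be the only genuinely delicate point; everything else is a direct expansion of a product of resolvent matrices.
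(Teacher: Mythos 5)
Your proof is correct, but it takes a genuinely different route from the paper's. Both arguments end with the same shape of construction: an absorbing matrix is placed after the initial vector and after each symbol matrix, so that $f_{\mathcal{G}'}(x)$ becomes a sum of $f$-values over preimages of $x$, and one-sided cutpoint $0$ guarantees that no cancellation can hide an accepted preimage. The difference is how the sum over $\kappa$-runs is tamed. The paper keeps the sum \emph{finite}: it applies the pumping-style property of S$ ^{=} $ (Fact \ref{fact:dieu}) to $\overline{L} \in$ S$ ^{=} $ to show that whenever $x \in h(L)$, some preimage of $x$ lying in $L$ has all of its $\kappa$-runs shorter than a fixed constant $n_{L}$; it can therefore use the truncated matrix $\mathsf{X}_{\kappa} = I + \sum_{j=1}^{n_{L}-1}\mathsf{A}_{\kappa}^{j}$ and never faces a convergence question. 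You instead sum over \emph{all} preimages, which forces you to confront convergence of the Neumann series $\sum_{j \ge 0}\mathsf{A}_{\kappa}^{j}$, and your rescaling trick resolves this cleanly: since $f_{\mathcal{G}_{c}}(w) = c^{|w|_{\kappa}} f_{\mathcal{G}}(w)$ with $c > 0$, the sign pattern (hence the recognized language and its one-sidedness) is preserved, while choosing $c$ with $c\,\rho(\mathsf{A}_{\kappa}) < 1$ makes $M = (I - c\,\mathsf{A}_{\kappa})^{-1}$ well defined; the term-by-term expansion is then legitimate by absolute convergence, or even more simply by monotone convergence, since every term $f_{\mathcal{G}_{c}}(u)$ is nonnegative. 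What each approach buys: yours is self-contained, needing nothing about S$ ^{=} $ beyond the GPFA characterization of Lemma \ref{lemma:Sneq-GPFAonesidedcutpoint0}, at the cost of some light analysis (spectral radius, Fubini/Tonelli for series); the paper's stays purely finite and algebraic, but leans on the nontrivial structural Fact \ref{fact:dieu} to bound the nondeterministic guesses. Both are valid proofs of the lemma.
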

\begin{proof}
     Let $ \mathcal{G}=(S,\Sigma,\{\mathsf{A}_{\sigma \in \Sigma}
\},\mathsf{v}_{0},\mathsf{f}) $ be the GPFA
     recognizing $ L $ with one-sided cutpoint $ 0 $, and let $
\Sigma^{'}=\Sigma \setminus \{\kappa\} $.
     For any $ w \in h(L) $, there exists a $ u \in L $, such that $
h(u)=w $, i.e.,
     \[ u=\kappa^{c_{0}}w_{1}\kappa^{c_{1}}\cdots \kappa^{c_{|w|-1}}
w_{|w|} \kappa^{c_{|w|}} \]
     for some nonnegative integer $ c_{i} $'s, where $ 0 \le i \le |w| $.
     In fact, we can bound all $ c_{i}$'s by a natural number, say $ n_{L} $,
     due to Fact \ref{fact:dieu}: Suppose that none of the strings in
            \[ \{ \kappa^{c_{0}}w_{1}\kappa^{c_{1}}\cdots
\kappa^{c_{|w|-1}} w_{|w|} \kappa^{c_{|w|}}
      \mid 0 \le i \le |w|, 0 \le c_{i} \le n_{L} \}   \]
      are  members of $ L $
     (so they are all members of $ \overline{L} \in $ S$ ^{=} $).

     Then, by using Fact \ref{fact:dieu},
     \[  \kappa^{*}w_{1}\kappa^{c_{1}}\cdots \kappa^{c_{|w|-1}}
w_{|w|} \kappa^{c_{|w|}}
        \subseteq \overline{L}, (1 \le i \le |w|, 0 \le c_{i} \le n_{L}) \]
     \[  \kappa^{*}w_{1}\kappa^{*}\cdots \kappa^{c_{|w|-1}} w_{|w|}
\kappa^{c_{|w|}}
    \subseteq \overline{L}, (2 \le i \le |w|, 0 \le c_{i} \le n_{L})    \]
      \[ \vdots \]
      \[  \kappa^{*}w_{1}\kappa^{*}\cdots \kappa^{*} w_{|w|}
\kappa^{*}  \subseteq \overline{L}.  \]
     We conclude that $ w \notin h(L) $, which is a contradiction.

     Therefore, for any input string $ w \in (\Sigma^{'})^{*} $, we
can simulate the computation of $ \mathcal{G} $
     on some $ u $'s, where each $ c_{i} $ is guessed
nondeterministically from the set $ \{0,1,\cdots,n_{L}-1\} $.
     The following matrix can be defined to implement the
nondeterministic branching of the computation:
     \[ \mathsf{X}_{\kappa}=I+\sum_{j=1}^{n_{L}-1} \mathsf{A}_{\kappa}^{j}. \]
     By embedding $ \mathsf{X}_{\kappa} $ in a convenient way in the
definition of $ \mathcal{G} $,
     we can get the GPFA
     \[ \mathcal{G}^{'}=(S,\Sigma^{'},
     \{\mathsf{A}_{\sigma \in
\Sigma^{'}}^{'}=\mathsf{A}_{\sigma}\mathsf{X}_{\kappa}\},
     \mathsf{v}_{0}\mathsf{X}_{\kappa},\mathsf{f}), \]
     which recognizes $ h(L) $ with one-sided cutpoint $ 0 $. Hence, $
f_{\mathcal{G}^{'}}(w) $ can be calculated as
     \[ f_{\mathcal{G}^{'}}(w)=
     \sum_{u \in \{ \kappa^{c_{0}}w_{1}\kappa^{c_{1}}\cdots
\kappa^{c_{|w|-1}} w_{|w|} \kappa^{c_{|w|}} \} }
     f_{\mathcal{G}}(u) \]
     for the input string $ w \in (\Sigma^{'})^{*} $, where $ 0 \le c_{i}
\le n_{L}-1 $, and $ 0 \le i \le |w| $.
     Since the computation paths resulting in $ u \notin L $ produce $
f_{\mathcal{G}}(u)=0 $,
     $ f_{\mathcal{G}^{'}}(w)>0 $ is satisfied only when there is a
computation path resulting in $ u \in L $.
\end{proof}
~~
\begin{lemma}
     \label{lemma:homomorphism2}
     Let $ h : \Sigma \rightarrow \Upsilon^{*} $ be a homomorphism such
that $ |h(\sigma)|>0 $
     for all $ \sigma \in \Sigma $. If $ L \subseteq \Sigma^{*} $ is in S$
^{\neq} $, then so is $ h(L) $.
\end{lemma}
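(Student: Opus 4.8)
The plan is to reduce to the GPFA characterization of Lemma~\ref{lemma:Sneq-GPFAonesidedcutpoint0}: it suffices to exhibit a GPFA $\mathcal{G}'$ over $\Upsilon$ that recognizes $h(L)$ with one-sided cutpoint $0$. Let $\mathcal{G}=(S,\Sigma,\{\mathsf{A}_{\sigma}\},\mathsf{v}_{0},\mathsf{f})$ be a GPFA recognizing $L$ with one-sided cutpoint $0$, so that $f_{\mathcal{G}}(u)\ge 0$ for every $u\in\Sigma^{*}$, with $f_{\mathcal{G}}(u)>0$ exactly when $u\in L$. Because $h$ is non-erasing, every $w\in\Upsilon^{*}$ has only finitely many preimages under $h$, each of length at most $|w|$, so $\sum_{u:\,h(u)=w} f_{\mathcal{G}}(u)$ is a finite sum. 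I will design $\mathcal{G}'$ so that $f_{\mathcal{G}'}(w)$ equals precisely this sum; the one-sided property of $\mathcal{G}$ then makes every summand nonnegative, whence $f_{\mathcal{G}'}(w)\ge 0$ always, and $f_{\mathcal{G}'}(w)>0$ if and only if some preimage of $w$ lies in $L$, i.e. if and only if $w\in h(L)$.

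The construction realizes each letter transition of $\mathcal{G}$ as a short ``block-reading path'' in $\mathcal{G}'$. I keep the $n=|S|$ states of $\mathcal{G}$ as \emph{boundary} states, occupied exactly when a whole sequence of blocks $h(\sigma_{1})\cdots h(\sigma_{j})$ has just been consumed. For every $\sigma\in\Sigma$ with $\ell_{\sigma}:=|h(\sigma)|\ge 2$, I add $\ell_{\sigma}-1$ fresh $n$-element blocks of \emph{buffer} states. Writing $h(\sigma)=\tau_{1}\cdots\tau_{\ell_{\sigma}}$, the per-symbol matrices $\mathsf{A}'_{\tau}$ ($\tau\in\Upsilon$) are assembled so that: on reading the first symbol $\tau_{1}$ of a $\sigma$-block the current distribution moves from the boundary states into the first buffer block of $\sigma$ with $\mathsf{A}_{\sigma}$ applied exactly once; the intermediate symbols relay the stored vector unchanged (by the identity) from one buffer block to the next; and on the last symbol $\tau_{\ell_{\sigma}}$ the vector is returned to the boundary states. (A letter with $\ell_{\sigma}=1$ simply applies $\mathsf{A}_{\sigma}$ from boundary to boundary.) I set $\mathsf{v}'_{0}=\mathsf{v}_{0}$ on the boundary states and $0$ elsewhere, and set the final vector $\mathsf{f}'$ to $\mathsf{f}$ on the boundary states and $0$ on every buffer state.

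With these choices, a run of $\mathcal{G}'$ on $w$ that reaches a final-weighted state corresponds exactly to a factorization $w=h(\sigma_{1})\cdots h(\sigma_{m})$, and the weight it accumulates is $\mathsf{v}_{0}\mathsf{A}_{\sigma_{1}}\cdots\mathsf{A}_{\sigma_{m}}\mathsf{f}=f_{\mathcal{G}}(\sigma_{1}\cdots\sigma_{m})$; runs that stall inside a block end in a buffer state and contribute $0$ through $\mathsf{f}'$. Summing over all branches (the GPFA adds the weights of parallel transitions, exactly as in the concatenation and Kleene-closure constructions above) yields $f_{\mathcal{G}'}(w)=\sum_{h(u)=w}f_{\mathcal{G}}(u)$, and the conclusion follows via Lemma~\ref{lemma:Sneq-GPFAonesidedcutpoint0}.

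The step needing the most care is verifying that the buffer relays transport the pre-block vector without alteration while $\mathsf{A}_{\sigma}$ is charged once and only once per completed block, and that placing the final weights solely on the boundary states suppresses every incomplete parse. The interference between blocks sharing a common prefix is handled automatically, since each $\sigma$ owns a private chain of buffer blocks, so distinct partial decodings never collide. The remaining subtlety is bookkeeping: confirming that the empty factorization accounts for the case $w=\varepsilon$ and that no spurious nonnegative mass appears, which is exactly where the one-sided cutpoint $0$ hypothesis is essential for turning ``some summand positive'' into ``$w\in h(L)$''.
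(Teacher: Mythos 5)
Your proposal is correct and takes essentially the same route as the paper's own proof: both construct a GPFA over $\Upsilon$ whose state vector uses dedicated per-$\sigma$ buffer blocks to trace partial decodings of the input into factors $h(\sigma)$, so that additivity of the branch weights yields $f_{\mathcal{G}'}(w)=\sum_{h(u)=w}f_{\mathcal{G}}(u)$, which combined with the one-sided cutpoint $0$ hypothesis and Lemma~\ref{lemma:Sneq-GPFAonesidedcutpoint0} gives $h(L)\in$ S$^{\neq}$. The only cosmetic differences are that the paper duplicates $\mathsf{v}_{0}$ and $\mathsf{f}$ across the first slot of every $\sigma$-region instead of keeping a single shared set of boundary states, and charges $\mathsf{A}_{\sigma}$ on the last symbol of each block rather than the first.
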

\begin{proof}
     Let $ \mathcal{G}=(S,\Sigma,\{\mathsf{A}_{\sigma \in \Sigma}
\},\mathsf{v}_{0},\mathsf{f}) $ be the GPFA
     recognizing $ L $ with one-sided cutpoint $ 0 $. We will show that
there exists a GPFA
     \[ \mathcal{G}^{'}=(S^{'},\Upsilon,\{\mathsf{A}_{\gamma \in
\Upsilon}^{'} \},\mathsf{v}_{0}^{'},\mathsf{f}^{'}) \]
     recognizing $ h(L) $ with one-sided cutpoint $ 0 $.

$ \mathcal{G}^{'} $ runs $ \mathcal{G} $ on a nondeterministically chosen input
$u = u_{1}u_{2}\cdots u_{|u|} \in \Sigma^{*} $, while checking whether its
own input string $ w $ matches $ h(u_{1})h(u_{2}) \cdots h(u_{|u|}) $
or not. For each such nondeterministic computation path, we
have the following cases:
     \begin{enumerate}
             \item at least one of the matches fail, $ h(u) \neq w $, then
             all entries representing the state vector of  $
\mathcal{G} $ in this branch
             are set to zero,
             \item all substitutions succeed, $ h(u)=h(u_{1})h(u_{2})\cdots
h(u_{|u|})=w $, then
             \begin{enumerate}
                     \item $ f_{\mathcal{G}}(u) = 0 $ for $ u \notin L $, and
                     \item $ f_{\mathcal{G}}(u) > 0 $ for $ u \in L $.
             \end{enumerate}
     \end{enumerate}
     $ f_{\mathcal{G}^{'}}(w) $ will again be defined as the summation
over all computation paths,
     i.e., $ f_{\mathcal{G}^{'}}(w)=\sum_{\{u \mid h(u)=w \} }
f_{\mathcal{G}}(u) $.
     Hence,  $ w \in h(L) $ only if there is at least one successful
substitution, $ f(u)=w $, and $ u \in L $.

     The technical details of $ \mathcal{G}^{'} $ are as follows:
     \begin{enumerate}
             \item For each $ \sigma \in \Sigma $, we will use a separate $ 1
\times |h(\sigma)||S| $ dimensional region in the
             state vector to trace the substitutions.
             \begin{equation}
                     \label{equation:longstatevector}
                     \left( \underbrace{(\cdots~~\cdots)}_{\sigma_{1}}
\underbrace{(\cdots~~\cdots)}_{\sigma_{2}} \cdots
                     \underbrace{(\cdots~~\cdots)}_{\sigma_{|\Sigma|}}\right)
             \end{equation}
             It is easily formulated that $ |S^{'}|=|S|\left( \sum_{\sigma \in
\Sigma} |h(\sigma)| \right) $.
             \item Each $ \mathsf{A}_{\gamma \in \Upsilon}^{'} $ is
defined with
respect to the separation above.
             \begin{equation}
                     \label{equation:Agamma}
                     \mathsf{A}_{\gamma}^{'}=\left(
                     \begin{array}{c|c|c|c}
                             \mathsf{A}_{\gamma,\sigma_{1}}^{'} & T
& \cdots & T \\ \hline
                             T & \mathsf{A}_{\gamma,\sigma_{2}}^{'}
& \cdots & T \\ \hline
                             \vdots & \vdots & \ddots & \vdots \\ \hline
                             T & T & \cdots &
\mathsf{A}_{\gamma,\sigma_{|\Sigma|}}^{'}
                     \end{array}
                     \right)
             \end{equation}
             Each $ T $ is almost a zero matrix, except a case which will be
described below.
             \item Let $ \sigma \in \Sigma $.
             Suppose that $ h(\sigma)=\gamma_{1}\gamma_{2} \cdots
\gamma_{l} \in
\Upsilon^{*} $ and $ l>0 $.
             Then, the
region corresponding to $ \sigma $ in $ \mathcal{G}^{'} $'s state vector
 can be
partitioned to $l$ $ 1 \times |S| $ blocks:
             \begin{equation}
                     \label{equation:statevector}
                     ( \underbrace{\cdots}_{1} \mid
\underbrace{\cdots}_{2} \mid \cdots \mid
                     \underbrace{\cdots}_{l} ).
             \end{equation}
             For $ \gamma \in \Upsilon $, $
\mathsf{A}_{\gamma,\sigma}^{'} $ can
be partitioned into blocks of dimension $ |S| \times |S| $:
             If $ l=1 $,
             \begin{equation}
                     \mathsf{A}^{'}_{\gamma,\sigma} = \left(T_{l}\right),
             \end{equation}
             and if $ l>1 $,
             \begin{equation}
                     \label{equation:Agammasigma}
                     \mathsf{A}^{'}_{\gamma,\sigma}=
                     \left(
                             \begin{array}{c|c|c|c|c}
                                     0 & T_{1} & 0 & \cdots & 0 \\ \hline
                                     0 & 0 & T_{2} & \cdots & 0 \\ \hline
                                     \vdots & \vdots & \vdots &
\ddots & \vdots \\ \hline
                                     0 & 0 & 0 & \cdots & T_{l-1} \\ \hline
                                     T_{l} & 0 & 0 & \cdots & 0
                             \end{array}
                     \right),
             \end{equation}
             where for $ \gamma_{i} = \gamma $, $ T_{i}=I $,  and  $ T_{i}=0 $
otherwise, $(0 < i < l)$;
             for $ \gamma_{l}=\gamma $, $ T_{l}=\mathsf{A}_{\sigma} $, and $
T_{l}=0 $ otherwise.
             Additionally, for $ \gamma_{l}=\gamma $, the
bottom-leftmost blocks
of all $ T $'s that are on the same row
             with $ \mathsf{A}^{'}_{\gamma,\sigma} $ in
(\ref{equation:Agamma})
are equal to $ \mathsf{A}_{\sigma} $;
             for $ \gamma_{l} \neq \gamma $,  all those blocks
contain all $ 0 $'s.
             \item
             \begin{equation}
                     \label{equation:v0longstatevector}
                     \mathsf{v}_{0}^{'}=
                     \left(
                     \underbrace{(\mathsf{v}_{0}\mid 0, \cdots,
0)}_{\sigma_{1}}
                     \underbrace{(\mathsf{v}_{0}\mid 0, \cdots,
0)}_{\sigma_{2}}  \cdots
                     \underbrace{(\mathsf{v}_{0}\mid 0, \cdots,
0)}_{\sigma_{|\Sigma|}} \right),
             \end{equation}
             \begin{equation}
                     \label{equation:flongstatevector}
                     \mathsf{f}^{'}=
                     \left(
                     \underbrace{(\mathsf{f}\mid 0, \cdots, 0)}_{\sigma_{1}}
                     \underbrace{(\mathsf{f}\mid 0, \cdots,
0)}_{\sigma_{2}}  \cdots
                     \underbrace{(\mathsf{f}\mid 0, \cdots,
0)}_{\sigma_{|\Sigma|}}
\right)^{\mathtt{T}}.
             \end{equation}
     \end{enumerate}
     When $ \sigma \in \Sigma $ is being guessed, where
     $ h(\sigma)=\gamma_{1}\gamma_{2} \cdots \gamma_{l} \in \Upsilon^{*} $
and $ l>0 $,
     the simulated state vector of $ \mathcal{G} $ is written in the
first slot of (\ref{equation:statevector}).
     Whenever the matches succeed for $ \gamma_{1}, \cdots, \gamma_{l-1} $,
     the simulated state vector of $ \mathcal{G} $ is transferred to
the next slot in
(\ref{equation:statevector}); in any other case, it is set to 0.
     When $ \gamma_{l} $ is successfully substituted, this simulated
state vector is
updated as if $ \mathcal{G} $ has read the symbol $ \sigma $,
     and the result is transferred anew in the first slots of all
the regions corresponding to symbols in $ \Sigma $;
     otherwise, it is set to 0.
\end{proof}
~~
\begin{theorem}
     S$ ^{\neq} $ is closed under homomorphism.
\end{theorem}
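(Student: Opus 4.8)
The plan is to reduce an arbitrary homomorphism to the two special cases already handled by Lemmas \ref{lemma:homomorphism1} and \ref{lemma:homomorphism2}. First I would observe that any homomorphism $ h : \Sigma \rightarrow \Upsilon^{*} $ factors as $ h = h_{2} \circ h_{1} $, where $ h_{1} $ erases exactly the symbols that $ h $ sends to the empty string and fixes all the others, and $ h_{2} $ is the restriction of $ h $ to the surviving symbols. Concretely, let $ E = \{ \sigma \in \Sigma \mid h(\sigma) = \varepsilon \} $ and $ \Sigma^{'} = \Sigma \setminus E $. Define $ h_{1} : \Sigma \rightarrow (\Sigma^{'})^{*} $ by $ h_{1}(\sigma) = \varepsilon $ for $ \sigma \in E $ and $ h_{1}(\sigma) = \sigma $ otherwise, and $ h_{2} : \Sigma^{'} \rightarrow \Upsilon^{*} $ by $ h_{2}(\sigma) = h(\sigma) $. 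Since every $ \sigma \in \Sigma^{'} $ satisfies $ |h(\sigma)| > 0 $ by construction, $ h_{2} $ is a non-erasing homomorphism, and one checks directly that $ h_{2}(h_{1}(w)) = h(w) $ for every $ w $, so that $ h(L) = h_{2}(h_{1}(L)) $.

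Next I would dispose of the erasing part $ h_{1} $ by iterating Lemma \ref{lemma:homomorphism1} once for each symbol of $ E $. That lemma removes a single designated symbol $ \kappa $ while leaving the rest of the alphabet untouched and keeps the language inside S$ ^{\neq} $; applying it to $ L $ for some $ \kappa_{1} \in E $ yields a language in S$ ^{\neq} $ over $ \Sigma \setminus \{\kappa_{1}\} $, and repeating for the remaining symbols of $ E $ (each time over the already-shrunken alphabet) eventually gives $ h_{1}(L) \in $ S$ ^{\neq} $ over $ \Sigma^{'} $. The order in which the symbols of $ E $ are erased is immaterial, since the resulting composite erasing homomorphism is the same regardless of order.

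Finally I would apply Lemma \ref{lemma:homomorphism2} to $ h_{1}(L) \in $ S$ ^{\neq} $ with the non-erasing homomorphism $ h_{2} $, concluding that $ h(L) = h_{2}(h_{1}(L)) \in $ S$ ^{\neq} $. The whole argument is essentially bookkeeping once the factorization is in place; the only point that requires care is ensuring that at each stage the language is genuinely defined over the alphabet that the relevant lemma expects, so that the hypotheses of Lemmas \ref{lemma:homomorphism1} and \ref{lemma:homomorphism2} are actually met. I do not anticipate a serious obstacle, since these two lemmas together cover precisely the pure-erasing and non-erasing cases, and every homomorphism splits cleanly into these two pieces.
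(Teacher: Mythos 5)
Your proposal is correct and follows essentially the same route as the paper's own proof: factor $h$ into single-symbol erasing homomorphisms handled by iterating Lemma \ref{lemma:homomorphism1}, followed by the non-erasing remainder handled by Lemma \ref{lemma:homomorphism2}. The paper writes this as an explicit chain $h = h_{k+1} \circ h_{k} \circ \cdots \circ h_{1}$ over successively shrinking alphabets, which is exactly your bookkeeping phrased symbol by symbol.
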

\begin{proof}
     Let $ h : \Sigma \rightarrow \Upsilon^{*} $ be a homomorphism,  $ L
\subseteq \Sigma^{*} $, and $L \in $ S$ ^{\neq} $.
     If $ h $ is a homomorphism of the form in Lemma
\ref{lemma:homomorphism2}, then the proof is complete.

     Otherwise, suppose that there are $ k \ge 1 $ symbols in $
\Sigma $, i.e., $
\sigma_{1}, \sigma_{2}, \cdots, \sigma_{k} $,
     such that $ h(\sigma_{i})=\varepsilon $, where $ 1 \le i \le k $. So,
we can define $ k $ homomorphisms of the form in
     Lemma \ref{lemma:homomorphism1}:
     \begin{eqnarray*}
             h_{1} & : & \Sigma \rightarrow \Sigma_{1},  \mbox{ where }
\Sigma_{1}=\Sigma \setminus \{\sigma_{1}\} \\
             h_{2}& : &  \Sigma_{1} \rightarrow \Sigma_{2},  \mbox{ where }
\Sigma_{2}=\Sigma_{1} \setminus \{\sigma_{2}\} \\
              & & \vdots \\
             h_{k} & : & \Sigma_{k-1} \rightarrow \Sigma_{k}, \mbox{ where }
\Sigma_{k}=\Sigma_{k-1} \setminus \{\sigma_{k}\}.
     \end{eqnarray*}
     Additionally, we define $ h_{k+1} : \Sigma_{k} \rightarrow
\Upsilon^{*} $, where $ h_{k+1}(\sigma)=h(\sigma) $
     for $ \sigma \in \Sigma_{k} $. Since $ h $ is the composition of the $
h_{i} $'s ($ 1 \le i \le k+1 $),
     $ h(L) $ is also in S$ ^{\neq} $ (Lemma \ref{lemma:homomorphism1} and
\ref{lemma:homomorphism2}).
\end{proof}
~~
\begin{theorem}
      \label{theorem:negative-one-sided-not-close-homomorphism}
      S$ ^{=} $ is not closed under homomorphism.
\end{theorem}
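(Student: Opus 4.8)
The plan is to exhibit a single language $L\in$ S$^{=}$ together with a homomorphism $h$ whose image $h(L)$ has already been shown to lie outside S$^{=}$. The natural target is $L_{eq\cdot b}$, since Theorem~\ref{theorem:L_twosided-3} tells us $L_{eq\cdot b}\notin$ S$^{=}$; so I would try to realize $L_{eq\cdot b}=L_{eq}\cdot b^{+}$ as a homomorphic image of an S$^{=}$ language, using a fresh marker symbol to keep the two factors distinguishable until the homomorphism collapses them. This mirrors the non-closure-under-concatenation argument (Theorem~\ref{theorem:negative-one-sided-not-close-concatenation}), but now the ``boundary'' is supplied by a marker rather than by an ambiguous split.

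Concretely, introduce a symbol $c\notin\{a,b\}$ and work over $\Sigma=\{a,b,c\}$. Let $R=\{a,b\}^{*}c^{+}$, which is regular and hence in S$^{=}$ by Fact~\ref{fact:RLpropersubsetSLneqSLeq}, and let $L_{c}=\{w\in\Sigma^{*}\mid |w|_{a}=|w|_{b}\}$, which is recognized with cutpoint $0$ by the same counting GPFA that witnesses $L_{eq}\in$ S$^{=}$, augmented with the identity transition matrix for $c$ so that the marker is simply ignored by the counter. Define $L=R\cap L_{c}$. By the closure of S$^{=}$ under intersection (Fact~\ref{fact:closure-properties-Seq-Sneq}), $L\in$ S$^{=}$. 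A direct check gives $L=\{xc^{k}\mid x\in L_{eq},\ k\ge 1\}$: membership in $R$ forces the form $xc^{k}$ with $x\in\{a,b\}^{*}$ and $k\ge 1$, and since the $c$'s contribute to neither the $a$-count nor the $b$-count, membership in $L_{c}$ is equivalent to $|x|_{a}=|x|_{b}$, i.e.\ to $x\in L_{eq}$.

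Finally, take the homomorphism $h:\Sigma\to\{a,b\}^{*}$ with $h(a)=a$, $h(b)=b$, and $h(c)=b$. Then $h(xc^{k})=xb^{k}$ for any $x\in\{a,b\}^{*}$, so $h(L)=\{xb^{k}\mid x\in L_{eq},\ k\ge 1\}=L_{eq}\cdot b^{+}=L_{eq\cdot b}$; both inclusions are immediate. Since $L\in$ S$^{=}$ while $h(L)=L_{eq\cdot b}\notin$ S$^{=}$ by Theorem~\ref{theorem:L_twosided-3}, the class S$^{=}$ is not closed under homomorphism.

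I do not anticipate a serious obstacle. The only points requiring care are the verification that $L_{c}\in$ S$^{=}$ (i.e.\ that the marker symbol is genuinely ignored by the counting GPFA) and the set equality $h(L)=L_{eq\cdot b}$; once these are pinned down, closure of S$^{=}$ under intersection and the already-established status of $L_{eq\cdot b}$ finish the argument. The conceptual content is that the marker $c$ lets a legitimate S$^{=}$ recognizer stand in, via intersection with a regular language, for exactly the concatenation that S$^{=}$ is unable to perform on its own, and the homomorphism then erases the marker to recover the forbidden concatenation.
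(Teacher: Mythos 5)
Your proposal is correct and takes essentially the same route as the paper's own proof: the paper likewise introduces a marker symbol $c$, takes an S$^{=}$ witness of the form ``$L_{eq}$ followed by a marked tail'' (there, $L_{2}=\{w_{1}cw_{2} \mid w_{1}\in L_{eq},\ w_{2}\in b^{*}\}$), and applies the $\varepsilon$-free homomorphism $a\mapsto a$, $b\mapsto b$, $c\mapsto b$ to produce $L_{eq\cdot b}$, which Theorem~\ref{theorem:L_twosided-3} places outside S$^{=}$. The only difference is one of rigor rather than of method: where the paper merely asserts its witnesses are in S$^{=}$, you derive this for your witness $L=\{xc^{k}\mid x\in L_{eq},\ k\ge 1\}$ from closure of S$^{=}$ under intersection together with the regularity of $\{a,b\}^{*}c^{+}$, which is a clean and valid justification.
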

\begin{proof}
      Consider the languages $ L_{1}=\{w_{1}cw_{2} \mid w_{1} \in L_{eq}, w_{2} \in b^{+} \} $
      and $ L_{2}=\{w_{1}cw_{2} \mid w_{1} \in L_{eq}, w_{2} \in b^{*} \} $. 
      It is not hard to show that both languages are in S$ ^{=} $.
      Let $ h_{1} $ and $ h_{2} $ be two homomorphisms defined as
      \begin{itemize}
              \item $ h_{1}(a)=a $, $ h_{1}(b)=b $, $ h_{1}(c)=\varepsilon $, and
              \item $ h_{2}(a)=a $, $ h_{2}(b)=b $, $ h_{2}(c)=b $.
      \end{itemize}
      $ L_{eq \cdot b}=h_{1}(L_{1})=h_{2}(L_{2}) $, and so S$ ^{=} $ is not closed under
      ($ \varepsilon $-free) homomorphism due to Theorem
\ref{theorem:L_twosided-3}.
\end{proof}

\subsection{Common closure properties of S$ ^{\neq} $ and S$ ^{=} $}
~~

\begin{theorem}
      \label{theorem:close-inverse-homomorphism}
     S$ ^{\neq} $ and S$ ^{=} $ are closed under inverse homomorphism.
\end{theorem}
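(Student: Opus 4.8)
The plan is to exploit the purely algebraic form of the GPFA acceptance function, $ f_{\mathcal{G}}(w) = \mathsf{v}_{0}\mathsf{A}_{w_{1}}\cdots\mathsf{A}_{w_{|w|}}\mathsf{f} $, which (in contrast to the KWQFA) carries no end-markers and is simply a product of transition matrices sandwiched between the initial and final vectors. Both target classes admit a convenient GPFA description: a language lies in S$ ^{=} $ (resp. S$ ^{\neq} $) exactly when it has the form $ \mathbb{L}(\mathcal{G},=\lambda) $ (resp. $ \mathbb{L}(\mathcal{G},\neq\lambda) $) for some GPFA $ \mathcal{G} $ and real $ \lambda $; one direction is the trivial observation that a PFA is a GPFA once its end-marker transitions are absorbed into the initial and final vectors, and the other is Fact \ref{fact:GPFA-PFA} (with Fact \ref{fact:GPFA-GPFA} giving freedom in the cutpoint). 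Given a homomorphism $ h:\Sigma\rightarrow\Upsilon^{*} $ and a GPFA $ \mathcal{G} $ over $ \Upsilon $ for $ L $, it therefore suffices to build a GPFA $ \mathcal{G}^{'} $ over $ \Sigma $ whose acceptance function satisfies $ f_{\mathcal{G}^{'}}(w)=f_{\mathcal{G}}(h(w)) $ for every $ w\in\Sigma^{*} $.

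The construction is direct and needs no additional states. I would keep the state set, the initial vector $ \mathsf{v}_{0} $, and the final vector $ \mathsf{f} $ of $ \mathcal{G} $ unchanged, and for each $ \sigma\in\Sigma $ with $ h(\sigma)=\gamma_{1}\gamma_{2}\cdots\gamma_{l}\in\Upsilon^{*} $ define $ \mathsf{A}_{\sigma}^{'}=\mathsf{A}_{\gamma_{1}}\mathsf{A}_{\gamma_{2}}\cdots\mathsf{A}_{\gamma_{l}} $, taking $ \mathsf{A}_{\sigma}^{'} $ to be the identity matrix (the empty product) when $ h(\sigma)=\varepsilon $. For an input $ w=w_{1}\cdots w_{m} $, the product $ \mathsf{A}_{w_{1}}^{'}\cdots\mathsf{A}_{w_{m}}^{'} $ telescopes, by associativity of matrix multiplication, into precisely the product of the transition matrices of the letters of the concatenation $ h(w_{1})\cdots h(w_{m})=h(w) $; sandwiching between $ \mathsf{v}_{0} $ and $ \mathsf{f} $ yields $ f_{\mathcal{G}^{'}}(w)=f_{\mathcal{G}}(h(w)) $, with the identity-matrix convention correctly accounting for the letters $ \sigma $ that $ h $ erases.

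Once this identity is established, both closures fall out at the same time. Because $ f_{\mathcal{G}^{'}}(w)=f_{\mathcal{G}}(h(w)) $, for each relation $ R\in\{<,=,>,\neq\} $ and each cutpoint $ \lambda $ one has $ \mathbb{L}(\mathcal{G}^{'},R\lambda)=\{w\mid f_{\mathcal{G}}(h(w))\, R\, \lambda\}=h^{-1}(\mathbb{L}(\mathcal{G},R\lambda)) $. Hence if $ L=\mathbb{L}(\mathcal{G},=\lambda)\in $ S$ ^{=} $ then $ h^{-1}(L)=\mathbb{L}(\mathcal{G}^{'},=\lambda) $, which is in S$ ^{=} $ by Fact \ref{fact:GPFA-PFA}; and if $ L=\mathbb{L}(\mathcal{G},\neq\lambda)\in $ S$ ^{\neq} $ then $ h^{-1}(L)=\mathbb{L}(\mathcal{G}^{'},\neq\lambda)\in $ S$ ^{\neq} $. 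For S$ ^{\neq} $ one could equally argue through the one-sided-cutpoint-$ 0 $ characterization of Lemma \ref{lemma:Sneq-GPFAonesidedcutpoint0}, noting that $ f_{\mathcal{G}^{'}}\ge 0 $ whenever $ f_{\mathcal{G}}\ge 0 $.

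The reason there is essentially no obstacle is that inverse homomorphism is far gentler than forward homomorphism: the image $ h(\sigma) $ of a single input letter is a \emph{fixed} string, so reading $ \sigma $ can be simulated deterministically by composing the corresponding transitions of $ \mathcal{G} $, with no nondeterministic guessing and no blow-up in the state set (in contrast to Lemmas \ref{lemma:homomorphism1} and \ref{lemma:homomorphism2}). The only points that require a line of care are the erasing letters, handled by the empty-product/identity convention, and the preliminary remark that S$ ^{=} $ and S$ ^{\neq} $ may be described by GPFA's at all, which is immediate from Facts \ref{fact:GPFA-GPFA} and \ref{fact:GPFA-PFA}.
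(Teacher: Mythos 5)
Your proposal is correct and is essentially identical to the paper's proof: the paper also keeps $\mathsf{v}_{0}$ and $\mathsf{f}$ unchanged and defines $\mathsf{A}_{\sigma}^{'}$ as the product $\mathsf{A}_{u_{1}}\cdots\mathsf{A}_{u_{|h(\sigma)|}}$ (identity when $h(\sigma)=\varepsilon$), so that $f_{\mathcal{G}^{'}}(w)=f_{\mathcal{G}}(h(w))$, invoking the one-sided-cutpoint-$0$ characterization for S$^{\neq}$ and noting the same setup extends to S$^{=}$. Your framing via general cutpoints and Fact \ref{fact:GPFA-PFA} is a cosmetic variation, not a different argument.
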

\begin{proof}
     Let $ h : \Sigma \rightarrow \Upsilon^{*} $ be a homomorphism, $ L
\subseteq \Upsilon^{*} $, and $L \in $ S$ ^{\neq} $,
     such that the GPFA $ \mathcal{G}=(S,\Upsilon,\{\mathsf{A}_{\gamma \in
     \Upsilon} \},\mathsf{v}_{0},\mathsf{f}) $
     recognizes $ L $ with one-sided cutpoint $ 0 $. It is easily
verified that
     \[ \mathcal{G}^{'}=(S,\Sigma,\{ \mathsf{A}_{\sigma \in
\Sigma}^{'} \},\mathsf{v}_{0},\mathsf{f}), \]
      where
      \[  \mathsf{A}_{\sigma \in \Sigma}^{'} = \left\lbrace
              \begin{array}{lll}

\mathsf{A}_{u_{1}}\cdots\mathsf{A}_{u_{|h(\sigma)|}} & , &
h(\sigma)=u_{1} \cdots u_{|h(\sigma)|}  \neq \varepsilon \\
                      I & , & h(\sigma)=\varepsilon
              \end{array}
      \right.,\]
     recognizes $ h^{-1}(L) $
     with one-sided cutpoint $ 0 $. The same setup can be extended
to any language in S$ ^{=} $.
\end{proof}
~~
\begin{theorem}
      \label{theorem:close-reversal}
    S$ ^{\neq} $ and S$ ^{=} $ are closed under reversal.
\end{theorem}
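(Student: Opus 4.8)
The plan is to use the product form $f_{\mathcal{G}}(w)=\mathsf{v}_{0}\mathsf{A}_{w_{1}}\cdots\mathsf{A}_{w_{|w|}}\mathsf{f}$ of the GPFA acceptance function together with the observation that transposing a $1\times 1$ product of matrices reverses the order of its factors. Working with GPFA's rather than PFA's lets me sidestep the end-markers, so I will lean on Lemma~\ref{lemma:Sneq-GPFAonesidedcutpoint0} for S$^{\neq}$ and on Fact~\ref{fact:GPFA-PFA} for S$^{=}$.

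First I would treat S$^{\neq}$. Given $L\in$ S$^{\neq}$, Lemma~\ref{lemma:Sneq-GPFAonesidedcutpoint0} supplies a GPFA $\mathcal{G}=(S,\Sigma,\{\mathsf{A}_{\sigma\in\Sigma}\},\mathsf{v}_{0},\mathsf{f})$ recognizing $L$ with one-sided cutpoint $0$. Define the reversed machine
\[ \mathcal{G}^{'}=\left(S,\Sigma,\{\mathsf{A}_{\sigma}^{\mathtt{T}}\},\mathsf{f}^{\mathtt{T}},\mathsf{v}_{0}^{\mathtt{T}}\right), \]
whose initial row vector is $\mathsf{f}^{\mathtt{T}}$ and whose final column vector is $\mathsf{v}_{0}^{\mathtt{T}}$. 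Since $f_{\mathcal{G}^{'}}(w)$ is a scalar it equals its own transpose, and $(XY)^{\mathtt{T}}=Y^{\mathtt{T}}X^{\mathtt{T}}$ gives
\[ f_{\mathcal{G}^{'}}(w)=\mathsf{f}^{\mathtt{T}}\mathsf{A}_{w_{1}}^{\mathtt{T}}\cdots\mathsf{A}_{w_{|w|}}^{\mathtt{T}}\mathsf{v}_{0}^{\mathtt{T}}=\mathsf{v}_{0}\mathsf{A}_{w_{|w|}}\cdots\mathsf{A}_{w_{1}}\mathsf{f}=f_{\mathcal{G}}(w^{r}). \]
As $f_{\mathcal{G}}$ is everywhere nonnegative, so is $f_{\mathcal{G}^{'}}$, and $f_{\mathcal{G}^{'}}(w)>0$ precisely when $w^{r}\in L$, i.e. when $w\in L^{r}$. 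Hence $\mathcal{G}^{'}$ recognizes $L^{r}$ with one-sided cutpoint $0$, and a second application of Lemma~\ref{lemma:Sneq-GPFAonesidedcutpoint0} yields $L^{r}\in$ S$^{\neq}$.

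For S$^{=}$ I would reuse the identical transpose construction. If $L=\mathbb{L}(\mathcal{P},=\lambda)$ for a PFA $\mathcal{P}$, I first fold $\mathsf{A}_{\cent}$ into the initial vector and $\mathsf{A}_{\dollar}$ into the final vector to obtain a GPFA $\mathcal{G}$ with $f_{\mathcal{G}}=f_{\mathcal{P}}$ and $L=\mathbb{L}(\mathcal{G},=\lambda)$. The reversed GPFA $\mathcal{G}^{'}$ again satisfies $f_{\mathcal{G}^{'}}(w)=f_{\mathcal{G}}(w^{r})$, so $\mathbb{L}(\mathcal{G}^{'},=\lambda)=\{w\mid w^{r}\in L\}=L^{r}$. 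Fact~\ref{fact:GPFA-PFA} then turns $\mathcal{G}^{'}$ into a PFA recognizing $L^{r}$ under cutpoint equality with some $\lambda^{'}\in(0,1)$, so $L^{r}\in$ S$^{=}$.

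I do not expect a genuine obstacle: everything reduces to the elementary identity that a scalar-valued matrix product equals its reversed transpose, which is exactly the computation a GPFA performs on the reversed input. The only point demanding a little care is the end-markers in the PFA-based definition of S$^{=}$; absorbing $\mathsf{A}_{\cent}$ and $\mathsf{A}_{\dollar}$ into the boundary vectors before transposing avoids any awkwardness with reversing $\cent w\dollar$ itself.
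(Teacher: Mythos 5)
Your proposal is correct and follows essentially the same route as the paper: the paper's proof is exactly the transpose construction $\mathcal{G}^{'}=(S,\Sigma,\{\mathsf{A}_{\sigma}^{\mathtt{T}}\},\mathsf{f}^{\mathtt{T}},\mathsf{v}_{0}^{\mathtt{T}})$ applied to a GPFA recognizing $L$ with one-sided cutpoint $0$, with the remark that ``the same setup can be extended to any language in S$^{=}$.'' Your write-up merely fills in the details the paper leaves implicit (the scalar-transpose identity, the absorption of the end-marker matrices into the boundary vectors, and the return trip to a PFA via Fact~\ref{fact:GPFA-PFA} for the S$^{=}$ case), all of which are sound.
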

\begin{proof}
    We use the same idea as \cite{Tu69}.
    If $ L \in $  S$ ^{\neq} $, then there exists a GPFA
    $ \mathcal{G}=(S,\Sigma,\{\mathsf{A}_{\sigma \in \Sigma}
\},\mathsf{v}_{0},\mathsf{f}) $
    such that $ L $ is recognized by $ \mathcal{G} $ with one-sided
cutpoint 0. It is easily seen that
    $ \mathcal{G}^{'}=(S,\Sigma,\{\mathsf{A}_{\sigma \in
\Sigma}^{\mathtt{T}} \},
    \mathsf{f}^{\mathtt{T}},\mathsf{v}_{0}^{\mathtt{T}}) $
recognizes the reverse of $ L $
    with one-sided cutpoint $ 0 $. The same setup can be extended to
any language in S$ ^{=} $.
\end{proof}
~~
\begin{theorem}
      \label{theorem:close-word-quotient}
     S$ ^{\neq} $ and S$ ^{=} $  are closed under word quotient.
\end{theorem}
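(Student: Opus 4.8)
The plan is to exploit the GPFA characterization of S$^{\neq}$ supplied by Lemma \ref{lemma:Sneq-GPFAonesidedcutpoint0}, together with the observation that quotienting by a \emph{fixed} word $x = x_1 \cdots x_{|x|}$ amounts to nothing more than absorbing the transition matrices of $x$ into one of the boundary vectors of a GPFA. Recall that for a GPFA $\mathcal{G}=(S,\Sigma,\{\mathsf{A}_{\sigma\in\Sigma}\},\mathsf{v}_0,\mathsf{f})$ one has $f_{\mathcal{G}}(w)=\mathsf{v}_0\mathsf{A}_{w_1}\cdots\mathsf{A}_{w_{|w|}}\mathsf{f}$ by Equation \ref{equation:gpfa:f_G_w}, so prepending $x$ to the input is the same as replacing $\mathsf{v}_0$ by $\mathsf{v}_0\mathsf{A}_{x_1}\cdots\mathsf{A}_{x_{|x|}}$, while appending $x$ is the same as replacing $\mathsf{f}$ by $\mathsf{A}_{x_1}\cdots\mathsf{A}_{x_{|x|}}\mathsf{f}$.

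For the S$^{\neq}$ case, I would let $L\in$ S$^{\neq}$ and take a GPFA $\mathcal{G}$ recognizing $L$ with one-sided cutpoint $0$ (Lemma \ref{lemma:Sneq-GPFAonesidedcutpoint0}). To handle the left quotient $\{w\mid xw\in L\}$, I would form the GPFA $\mathcal{G}_{\ell}=(S,\Sigma,\{\mathsf{A}_{\sigma\in\Sigma}\},\mathsf{v}_0\mathsf{A}_{x_1}\cdots\mathsf{A}_{x_{|x|}},\mathsf{f})$. Since $f_{\mathcal{G}_{\ell}}(w)=f_{\mathcal{G}}(xw)$ holds identically, $\mathcal{G}_{\ell}$ still satisfies $\mathbb{L}(\mathcal{G}_{\ell},<0)=\emptyset$ and yields $\mathbb{L}(\mathcal{G}_{\ell},>0)=\{w\mid xw\in L\}$, so this quotient lies in S$^{\neq}$. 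Symmetrically, for the right quotient $\{w\mid wx\in L\}$ I would take $\mathsf{f}'=\mathsf{A}_{x_1}\cdots\mathsf{A}_{x_{|x|}}\mathsf{f}$ and form $\mathcal{G}_{r}=(S,\Sigma,\{\mathsf{A}_{\sigma\in\Sigma}\},\mathsf{v}_0,\mathsf{f}')$, giving $f_{\mathcal{G}_{r}}(w)=f_{\mathcal{G}}(wx)$ and hence recognition of the right quotient with one-sided cutpoint $0$.

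For the S$^{=}$ case, the two constructions carry over verbatim. Here I would start from a GPFA $\mathcal{G}$ and cutpoint $\lambda$ with $L=\mathbb{L}(\mathcal{G},=\lambda)$, apply the identical modifications to $\mathsf{v}_0$ or $\mathsf{f}$, and simply retain the cutpoint $\lambda$; because $f_{\mathcal{G}_{\ell}}(w)=f_{\mathcal{G}}(xw)$ and $f_{\mathcal{G}_{r}}(w)=f_{\mathcal{G}}(wx)$, we obtain $\mathbb{L}(\mathcal{G}_{\ell},=\lambda)=\{w\mid xw\in L\}$ and $\mathbb{L}(\mathcal{G}_{r},=\lambda)=\{w\mid wx\in L\}$. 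I expect essentially no obstacle: the result is immediate from the multiplicative form of $f_{\mathcal{G}}$, and the only care required is to treat both the left and the right quotient and to note that the S$^{=}$ argument does not even need to route through one-sided cutpoint $0$, since it rests directly on the preservation of the acceptance value $f_{\mathcal{G}}$ on the extended string.
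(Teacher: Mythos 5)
Your proposal is correct and follows essentially the same route as the paper: absorbing the transition matrices $\mathsf{A}_{x_1}\cdots\mathsf{A}_{x_{|x|}}$ of the fixed word into $\mathsf{v}_0$ for the left quotient and into $\mathsf{f}$ for the right quotient, with the S$^{=}$ case handled by the identical construction (the paper states this as ``the same setup can be extended,'' and your appeal to Fact \ref{fact:GPFA-PFA} to justify the GPFA-with-cutpoint-$\lambda$ characterization of S$^{=}$ is a legitimate way to make that precise). No gap.
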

\begin{proof}
     If $ L \in $ S$ ^{\neq} $, then there exists a GPFA
     $ \mathcal{G}=(S,\Sigma,\{\mathsf{A}_{\sigma \in \Sigma}
\},\mathsf{v}_{0},\mathsf{f}) $
     such that $ L $ is recognized by $ \mathcal{G} $
with one-sided
     cutpoint $ 0 $. For any given $ w \in \Sigma^{*} $,
     \begin{enumerate}
             \item GPFA $
\mathcal{G}_{1}=(S,\Sigma,\{\mathsf{A}_{\sigma \in \Sigma} \},
             \mathsf{v}_{0}\mathsf{A}_{w_{1}}\cdots\mathsf{A}_{w_{|w|}},\mathsf{f})
$
                     recognizes the language $ \{y \mid wy \in L \} $
with one-sided cutpoint $ 0 $;
             \item GPFA $
\mathcal{G}_{2}=(S,\Sigma,\{\mathsf{A}_{\sigma \in \Sigma} \},
             \mathsf{v}_{0},\mathsf{A}_{w_{1}}\cdots\mathsf{A}_{w_{|w|}}\mathsf{f})
$
                     recognizes the language $ \{z \mid zw \in L \} $
with one-sided cutpoint $ 0 $.
     \end{enumerate}
     The same setup can be extended to any language in S$ ^{=} $.
\end{proof}
~~
\begin{theorem}
      \label{theorem:close-difference}
     S$ ^{\neq} $ and S$ ^{=} $ are not closed under difference.
\end{theorem}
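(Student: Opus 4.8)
The plan is to reduce the desired non-closure to the already-established fact that neither class is closed under complementation (Fact \ref{fact:closure-properties-Seq-Sneq}, item 2). The key observation is that set difference with the full language as its first argument is exactly complementation: for every $L \subseteq \Sigma^{*}$ we have $\Sigma^{*} \setminus L = \overline{L}$. So if difference preserved membership in S$^{\neq}$, then complementation would too, and likewise for S$^{=}$.

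To make this precise I would first note that $\Sigma^{*}$ itself lies in both S$^{\neq}$ and S$^{=}$, since $\Sigma^{*}$ is regular and every regular language belongs to both classes by Fact \ref{fact:RLpropersubsetSLneqSLeq}. This supplies the fixed minuend needed for the reduction. I would then argue by contradiction: assuming S$^{\neq}$ closed under difference, take $L_{1}=\Sigma^{*} \in $ S$^{\neq}$ and let $L_{2}$ be an arbitrary member of S$^{\neq}$; closure would force $L_{1} \setminus L_{2} = \overline{L_{2}} \in $ S$^{\neq}$, so S$^{\neq}$ would be closed under complementation, contradicting Fact \ref{fact:closure-properties-Seq-Sneq}. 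The identical argument, now regarding $\Sigma^{*}$ as a member of S$^{=}$ and letting $L_{2}$ range over S$^{=}$, shows that S$^{=}$ is not closed under difference either.

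I do not expect any real technical obstacle here; the entire content is the two trivial verifications that $\Sigma^{*}$ sits in both classes and that $\Sigma^{*}\setminus L=\overline{L}$. If one preferred an explicit witness pair $(L_{1},L_{2})$ instead of the abstract reduction, the only extra work would be to name a concrete language $L_{2}\in$ S$^{\neq}$ (respectively S$^{=}$) whose complement escapes the class; but since the cited non-closure under complementation already guarantees the existence of such a witness, the reduction above is the cleanest route and renders any explicit construction unnecessary.
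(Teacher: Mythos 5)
Your proof is correct and takes essentially the same route as the paper: both arguments reduce non-closure under difference to non-closure under complementation (Fact \ref{fact:closure-properties-Seq-Sneq}) via the identity $\Sigma^{*} \setminus L = \overline{L}$, using that $\Sigma^{*}$ lies in both S$^{\neq}$ and S$^{=}$. No gaps to report.
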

\begin{proof}
    There exists an $ L \in $ S$ ^{\neq} $ such that $ \overline{L}
\notin $ S$ ^{\neq} $ .
    \begin{enumerate}
            \item $ \Sigma^{*} $ and $ L $ are in S$ ^{\neq} $,
             but $ \Sigma^{*} \setminus L = \overline{L} $ is not;
            \item $ \Sigma^{*} $ and $ \overline{L} $ are in S$ ^{=} $,
             but $ \Sigma^{*} \setminus \overline{L} = L $ is not.
    \end{enumerate}
\end{proof}
~~
\begin{theorem}
      \label{theorem:close-difference-with-regular-language}
    S$ ^{\neq} $ and S$ ^{=} $ are closed under difference with a
regular language.
\end{theorem}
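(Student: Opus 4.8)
The plan is to reduce difference with a regular language to the already-established closure under intersection. For a language $L$ in the class of interest and a regular language $R$, I would first observe the set-theoretic identity $L \setminus R = L \cap \overline{R}$. Since the class of regular languages is closed under complementation, $\overline{R}$ is itself regular. This is the one fact from classical automata theory I would invoke, and it is the whole reason the statement restricts the subtrahend to a regular language: complementing it keeps us inside the regular class.

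Next I would note that, by Fact \ref{fact:RLpropersubsetSLneqSLeq}, every regular language (in particular $\overline{R}$) belongs to both S$^{\neq}$ and S$^{=}$. Thus, regardless of which of the two classes contains $L$, the second operand $\overline{R}$ of the intersection also lies in that same class. At this point the problem has been reduced entirely to intersecting two members of the same class.

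Finally I would apply part (1) of Fact \ref{fact:closure-properties-Seq-Sneq}, which states that both S$^{\neq}$ and S$^{=}$ are closed under intersection. Applying it to $L$ and $\overline{R}$ yields that $L \cap \overline{R} = L \setminus R$ is in the same class as $L$, completing the argument for both S$^{\neq}$ and S$^{=}$ simultaneously.

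There is essentially no hard step here: the result is a direct corollary of three already-available facts (closure of the regular languages under complementation, the containment of the regular languages in both S$^{\neq}$ and S$^{=}$, and closure of both classes under intersection). The only point worth flagging is why the analogous statement for $R \setminus L$ would \emph{not} go through by this route: that would require $\overline{L}$ to be in the class, which fails because neither S$^{\neq}$ nor S$^{=}$ is closed under complementation (Fact \ref{fact:closure-properties-Seq-Sneq}, part (2)). Hence the theorem genuinely concerns subtracting a regular language from a member of the class, and the asymmetry of difference is what makes the regularity hypothesis on the subtrahend indispensable.
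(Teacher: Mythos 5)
Your proposal is correct and matches the paper's own (one-line) proof exactly: both reduce $L \setminus R$ to $L \cap \overline{R}$, using closure of regular languages under complementation, the containment of regular languages in S$^{\neq}$ and S$^{=}$ (Fact \ref{fact:RLpropersubsetSLneqSLeq}), and closure of both classes under intersection (Fact \ref{fact:closure-properties-Seq-Sneq}). Your additional remark on why $R \setminus L$ fails is a nice observation, but the core argument is the same as the paper's.
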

\begin{proof}
    Regular languages are closed under complementation, and S$ ^{\neq} $
and S$ ^{=} $ are closed under intersection.
\end{proof}

For completeness, we list below the following easy facts about MCL and
NMCL:
\begin{enumerate}
      \item NMCL is closed under both union and intersection.
      \item Neither MCL nor NMCL is closed under
complementation \cite{BC01B}.
      \item Both MCL and NMCL are closed under inverse homomorphism
\cite{MC00}.
      \item Both MCL and NMCL are closed under word quotient
\cite{BP02}.
\end{enumerate}

\section{Concluding Remarks} \label{section:ConcludingRemarks}
In this paper, we gave a full characterization of the class of
languages recognized by all QFA models which are
at least as powerful as the Kondacs-Watrous QFA with cutpoint 0. This
is the only known case where the language recognition power of one-way
QFA's has been proven to be strictly
greater than that of their probabilistic counterparts\footnote{From a
``pedagogical" point of view, this seems to us to be
one of the simplest setups in which a quantum computational model can be
demonstrated to outperform the corresponding probabilistic
model.}. The superiority of QFA's over PFA's with regard
to space efficiency in this setting was demonstrated.
We also examined the limitations of recognition with one-sided error
for these models.
Several new closure properties of the related classes $ S^{\neq} $ and
$ S^{=} $ were proven.

The relationship between nondeterministic quantum complexity classes
and counting classes has been studied in detail.
It is known that NQP = co-C$ _{=} $P \cite{YY99}.
More relevantly for our work, Watrous \cite{Wa99} has shown that
NQSPACE($ s $) = co-C$ _{=} $SPACE($ s $) for
$ s = \Omega(\log(n)) $.
Note that the
subset of S$ ^{\neq} $ defined using PFA's that only contain efficiently
computable transition probabilities and cutpoint $ \frac{1}{2} $
equals co-C$ _{\mbox{=}} $SPACE(1),
so we have proven\footnote{All our proofs stand 
when the transition probabilities and amplitudes are restricted to be
efficiently computable numbers, as mentioned in \cite{BV97}.
We can in fact prove that the collection of languages recognized by
the most general model of NQFA's \cite{Pa00} is precisely the class
S$ ^{\neq}_{rat} $ (see Definition \ref{definition:SLeq-rat}) when all
the amplitudes of the NQFA are
restricted to be rational numbers.} that $ \mbox{co-C} _{=} $SPACE(1) $
\subseteq $ NQSPACE(1),
and whether the inclusion is strict or not depends on
whether a two-way head would increase the computational power of a
NQFA\footnote{For any two-way PFA $ \mathcal{M} $ and cutpoint $
\lambda_{1} \in [0,1) $,
there exist a one-way PFA $ \mathcal{P} $ and a cutpoint $ \lambda_{2}
\in [0,1) $ such that
$ (\mathcal{M},\lambda_{1}) \equiv (\mathcal{P},\lambda_{2}) $
\cite{Ka89}, whereas two-way QFA's are more powerful than one-way
QFA's in the general unbounded error setting \cite{YS09D}.}.

The superiority of NQFA's over classical NFA's has ramifications about
relationships among classical
and quantum nondeterministic space complexity classes for all
sublogarithmic bounds.
Although the following result follows from a combination of previously
known facts,
we have not seen it stated anywhere:
\\
\begin{theorem}
       NSPACE($ s $) $ \subsetneq $ NQSPACE($ s $) for $ s=o(\log(n)) $.
\end{theorem}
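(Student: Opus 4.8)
The plan is to establish the two inclusions NSPACE($s$) $\subseteq$ NQSPACE($s$) and NSPACE($s$) $\neq$ NQSPACE($s$) separately, the latter only for $s=o(\log(n))$. The whole argument is powered by Theorem~\ref{theorem:maintheorem} together with Facts~\ref{fact:RLpropersubsetNQL} and~\ref{fact:MCLpropersubsetQL}, which guarantee that nondeterministic QFA's---machines using only \emph{constant} work space---recognize nonregular languages; this is the source of the extra power over classical NSPACE.

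For the forward inclusion I would first note the folklore identity that NSPACE($s$) is exactly the class recognized by probabilistic $s$-space machines with cutpoint $0$: assigning each nondeterministic choice a positive probability makes the total acceptance probability positive if and only if some accepting computation path exists. It then suffices to simulate such a one-sided probabilistic space-$s$ machine by a quantum space-$s$ machine accepting with cutpoint $0$. This is the space-bounded version of the constant-space construction carried out in the proof of Lemma~\ref{lemma:S-QL_0} and of the PFA-simulation technique of~\cite{YS09D}: each probabilistic transition is replaced by a unitary transition on the same work tape, padded by a constant number of auxiliary control states (and, if needed, one extra work cell) so that the final accepting amplitude is a fixed nonzero scalar multiple of the classical acceptance probability. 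Members therefore keep a nonzero accepting amplitude and nonmembers amplitude $0$, and the overhead is at most a constant factor in space, giving NSPACE($s$) $\subseteq$ NQSPACE($s$) for every $s$.

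For strictness I would exhibit one fixed witness $L_{neq}=\{w\in\{a,b\}^{*}\mid |w|_{a}\neq|w|_{b}\}$. By~\cite{BC01B}, $L_{neq}\in$ NMCL, hence $L_{neq}\in$ NQL by Fact~\ref{fact:MCLpropersubsetQL}, and since the recognizing machine is an MCQFA it uses constant work space; thus $L_{neq}\in$ NQSPACE($O(1)$) $\subseteq$ NQSPACE($s$) for every $s=\Omega(1)$. On the classical side I would invoke a sublogarithmic lower bound stating that a two-way nondeterministic Turing machine recognizing the counting language $L_{neq}$ must use $\Omega(\log(n))$ work space, so that $L_{neq}\notin$ NSPACE($s$) whenever $s=o(\log(n))$. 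Combining the two memberships yields $L_{neq}\in$ NQSPACE($s$) $\setminus$ NSPACE($s$), and the inclusion is strict.

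The hard part is the classical lower bound invoked in the last step, and it is essential that it reach all the way up to $\log(n)$. The elementary gap theorem only collapses NSPACE($s$) to the regular languages for $s$ on the order of $o(\log\log(n))$, which by itself would separate the two classes merely below $\log\log(n)$. Pushing the separation up to $o(\log(n))$ requires a genuine \emph{nondeterministic} $\Omega(\log(n))$ bound for a nonregular NQL language, and such bounds are considerably more delicate than their deterministic counterparts: a nondeterministic crossing ``behavior'' at the $a/b$ boundary of an input $a^{n}b^{m}$ is a set of configurations rather than a single one, so the naive pigeonhole count of interface behaviors no longer suffices. I would therefore either take the required bound from the sublogarithmic-space literature or prove it by a Kolmogorov-complexity/crossing-set argument, checking carefully that nondeterminism cannot compare the two symbol counts in $o(\log(n))$ space.
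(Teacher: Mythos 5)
Your proposal is correct and follows essentially the same route as the paper's sketch: the inclusion comes from quantum simulation of one-sided (cutpoint-$0$) probabilistic space-bounded machines, and strictness comes from the fixed witness $L_{neq}$, recognized in constant space by the Bertoni--Carpentieri NQFA but not in classical sublogarithmic nondeterministic space. The classical lower bound you correctly flag as the hard step is exactly the known result the paper invokes --- no nonregular deterministic context-free language lies in NSPACE($s$) for $s=o(\log(n))$ \cite{AGM92} --- which applies because $L_{neq}$ is a nonregular DCFL, so no new crossing-sequence argument is needed.
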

\begin{proof} (Sketch.)
       Quantum Turing machines can simulate probabilistic Turing machines
easily for any common space bound \cite{Wa09}.
       There exists a NQFA with efficiently computable amplitudes
       (i.e. a constant-space nondeterministic quantum Turing machine) which
recognizes the language
       $ L_{neq}=\{w\in\{a,b\}^{*} \mid |w|_{a} \neq |w|_{b}\} $ \cite{BC01B}.
       It is easily seen that $ L_{neq} $ is a nonregular deterministic
context-free language (DCFL).
       It is known that no nonregular DCFL is in NSPACE($ s $) for $
s=o(\log(n)) $ \cite{AGM92}.
\end{proof}
For space bounds $ s \in \Omega(\log(n)) $,
       all we know in this regard is the trivial fact that NSPACE($ s $) $
\subseteq $ NQSPACE($ s $) \cite{Wa99}.

The succinct QFA models alluded to in Section 3.3 form the basis of a
demonstration \cite{YS09F} of the fact that two-way QFA's can have a
similar state complexity advantage over both their one-way versions,
and two-way classical nondeterministic automata.

One important QFA variant that was not considered in this paper is the
Latvian QFA \cite{ABGKMT06}, which is a generalization of the MCQFA not thought
to be as powerful as the KWQFA. An examination of the corresponding
classes for this model would be interesting.

Some other open questions related to this work are listed below.
\\
\begin{openproblem}
      Is MCL closed under union? Intersection?
\end{openproblem}
~~
\begin{openproblem}
      Do NMCL and MCL coincide?
\end{openproblem}
~~
\begin{openproblem}
      Does S$ ^{\neq} \cap $ S$ ^{=} $ contain a nonregular language?
\end{openproblem}
~~
\begin{openproblem}
      Is S$ ^{\neq} $ countable or uncountable?
\end{openproblem}
~~
\begin{openproblem}
      Is S$ ^{>} $ closed under complementation? (page 158 of \cite{Pa71})
\end{openproblem}
~~
\begin{openproblem}
      Is S$ ^{=} $ a subset of S$ ^{>} $? (page 173 of \cite{Pa71})
\end{openproblem}
~~
\begin{openproblem}
Can NQFA's with a two-way tape head recognize more
languages than the one-way model discussed here?
\end{openproblem}
\section*{Acknowledgments} \label{section:Acknowledgment}
We are grateful to Andris Ambainis, John Watrous, and Flavio
D'Alessandro for their helpful
comments on the subject matter of this paper. We also thank
R\={u}si\c{n}\v{s} Freivalds for kindly providing us a copy of
reference \cite{LSF09}.

\bibliographystyle{plain}
\bibliography{YakaryilmazSay}

\end{document}